\documentclass[11pt,letterpaper]{article}

\usepackage{fullpage,amsmath,amssymb,amsthm}
\usepackage{xspace}
\usepackage{multirow}
\usepackage[hypertex]{hyperref}

\usepackage{amsmath}

\usepackage{comment}
\usepackage{color}
\usepackage{paralist}

\usepackage{graphicx}
\usepackage{ifpdf}

\newtheorem{theorem}{Theorem}[section]
\newtheorem{lemma}[theorem]{Lemma}

\theoremstyle{definition}

{\bfseries}{\rmfamily}
{\bfseries}{\rmfamily}

\newcommand\remove[1]{}

\newcounter{note}[section]

\def\polylog{\operatorname{polylog}}

\usepackage{url}

\makeatletter
\def\url@leostyle{%
  \@ifundefined{selectfont}{\def\UrlFont{\sf}}{\def\UrlFont{\small\ttfamily}}}
\makeatother
\urlstyle{leo}


\begin{document}

\title{iBGP and Constrained Connectivity}
\author{Michael Dinitz\\Weizmann Institute of Science \and Gordon Wilfong\\Bell Labs}

\begin{titlepage}
\maketitle
\thispagestyle{empty}

\begin{abstract}
We initiate the theoretical study of the problem of minimizing the size of an iBGP overlay in an Autonomous System (AS) in the Internet subject to a natural notion of correctness derived from the standard ``hot-potato'' routing rules.  For both natural versions of the problem (where we measure the size of an overlay by either the number of edges or the maximum degree) we prove that it is NP-hard to approximate to a factor better than $\Omega(\log n)$ and provide approximation algorithms with ratio $\tilde{O}(\sqrt{n})$.  This algorithm is based on a natural LP relaxation and randomized rounding technique inspired by the recent work on approximating directed spanners by Bhattacharyya et al.~[SODA 2009], Dinitz and Krauthgamer [STOC 2011], and Berman et al.~[ICALP 2011].  In addition to this theoretical algorithm, we give a slightly worse $\tilde{O}(n^{2/3})$-approximation based on primal-dual techniques that has the virtue of being both fast (in theory and in practice) and good in practice, which we show via simulations on the actual topologies of five large Autonomous Systems.

The main technique we use is a reduction to a new connectivity-based network design problem that we call \emph{Constrained Connectivity}.  In this problem we are given a graph $G=(V,E)$, and for every pair of vertices $u,v \in V$ we are given a set $S(u,v) \subseteq V$ called the \emph{safe set} of the pair.  The goal is to find the smallest subgraph $H=(V,F)$ of $G$ in which every pair of vertices $u,v$ is connected by a path contained in $S(u,v)$.  We show that the iBGP problem can be reduced to the special case of Constrained Connectivity where $G = K_n$ and safe sets are defined geometrically based on the IGP distances in the AS.  Indeed, our algorithmic upper bounds generalize to Constrained Connectivity on $K_n$, and our $\Omega(\log n)$-lower bound for the special case of iBGP implies hardness for the general case.  Furthermore, we believe that Constrained Connectivity is an interesting problem in its own right, so provide stronger hardness results ($2^{\log^{1-\epsilon} n}$-hardness of approximation based on reductions from Label Cover) and integrality gaps ($n^{1/3 - \epsilon}$ based on random instances of Unique Games) for the general case.  On the positive side, we show that Constrained Connectivity turns out to be much simpler for some interesting special cases other than iBGP: when safe sets are symmetric and hierarchical, we give a polynomial time algorithm that computes an optimal solution.
\end{abstract}

\end{titlepage}

\section{Introduction}

The Internet consists of a number of interconnected subnetworks
called Autonomous Systems (ASes).
As described in~\cite{basu}, the way that routes to a given
destination are chosen by routers within an AS can be viewed as follows.
Routers have a ranking of routes based on economic considerations of the AS.
Without loss of generality, in what follows
we assume that all routes are equally ranked.
Thus routers must use some tie-breaking scheme in order to choose a route
from amongst the equally ranked routes.
Tie-breaking is based on traffic engineering considerations and
in particular, the goal is to get packets out of the AS as quickly
as possible (called {\em hot-potato routing}).

An AS attempts to achieve hot-potato routing using iBGP, the version of
the interdomain routing protocol BGP~\cite{stewart:99} used by routers within
a subnetwork to announce routes to each other that have been
learned from outside the subnetwork.  An iBGP configuration is defined by a \emph{signaling graph}, which is supposed to enforce hot-potato routing.  Unfortunately, while iBGP has many nice properties that make it useful in practice, constructing a good signaling graph turns out to be a computationally difficult problem.  For example, it is not clear \emph{a priori} that it is even possible to check in polynomial time that a signaling graph is correct, i.e.~it is not obvious that the problem is even in NP!  In this paper we study the problem of constructing small and correct signaling graphs, as well as a natural extension to a more general problem that we call \emph{Constrained Connectivity}.

\subsection{iBGP}
At a high level, iBGP works as follows.
The routers that initially know of a route are called {\em border routers}.
(These initial routes are those learned by the border routers from
routers outside the AS.)
The border router that initially knows of a route
is said to be the {\em egress router} of that route.
Each border router knows of at most one route.
Thus an initial set of routes $F$ defines a set of egress routers $X_F$ where
there is a one-to-one relationship between routes in $F$ and routers
in $X_F$.
The AS has an underlying physical network with edge weights (e.g., IGP
distances or OSPF weights).
The {\em distance} between two routers is then defined to be the
length of the shortest path (according to the edge weights) between them.
Given a set of routes, a router will rank highest the one whose
egress router is closest according to this definition of distance.
The {\em signaling graph} $H$ is an overlay network whose nodes represent
routers and whose edges represent the fact that
the two routers at its endpoints use iBGP to inform one another of
their current chosen route.
The endpoints of an edge in $H$ are called {\em iBGP neighbors}.
A path in $H$ is called a {\em signaling path}.  Note that iBGP neighbors are not necessarily neighbors in the underlying graph, since $H$ is an overlay and can include any possible edge.

Finally, iBGP can be thought of as working as follows:
in an asynchronous fashion, each router considers all the latest routes
it has heard about from its iBGP neighbors,
chooses the one with the closest egress router and tells its
iBGP neighbors about the route it has chosen.
This continues until no router learns of a route whose egress router is
closer than that of its currently chosen route.
When this process ends the route chosen by router $r$ is denoted by $R(r)$.
Let $P(r)$ be
the shortest path from $r$ to $E(r)$, the egress router of $R(r)$.
When a packet arrives at $r$, it sends it to the next router $r'$ on $P(r)$,
$r'$ in turn sends the packet to the next router on $P(r')$ and so on.
Thus if $P(r')$ is not the subpath of $P(r)$ starting at $r'$ then the
packet will not get routed as $r$ expected.

A signaling graph $H$ has the \emph{complete visibility property} for a set of egress routers $X_F$ if each router $r$ hears about (and hence
chooses as $R(r)$) the route in $F$ whose egress router $E(r)$ is closest to
$r$ from amongst all routers in $X_F$.  It is easy to see that $H$ will achieve hot-potato routing for $X_F$ if and only if it has the complete visibility property for $X_F$.  So we say that a signaling graph is \emph{correct} if it has the complete visibility property for all possible $X_F$.

Clearly if $H$ is the complete graph then $H$ is correct.  Because of this, the default configuration of iBGP and the original standard was to maintain a complete graph, also called a full mesh~\cite{stewart:99}.
However the complete graph is not practical and so network managers have
adopted various configuration techniques to reduce the size of the signaling
graph~\cite{rfc:2796,rfc:3065}.
Unfortunately these methods do not guarantee correct signaling
graphs~\cite{basu,griffinwilfong:2002a}.
Thus our goal is to determine correct signaling graphs with fewer
edges than the complete graph.  Slightly more formally, two natural questions are to minimize the number of edges in the signaling
graph or to minimize the maximum number of iBGP neighbors for any router
while guaranteeing correctness.
We define {\sc iBGP-Sum} to be the problem of finding
a correct signaling graph with the fewest edges, and similarly we define {\sc iBGP-Degree} to be the problem of finding
a correct signaling graph with the minimum possible maximum degree.

\subsection{Constrained Connectivity}
All we know \emph{a priori} about the complexity of {\sc iBGP-Sum} and {\sc iBGP-Degree} is that they are in $\Sigma_2$ (the second existential level of the polynomial hierarchy), since the statement of correctness is that ``there exists a small graph $H$ such that for all possible subsets  $X_F$ each router hears about the route with the closest egress router".  In particular, it is not obvious that these problems are in NP, i.e.~that there is a short certificate that a signaling graph is correct.  However, it turns out that these problems are actually in NP (see Section~\ref{sec:iBGP_CC}), and the proof of this fact naturally gives rise to a more general network design problem that we call \emph{Constrained Connectivity}.  In this problem we are given a graph $G=(V,E)$ and for each pair of nodes $(u,v)\in V\times V$ we are given a set $S(u,v)\subseteq V$.
Each such $S(u,v)$ is called a {\em safe set} and it is assumed that
$u,v \in S(u,v)$.  We say that a subgraph $H=(V,F)$
of $G$ is {\em safely connected} if for each pair of nodes $(u,v)$ there
is a path in $H$ from $u$ to $v$ where each node in the path is in $S(u,v)$.  As with iBGP, we are interested in two optimization versions of this problem:
\begin{enumerate}
\item {\sc Constrained Connectivity-Sum}: compute a safely
connected subgraph $H$ with the minimum number of edges, and
\item {\sc Constrained Connectivity-Degree}: compute a safely
connected subgraph $H$ that minimizes the maximum degree over all nodes.
\end{enumerate}

It turns out (see Theorem~\ref{thm:iBGP_safe_sets}) that the iBGP problems can be viewed as Constrained Connectivity problems with $G = K_n$ and safe sets defined in a particular geometric way.  While the motivation for studying Constrained Connectivity comes from iBGP, we believe that it is an interesting problem in its own right.  It is an extremely natural and general network design problem that, somewhat surprisingly, seems to have not been considered before.  While we only provide negative results for the general problem (hardness of approximation and integrality gaps), a better understanding of Constrained Connectivity might lead to a better understanding of other network design problems, both explicitly via reductions and implicitly through techniques.  For example, many of the techniques used in this paper come from recent literature on directed spanners~\cite{BGJRW09,DK11,BBMRY11}, and given these similarities it is not unreasonable to think that insight into Constrained Connectivity might provide insight into directed spanners.

For a more direct example, there is a natural security application of Constrained Connectivity.  Suppose we have $n$ players who wish to communicate with each other
but they do not all trust one another with messages they send to others.
That is, when $u$ wishes to send a message to $v$ there is a subset
$S(u,v)$ of players that it trusts to see the messages that it sends to $v$.
Of course, if for every pair of players there were direct communication
channels between the two players, then there would be no problem.
But suppose there is a cost to protect communication channels from
eavesdropping or other such attacks.
Then a goal would be to have a network of fewer than $O(n^2)$
communication channels that would still allow a route from
each $u$ to each $v$ with the route completely contained within $S(u,v)$.
Thus this problem defines a {\sc Constrained Connectivity-Sum} problem.

\subsection{Summary of Main Results}

In Section \ref{sec:CC_Kn} we
give a polynomial approximation for the iBGP problems, by giving the same approximations for the more general problem of Constrained Connectivity on $K_n$.

\vspace{.1in}
{\bf Theorem~\ref{thm:approx}.}\hspace{.01in}
  There is an $\tilde{O}(\sqrt{n})$-approximation to the Constrained
  Connectivity problems on $K_n$.
\vspace{.1in}

{\bf Corollary.}\hspace{.01in}
  There is an $\tilde{O}(\sqrt{n})$-approximation to {\sc iBGP-Sum} and
  {\sc iBGP-Degree}.
\vspace{.1in}

To go along with these theoretical upper bounds, we design a different (but related) algorithm for {\sc Constrained Connectivity-Sum} on $K_n$ that provides a worse theoretical upper bound (a $\tilde{O}(n^{2/3})$-approximation) but is faster in both practice and theory, and show by simulation on five real AS topologies (Telstra, Sprint, NTT, TINET, and Level 3) that in practice it provides an extremely good approximation.  Details of these simulations are in Section~\ref{sec:Simulations}

To complement these upper bounds, in Section~\ref{sec:iBGP_hardness} we show that the iBGP problem is hard to approximate, even with the extra power afforded us by the geometry of the safe sets:
\vspace{.1in}

{\bf Theorems~\ref{thm:iBGP_deghard} and~\ref{thm:iBGP_sumhard}.}\hspace{.01in}
  It is NP-hard to approximate {\sc iBGP-Sum} or {\sc iBGP-Degree} to
  a factor better than $\Omega(\log n)$.
\vspace{.1in}

We then study the more general Constrained Connectivity problems, and in Section~\ref{sec:CC} we show that the fully general
constrained connectivity problems are hard to approximate:

\vspace{.1in}
{\bf Theorem~\ref{thm:ccsumhard}.}\hspace{.01in}
  {\sc The Constrained Connectivity-Sum} and \textsc{Constrained
    Connectivity-Degree} problems do not admit a $2^{\log^{1-\epsilon}
    n}$-approximation algorithm for any constant $\epsilon > 0$ unless
  $\text{NP} \subseteq \text{DTIME}(n^{\polylog(n)})$
\vspace{.1in}

This is basically the same inapproximability factor as for Label
Cover, and in fact our reduction is from a minimization version of
Label Cover known as {\sc Min-Rep}.  Moreover, we show that the
natural LP relaxation has a polynomial integrality gap of
$\Omega(n^{\frac13 - \epsilon})$.

Finally, in Section~\ref{sec:hierarchical}
 we consider some other special cases of Constrained
Connectivity that turn out to be easier.
In particular, we say that a collection of
safe sets is \emph{symmetric} if $S(x,y) = S(y,x)$ for all $x,y \in V$ and that it is \emph{hierarchical} if for all $x,y,z \in V$, if $z \in
S(x,y)$ then $S(x,z) \subseteq S(x,y)$ and $S(z,y) \subseteq S(x,y)$.  It turns out that all of our hardness results and integrality gaps also hold for symmetric instances, but adding the hierarchical property makes things easier:
\vspace{.1in}

{\bf Theorem~\ref{lem:hierarchy_main}.}\hspace{.01in}
  {\sc Constrained Connectivity-Sum} with symmetric and hierarchical safe sets can
  be solved optimally in polynomial time.

\subsection{Related Work}
Issues involving eBGP, the version of BGP that routers in different ASes use
to announce routes to one another, have recently received
significant attention from the theoretical computer science community,
especially stability and game-theoretic issues
(e.g., ~\cite{griffin:02,levin:08,fabrikant:08}).
However, not nearly as much work has
been done on problems related to iBGP which distributes routes
internally in an AS.
There has been some work on the problem of guaranteeing
hot-potato routing in any AS with a route reflector
architecture~\cite{rfc:2796}.
These earlier papers did not consider the issue of
finding small signaling graphs that achieved the hot-potato goal.
Instead they either provided sufficient conditions for correctness
relating the underlying physical network with the route
reflector configuration~\cite{griffinwilfong:2002a} or they showed
that by allowing some specific extra routes to be announced (rather
than just the one chosen route) they would guarantee a version of
hot-potato routing~\cite{basu}.
The first people to consider the problem of
designing small iBGP overlays subject to achieving hot-potato
correctness were Vutukuru et al.~\cite{vutukuru:06}, who used graph
graph partitioning schemes to give such configurations.  But while
they proved that their algorithm gave correct configurations, they
only gave simulated evidence that the configurations it produced were
small.  Buob et
al.~\cite{buob:08} considered the problem of designing small correct
solutions and gave a mathematical programming formulation, but
then simply solved the integer program user super-polynomial time
algorithms.

\section{Preliminaries}

\subsection{Relationship between iBGP and Constrained Connectivity} \label{sec:iBGP_CC}

We will now show that the iBGP problems are just special
cases of {\sc Constrained Connectivity-Sum} and
{\sc Constrained Connectivity-Degree}.  This will be a natural consequence of the proof that {\sc iBGP-Sum} and {\sc iBGP-Degree} are in NP.


  To see this we will need the following definitions.  We
will assume that there are no ties, i.e.\ all distances are distinct.
For two routers $x$ and $y$, let $D(x,y) = \{w : d(x,w) > d(x,y)\}$ be
the set of routers that are farther from $x$ than $y$ is.  Let $S(x,y)
= \{w : d(w,y) < d(w, D(x,y))\} \cup \{y\}$ be the set of
routers that are closer to $y$ than to any router not in the ball
around $x$ of radius $d(x,y)$.  We will refer to $S(x,y)$ as ``safe"
routers for the pair $(x,y)$.
A path from $x$ to $y$ in a signaling graph is said to be a
{\em safe signaling path} if it is contained in $S(x,y)$.
It turns out that these safe sets characterize correct signaling graphs:

\begin{theorem} \label{thm:iBGP_safe_sets} An iBGP signaling graph $H$
  is correct if and only if for every pair $(x,y) \in V \times V$
  there is a signaling path from $y$ to $x$ that uses only routers in
  $S(x,y)$.
\end{theorem}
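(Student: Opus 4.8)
\medskip
\noindent\textbf{Proof strategy.}
The plan is to prove the two implications separately, using two basic facts about the iBGP dynamics: the signaling graph $H$ is undirected, so a path from $y$ to $x$ is also a path from $x$ to $y$; and a border router always has its own route (egress at distance $0$) as its chosen route, so once any router has adopted the route of its closest egress it never again switches away from it.

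\emph{Safe paths imply correctness.} Assume that for every pair there is a safe signaling path, fix an arbitrary egress set $X_F$, and let $y \in X_F$ be the egress closest to a router $x$. I would take the given safe path $\pi = (y = p_0, p_1, \dots, p_m = x)$ with every $p_j \in S(x,y)$ and first observe that $y$ is in fact the egress of $X_F$ closest to \emph{every} $p_j$: for $z \in X_F$ with $z \neq y$ we have $d(x,z) > d(x,y)$ (all distances distinct, $y$ closest to $x$), hence $z \in D(x,y)$, hence $d(p_j,z) \geq d(p_j, D(x,y)) > d(p_j,y)$ because $p_j \in S(x,y)$. Now run the protocol: $p_0 = y$ permanently announces $y$'s route; once $p_1$ has processed that announcement, $y$'s route is the best among $p_1$'s options, so $p_1$ adopts it and never switches away; inductively each $p_{j+1}$ eventually adopts $y$'s route permanently, so after finitely many steps $x = p_m$ has chosen $y$'s route --- exactly what complete visibility for $X_F$ demands. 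Since $X_F$ was arbitrary, $H$ is correct.

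\emph{Correctness implies safe paths.} Assume $H$ is correct and fix a pair $(x,y)$. I would invoke correctness on the specific egress set $X_F = \{y\} \cup D(x,y)$. Since $d(x,y) < d(x,z)$ for all $z \in D(x,y)$, the egress $y$ is closest to $x$, so $x$ chooses $y$'s route; more precisely, complete visibility says that a router $r$ ends up using $y$'s route exactly when $d(r,y) < d(r,z)$ for all $z \in D(x,y)$, i.e.\ the set $A_y$ of ``$y$-route users'' equals $S(x,y)$. Hence it suffices to show that $S(x,y)$ induces a connected subgraph of $H$: such a subgraph contains a path between $x$ and $y$ (both lie in $S(x,y)$), and by construction that path lies entirely in $S(x,y)$, giving the desired safe signaling path.

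\emph{The main obstacle} is precisely this connectivity claim, since a priori $y$'s route might reach $x$ only by being relayed through ``couriers'' lying outside $S(x,y)$. The structural fact I would lean on is a nesting property of safe sets: if $w \in S(x,y)$ then $d(w,y) < d(w,z)$ for every $z \in D(x,y)$, so $D(x,y) \subseteq D(w,y)$ and therefore $S(w,y) \subseteq S(x,y)$. Using this I would argue by induction (on $d(x,y)$, or on $|S(x,y)|$): pick a neighbor $w_1 \neq y$ of $x$ from which $x$ actually hears $y$'s route, so $w_1 \in A_y = S(x,y)$ and $d(x,w_1) < d(x,y)$; apply the inductive hypothesis to the sub-instance $(w_1,y)$ to connect $y$ to $w_1$ inside $S(w_1,y) \subseteq S(x,y)$, then append the edge $\{w_1,x\}$ (noting $x \in S(x,y)$). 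The delicate point --- where I expect the real work to be --- is the borderline case in which $(w_1,y)$ is not a strictly smaller instance (that is, $S(w_1,y) = S(x,y)$); handling it seems to require either re-choosing $w_1$ more carefully, or a direct argument about the iBGP dynamics (plausibly invoking correctness on several egress sets at once) showing that when $H$ is correct the route of $y$ cannot in fact leave $S(x,y)$ on its way to $x$.
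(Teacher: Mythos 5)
Your forward direction matches the paper's argument exactly: under an arbitrary $X_F$ with $y$ the closest egress to $x$, every other egress lies in $D(x,y)$, so every router on the safe path has $y$ as its closest egress and an induction along the path shows each adopts and forwards $y$'s route. No issues there.

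The backward direction, however, has a genuine gap, and you have correctly located it yourself. Your setup is the same as the paper's --- invoke correctness on $X_F = D(x,y) \cup \{y\}$ and observe that the routers ending up with $y$'s route are exactly those in $S(x,y)$ --- but you then try to conclude by showing that $H$ restricted to $S(x,y)$ is \emph{connected}, via an induction on nested safe sets. That target is both stronger than needed and, as you note, the induction stalls precisely when $S(w_1,y) = S(x,y)$, so the argument as written does not close. The missing idea is that you do not need any graph-theoretic connectivity claim at all: the protocol dynamics already hand you the path. Since $x$ hears about $y$'s route, there is a chain of announcements $y = a_1, a_2, \dots, a_k = x$ in $H$ along which the route was relayed. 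A router only announces the route it has \emph{chosen}, so every $a_i$ on this chain chose $y$'s route; but under $X_F = D(x,y) \cup \{y\}$, any router $a_i \notin S(x,y)$ has some $w \in D(x,y)$ with $d(a_i,w) < d(a_i,y)$ and hence, by complete visibility, chose a route other than $y$'s and would never have forwarded it. So every $a_i$ lies in $S(x,y)$ and the announcement chain is itself the required safe signaling path. This is exactly the contradiction argument in the paper, and it replaces your entire unfinished induction.
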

\begin{proof}
  We first show that if every pair has a safe signaling path then every
  node hears about the route that has the
  closest egress router no matter what the set of
  egress routers $X_F$ is.  This is simple: let $x$ be a router, and let $y$ be its
  closest egress router. Let $r$ be the route whose egress router is $y$.
  By assumption there is a signaling path from $y$ to $x$
  that uses only routers in $S(x,y)$.  By definition, every one of these
  routers is closer to $y$ than to any router farther from $x$ than $y$
  is.  Since $y$ is the closest egress to $x$, this means that for all of
  the routers in $S(x,y)$, $y$ will be the closest egress router.
  A simple induction then shows that the routers in a safe signaling
  path will each choose $r$ and hence tell their iBGP neighbor in the
  path about $r$.
  That is, $x$ hears about $r$.

  For the other direction we need to show that if a signaling graph
  is correct then every pair has a safe signaling path.  For
  contradiction, suppose that there is no safe signaling path from $y$
  to $x$.  Let $X_F$, the set of egress routers, be $D(x,y)
  \cup \{y\}$.  Let $r$ be the route whose egress router is $y$.
  Since every router in $D(x,y)$ is farther from $x$ than
  $y$ is, this means that for this set of egress routers $x$ is closer to $y$
  than any other egress.  By correctness we know that $x$ does hear about
  $y$.  Let $y = a_1, a_2, \dots, a_k = x$ be the (or at least a)
  signaling path from $y$ to $x$ through which $x$ hears about $r$.
  Since there are no safe signaling paths from $y$ to $x$, we know that
  there exists some $i$ such that $a_i \not\in S(x,y)$.  This means that
  there is some $w \in D(x,y)$ such that $d(a_i, w) < d(a_i, y)$.  Since
  we assumed correctness we know that $a_i$ heard about the route with
  the closest egress router $z$ to $a_i$, and $z\not= y$
  (since $w$ in particular is closer).
  So $a_i$ will not tell its iBGP neighbors about $r$,
  which is a contradiction since $a_i$ is
  on the signaling path from which $x$ heard about $r$.  Thus a safe
  signaling path must exist.
\end{proof}

Note that this condition is easy to check in polynomial time, so we
have shown membership in NP.
Also this characterization shows that the problems
{\sc iBGP-Sum} and {\sc iBGP-Degree} are Constrained Connectivity
problems where the underlying graph $G$ is $K_n$ and the safe sets are defined by certain geometric properties.  While the proof of this is obviously relatively simple, we believe that it is an important contribution of this paper as it allows us to characterize the behavior of a protocol (iBGP) using only the static information of the signaling graph and the network distances.

\subsection{Linear Programming Relaxations}

There are two obvious linear programming relaxations of the {\sc
  Constrained Connectivity} problems (and thus the iBGP problems): the \emph{flow LP} and the
\emph{cut LP}.  For every pair $(u,v) \in V \times V$ let
$\mathcal{P}_{uv}$ be the collection of $u-v$ paths that are contained
in $S(u,v)$.  The flow LP has a variable $c_e$ for every edge $e \in
E$ (called the \emph{capacity} of edge $e$) and a variable $f(P)$ for
every $u-v$ path in $\mathcal{P}_{uv}$ for every $(u,v) \in V \times
V$ (called the \emph{flow} assigned to path $P$).  The flow LP simply
requires that at least one unit of flow is sent between all pairs
while obeying capacity constraints:
\begin{align*}
\min &  \textstyle\sum_e c_e \\
\text{s.t} &\textstyle \sum_{P \in \mathcal{P}_{uv}} f(P) \geq 1 & \forall (u,v) \in V \times V \\
&\textstyle \sum_{P \in \mathcal{P}_{uv} : e \in P} f(P) \leq c_e & \forall e \in E, (u,v) \in V \times V \\
& 0 \leq c_e \leq 1 & \forall e \in E \\
& 0 \leq f(P) \leq 1 & \forall (u,v) \in V \times V, P \in \mathcal{P}_{uv}
\end{align*}

 This is obviously a valid relaxation of {\sc
  Constrained Connectivity-Sum}: given a valid solution to
{\sc Constrained Connectivity-Sum}, let $P_{uv}$ denote the required
safe $u-v$ path for every $(u,v) \in V \times V$.  For every edge $e$
in some $P_{uv}$ set $c_e$ to $1$, and set $f(P_{uv})$ to $1$ for
every $(u,v) \in V \times V$.  This is clearly a valid
solution to the linear program with the exact same value.  To change
the LP for {\sc Constrained Connectivity-Degree} we can just introduce
a new variable $\lambda$, change the objective function to $\min
\lambda$, and add the extra constraints $\sum_{v: \{v,u\} \in E} c_{\{u,v\}}
\leq \lambda$ for all $u \in V$.  And while this LP can be exponential in size (since there is a variable for every path), it is also easy to design a compact representation that has only $O(n^4)$ variables and constraints.  This compact representation has variables $f_{(u,v)}^{(x,y)}$ instead of $f(P)$, where $f_{(u,v)}^{(x,y)}$ represents the amount of flow from $u$ to $v$ along edge $\{u,v\}$ for the demand $(x,y)$.    Then we can write the normal flow conservation and capacity constraints for every demand $(x,y)$ independently, restricted to $S(x,y)$.  Indeed, this compact representation is one of the main reasons to prefer the flow LP over the cut LP.

The cut LP is basically equivalent to the flow LP, except that instead of requiring flow to be sent, it requires the min-cut to be large large enough. Given a pair $(u,v) \in V \times V$, let
$\mathcal{S}(u,v) = \{S \subset S(u,v) : u \in S \land v \not\in S\}$
be the collection of safe set cuts that separate $u$ and $v$.
Furthermore, given a set $S \in \mathcal{S}(u,v)$ let $\delta_{uv}(S)
= \{e \in {V \choose 2} : e \in (S, S(u,v) \setminus S)\}$ be the set
of safe edges that cross $S$.  The cut LP has a variable $x_e$ for
every edge $e$ (equivalent to $c_e$ in the flow LP), and is quite simple:
\begin{align*}
\min &\sum_e x_e \\
\text{s.t. } &\sum_{e \in \delta_{uv}(S)} x_e \geq 1 \qquad
\forall u,v \in V, S \in \mathcal{S}(u,v)
\end{align*}

This LP simply minimizes the sum of the edge variables subject to the
constraint that for every cut between two nodes there must be at least
one safe edge crossing it.  While the flow LP and the cut LP are not technically duals of each other (since capacities are variables), it is easy to see from the max flow-min cut theorem that they do in fact describe the same polytope (with respect to the capacity variables).  Thus integrality gaps for one automatically hold for the other, as do approximations achieved by LP rounding.

\section{Algorithms for iBGP and Constrained Connectivity on $K_n$} \label{sec:CC_Kn}

\subsection{$\tilde{O}(\sqrt{n})$-approximation}
In this section we show that there is a $\tilde{O}(\sqrt{n})$-approximation algorithm for both Constrained Connectivity problems as long as the underlying graph is the complete graph $K_n$.  This algorithm is inspired by the recent progress on directed spanners by Bhattacharyya et al.~\cite{BGJRW09}, Dinitz and Krauthgamer~\cite{DK11}, and Berman et al.~\cite{BBMRY11}.  In particular, we use the same two-component framework that they do: a randomized rounding of the LP and a separate random tree-sampling step.  The randomized rounding we do is simple independent rounding with inflated probabilities.  The next lemma implies that this works well when the safe sets are small.

\begin{lemma} \label{lem:sample}
Let $E' \subseteq E$ be obtained by adding every edge $e \in E$  to $E'$ independently with probability at least $\min\{12c_e \cdot |S(x,y)| \ln n, 1\}$.  Then with probability at least $1-1/n^3$, $E'$ will have a path between $x$ and $y$ contained in $S(x,y)$.
\end{lemma}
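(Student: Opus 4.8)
The plan is to fix a pair $(x,y)$ and condition on the LP solution $\{c_e\}$, which by feasibility supports at least one unit of flow from $x$ to $y$ entirely within $S(x,y)$. By flow decomposition this flow splits into a collection of $x$-$y$ paths $P_1, P_2, \dots$, each contained in $S(x,y)$, carrying flow amounts $f_1, f_2, \dots$ with $\sum_i f_i \geq 1$. Each such path has length at most $|S(x,y)|$ (it is a simple path inside $S(x,y)$). I want to argue that with high probability at least one of these paths survives entirely in $E'$. The difficulty is that the paths overlap, so their survival events are not independent, and a path with small flow is individually very unlikely to survive; the saving grace is that the total flow is large.

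The key step is a second-moment / alteration argument, or more simply a direct union-type bound via a clever weighting. Here is the cleanest route I would take. Process the decomposition paths in order of decreasing flow, and greedily extract a sub-collection $Q_1, \dots, Q_t$ that are pairwise edge-disjoint and together capture a constant fraction of the flow — concretely, when we add $Q_j$ we "charge" to it all remaining paths sharing an edge with it; since each path has at most $|S(x,y)|$ edges and each edge $e$ can be charged to only via the paths through it (whose total flow is $\le c_e$), a short accounting shows that the edge-disjoint paths $Q_1,\dots,Q_t$ capture flow $\sum_j f(Q_j) = \Omega(1)$ — wait, this particular accounting is not quite immediate, so the honest approach is: it suffices to show $\Pr[\text{no surviving safe path}] \le 1/n^3$, and for that I would bound, for each path $P_i$, $\Pr[P_i \subseteq E'] \ge \prod_{e \in P_i} \min\{12 c_e |S(x,y)| \ln n, 1\}$, and then control the dependence between different $P_i$ by FKG/correlation inequalities. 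Since all edge-inclusion events are increasing, the events $\{P_i \subseteq E'\}$ are positively correlated, so $\Pr[\bigcap_i \overline{\{P_i \subseteq E'\}}] \le \prod_i \Pr[\overline{\{P_i \subseteq E'\}}]$ — which goes the wrong way. So the actual argument must be more hands-on.

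The approach I expect the authors take, and which I would follow: among the decomposition paths, consider the one carrying the most flow; restricting to a single path $P$ with flow $f(P) = f$, every edge $e \in P$ has $c_e \ge f$, so $e$ is included with probability $\ge \min\{12 f |S(x,y)| \ln n, 1\}$. If $12 f |S(x,y)| \ln n \ge 1$ for this heaviest path, then $P$ survives with probability $\ge (12 f |S(x,y)| \ln n)^{|P|} \ge (12 f |S(x,y)| \ln n)^{|S(x,y)|}$, which is not obviously $1 - 1/n^3$. The resolution — and I believe this is the real content of the lemma — is to instead use the full flow: build an auxiliary "layered" sampling argument. Partition the flow-carrying edges by the dyadic scale of their capacity, note $|S(x,y)| \le n$, and observe that any $x$-$y$ flow of value $1$ inside a set of size $s = |S(x,y)|$ contains an $x$-$y$ path $P$ with $\prod_{e\in P} c_e \ge (1/s)^{|P|} \cdot (\text{something})$ — and since $|P| \le s$, after inflation by $12 s \ln n$ each edge's probability is $\ge 12 \ln n$ when $c_e \ge 1/s$, giving survival probability $\ge \prod_{e\in P}(12\ln n \cdot c_e s) $; choosing $P$ to be a path where every edge has $c_e \ge 1/(2s)$ (which exists since removing all smaller-capacity edges still leaves $\ge 1/2$ unit of flow, hence a path) yields per-edge probability $\ge 6 \ln n$, and the union bound over the $\le \binom{n}{2} \le n^2$ failure modes along that single path gives the $1/n^3$ bound.

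The main obstacle, and the step I would spend the most care on, is precisely this existence-of-a-high-capacity-path claim: showing that an $x$-$y$ flow of value $1$ contained in $S(x,y)$ must route $\Omega(1)$ of its flow along $x$-$y$ paths in which every edge has capacity $\Omega(1/|S(x,y)|)$. This follows by deleting every edge with $c_e < 1/(2|S(x,y)|)$: since any cut separating $x$ from $y$ inside $S(x,y)$ has at most $|S(x,y)|^2$ edges, wait—at most $|S(x,y)|$ edges in a *simple path* but a cut can be large; the right bound is that the total capacity removed across any single $x$-$y$ path is at most $|S(x,y)| \cdot \frac{1}{2|S(x,y)|} = \frac12$, so by flow decomposition at most $\frac12$ unit of flow uses a low-capacity edge, leaving $\ge \frac12$ unit routable on high-capacity paths, hence such a path exists. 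Once that path $P$ is in hand with $|P| \le |S(x,y)|$ and every edge included independently with probability $\ge \min\{12 \cdot \frac{1}{2|S(x,y)|} \cdot |S(x,y)| \ln n,\ 1\} = \min\{6\ln n, 1\} = 1$ for $n \ge 2$, we actually get that $P \subseteq E'$ with probability $1$ on those edges—so the real statement must keep track of the edges with larger $c_e$ separately, and the failure probability $1/n^3$ comes from a Chernoff bound ensuring enough of the small-capacity edges are sampled, which I would phrase as: independently rounding with the inflated probabilities, the probability that a *specific* safe $x$-$y$ path of length $\le |S(x,y)|$ fails is at most $\sum_{e \in P}(1 - \min\{12 c_e |S(x,y)|\ln n,1\})$, and by a more careful multiplicative argument this is at most $n^{-3}$; I would verify this final calculation explicitly rather than wave at it, since it is where the constant $12$ and the exponent $3$ are pinned down.
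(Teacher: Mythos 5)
Your proposal has a genuine gap, and it sits exactly where you suspected: the claim that a unit $x$--$y$ flow inside $S(x,y)$ must contain an $x$--$y$ path all of whose edges have capacity $\Omega(1/|S(x,y)|)$ is false. Write $s = |S(x,y)|$ and consider a three-layer example: $x$ sends $1/k$ to each of $a_1,\dots,a_k$, each $a_i$ sends $1/k^2$ to each of $b_1,\dots,b_k$, and each $b_j$ forwards $1/k$ to $y$ (so $s = 2k+2$). Every $x$--$y$ path crosses a middle edge of capacity $1/k^2 = \Theta(1/s^2)$, so deleting all edges of capacity below $1/(2s)$ disconnects $x$ from $y$. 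The accounting you used to justify the claim (``the capacity removed along any single path is at most $s \cdot \frac{1}{2s} = \frac12$, hence at most $\frac12$ unit of flow uses a low-capacity edge'') is a non sequitur: the flow lost is bounded by the sum of capacities of \emph{all} deleted edges, of which there can be $\Theta(s^2)$, not by the per-path sum. Your fallback bound in the last paragraph, $\Pr[P \not\subseteq E'] \le \sum_{e\in P}\bigl(1 - \min\{12 c_e s \ln n, 1\}\bigr)$, fails for the same reason: a single edge with $c_e = 1/s^2$ already contributes $1 - 12\ln n/s$, which is close to $1$ for large $s$.

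The idea you are missing is to abandon path decompositions entirely and union-bound over cuts of the safe set. There are only $2^{s}$ bipartitions $(X,Y)$ of $S(x,y)$ with $x\in X$ and $y\in Y$; if every such cut is crossed by some edge of $E'$, then $E'$ contains a safe $x$--$y$ path. LP feasibility gives $\sum_{e\in\delta(X,Y)} c_e \ge 1$ for each such cut, so either some crossing edge is taken with probability $1$, or the probability that the cut is missed is at most $\prod_{e\in\delta(X,Y)}(1-12 c_e s\ln n) \le \exp(-12 s\ln n)$; then $2^{s}\exp(-12 s\ln n)\le n^{-3}$. This is the paper's argument, and it explains why the inflation factor is proportional to $|S(x,y)|$: it is there to beat the $2^{|S(x,y)|}$ cuts, not to compensate for path length.
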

\begin{proof}
Let $(X,Y)$ be a partition of $S(x,y)$ so that $x \in X$ and $y \in Y$, i.e.~$(X,Y)$ is an $x-y$ cut of $S(x,y)$.  Note that there are only $2^{|S(x,y)|}$ such cuts, and by standard arguments if at least one edge from every cut is chosen to be in $E'$ then $E'$ contains an $x-y$ path in $S(x,y)$.  Since in any LP solution at least one unit of flow is sent from $x$ to $y$ in $S(x,y)$, every cut has capacity at least $1$.  Let $\delta(X,Y)$ be the set of edges that cross the cut $(X,Y)$.  If $c_e \geq 1/ (12 |N(x,y)| \ln n)$ for any $e \in \delta(X,Y)$ then $e$ is selected with probability $1$, and thus $(X,Y)$ is spanned.  Otherwise, the probability that no edge from $\delta(X,Y)$ is chosen is at most $\prod_{e \in \delta(X,Y)} (1-12 c_e \cdot |S(x,y)|\ln n) \leq \exp(-12|S(x,y)|\ln n \cdot \sum_{e \in \delta(X,Y)} c_e) \leq e^{-3|S(x,y)| \ln n}$.  Thus by a simple union bound the probability that we fail on \emph{any} cut is at most $2^{|S(x,y)|} e^{-12|S(x,y)| \ln n} \leq (2/e)^{-12|S(x,y)| \ln n} \leq 1/n^3$
\end{proof}

Another important part of our algorithm will be random sampling that is independent of the LP.  We
will use two different types of sampling: star sampling for the sum version and edge sampling for the degree version.  First we
consider star sampling, in which we independently sample nodes with
probability $p$, and every sampled node becomes the center of a star that spans the vertex set.

\begin{lemma} \label{lem:star_sample}
  All pairs with safe sets of size at least $s$ will be satisfied by
  random star sampling with high probability if $p = 3\ln n / s$.
\end{lemma}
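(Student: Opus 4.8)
The plan is to exploit the fact that a single sampled center inside the safe set already yields a short safe path. Fix a pair $(x,y)$ with $|S(x,y)| \ge s$. I will observe that if \emph{any} node $c \in S(x,y)$ is chosen by the sampling process, then the star centered at $c$ contains both edges $\{x,c\}$ and $\{c,y\}$, so the two-hop walk $x, c, y$ is a path in the sampled subgraph; moreover all three of its vertices lie in $S(x,y)$ (using $x,y \in S(x,y)$ by assumption together with $c \in S(x,y)$), so it is a safe path for $(x,y)$. Hence the pair $(x,y)$ fails to be satisfied only if no vertex of $S(x,y)$ is sampled.

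Next I would bound that failure probability. Since vertices are included independently with probability $p = 3\ln n / s$, the probability that none of the at least $s$ vertices of $S(x,y)$ is sampled is at most $(1-p)^{|S(x,y)|} \le (1-p)^{s} \le e^{-ps} = e^{-3\ln n} = n^{-3}$. Taking a union bound over the at most $n^2$ pairs $(x,y)$, the probability that some pair with a large safe set is left unsatisfied is at most $n^2 \cdot n^{-3} = 1/n$, which proves the claim. (If $s \le 3\ln n$ so that the nominal $p$ exceeds $1$, we simply take $p=1$, sample every vertex, and every safe set is trivially spanned, so the statement holds vacuously in that regime.)

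There is essentially no serious obstacle here: the argument is a one-line observation (sampled center in the safe set $\Rightarrow$ length-two safe path) followed by a routine independence computation and union bound. The only point requiring a little care is making sure the exponent in $(1-p)^{|S(x,y)|}$ is handled correctly for safe sets \emph{larger} than $s$ — this is fine because $(1-p)^{|S(x,y)|} \le (1-p)^s$ when $|S(x,y)| \ge s$ — and confirming that the constant $3$ in $p$ suffices to beat the $n^2$ pairs in the union bound with room to spare, which it does since it gives $n^{-3}$ per pair.
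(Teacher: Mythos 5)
Your proof is correct and follows essentially the same argument as the paper's: a sampled center in $S(x,y)$ yields the safe two-hop path $x\text{--}c\text{--}y$, the failure probability is bounded by $(1-p)^s \le e^{-ps} = n^{-3}$, and a union bound over the at most $n^2$ pairs finishes. Your added remark about the degenerate case $p \ge 1$ is a harmless extra detail the paper omits.
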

\begin{proof}
  Consider some pair $(x,y)$ with $|S(x,y)| \geq s$.  If some node
  (say $z$) from $S(x,y)$ is sampled then the pair is satisfied, since
  the creation of a star at $z$ would create a path $x-z-y$ that would
  satisfy $\{x,y\}$.  The probability that no node from $S(x,y)$ is
  sampled is
  \begin{equation*}
    (1-p)^{|S(x,y)|} \leq (1-p)^s \leq e^{-ps} = e^{-3\ln n} = 1/n^3
  \end{equation*}
  Since there are less than $n^2$ pairs, we can take a union bound
  over all pairs $(x,y)$ with $|S(x,y)| \geq s$, giving us that all
  such pairs are satisfied with probability at least $1 - 1/n$.
\end{proof}

For edge sampling, we essentially consider the Erd\H{o}s-R\'{e}nyi
graph $G_{n,p}$, i.e.\ we just sample every edge independently with
probability $p$.  We will actually consider the union of $3 \log n$
independent $G_{n,p}$ graphs, where $p = \frac{(1+\epsilon)\log s}{s}$
for some small $\epsilon > 0$.  Let $H$ be this random graph.

\begin{lemma} \label{lem:edge_sample} With probability at least
  $1-1/n$, all pairs with safe sets of size at least $s$ will be
  connected by a safe path in $H$.
\end{lemma}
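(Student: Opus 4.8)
The plan is to show that for any fixed pair $(x,y)$ with $|S(x,y)| \geq s$, the induced subgraph of $H$ on $S(x,y)$ is connected with probability at least $1 - 1/n^3$, and then take a union bound over all fewer than $n^2$ pairs. The key observation is that when we restrict attention to the safe set $S(x,y)$, the edges of $H$ inside this set are distributed exactly like $3\log n$ independent copies of $G_{m,p}$ on $m = |S(x,y)| \geq s$ vertices, where $p = \frac{(1+\epsilon)\log s}{s} \geq \frac{(1+\epsilon)\log m}{m} \cdot \frac{m \log s}{s \log m}$. Since the classical connectivity threshold for $G_{m,p}$ is $p = \frac{\log m}{m}$, and here $p$ is comfortably above $\frac{(1+\epsilon)\log m}{m}$ (using $m \geq s$ and that $\frac{\log s}{s}$ is decreasing, so $\frac{\log s}{s} \geq \frac{\log m}{m}$ only when $m$ is not too much larger than $s$ — I need to be slightly careful here and instead argue directly), each individual copy is connected with constant probability bounded away from zero, and with $3\log n$ independent copies the union is connected except with probability $n^{-\Omega(1)}$.

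Concretely, first I would invoke (or reprove) the standard fact that $G_{m,q}$ with $q = \frac{(1+\epsilon)\log m}{m}$ is connected with probability at least $1 - m^{-\epsilon/2}$ for $m$ large; this follows from a first-moment bound on the number of isolated vertices (and, more carefully, on the number of small connected components of size at most $m/2$) — the expected number of isolated vertices is $m(1-q)^{m-1} \leq m \cdot e^{-q(m-1)} = m^{-\epsilon + o(1)}$, and the contribution of components of size $2 \leq k \leq m/2$ is geometrically dominated by the $k=1$ term. The subtlety is that in our setting $q = p = \frac{(1+\epsilon)\log s}{s}$ may be \emph{smaller} than $\frac{(1+\epsilon)\log m}{m}$ when $m \gg s$; but in that regime $p = \frac{(1+\epsilon)\log s}{s}$ is still at least $\frac{\log m}{m}$ times a factor that tends to infinity is false, so the right move is: a single copy of $G_{m,p}$ is connected with probability at least some constant $\beta > 0$ whenever $pm \geq (1+\epsilon)\log s \geq \log m$ fails — so instead I take the union of $3\log n$ copies and argue that the union graph has edge probability $1 - (1-p)^{3\log n} \geq 1 - e^{-3p\log n}$ per pair, which for $p \geq \frac{1}{s}$ and the relevant range makes the union-graph connectivity a direct $G_{m,p'}$ computation with $p' = 1-(1-p)^{3\log n}$, and $p' m \geq 3 \log n$ when $pm$ is bounded below, pushing us well past the threshold.

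The cleanest route, which I would actually write up, is this: the union $H$ restricted to $S(x,y)$ stochastically dominates $G_{m, p'}$ with $p' = 1 - (1-p)^{3\log n}$. Since $p \geq \frac{(1+\epsilon)\log s}{s} \geq \frac{1}{m}$ (using $s \leq m$ and $\log s \geq 1$), we get $p' \geq 1 - e^{-3 \log n / m} \geq \min\{1, \frac{3 \log n}{2m}\}$ for $m$ large, hence $p' m \geq \frac{3}{2}\log n$. Plugging $q = p'$ into the first-moment connectivity bound gives a failure probability of at most $m \cdot e^{-p'(m-1)} + (\text{small-component terms}) \leq n^{-1+o(1)}$, which is not quite $n^{-3}$; to get $n^{-3}$ I would instead take $c\log n$ copies for a slightly larger absolute constant $c$ (the statement already hides constants), or observe that we only need a union bound over $n^2$ pairs so any failure probability $o(n^{-2})$ suffices, and $p'm \geq \frac{3}{2}\log n$ already delivers $m e^{-p'(m-1)} = o(n^{-2})$ once the $-1$ in the exponent is absorbed.

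The main obstacle is the non-monotonicity I flagged: $\frac{\log s}{s}$ is decreasing, so one cannot simply say ``$p$ exceeds the threshold $\frac{\log m}{m}$ for the larger set.'' The fix — passing to the union of $\Theta(\log n)$ copies \emph{before} analyzing connectivity, so that the effective edge probability is $\Omega(\log n / m)$ regardless of how $m$ compares to $s$ — is what makes the argument go through, and I would make sure to present it in that order. Everything else (the first-moment isolated-vertex / small-component bound and the final union bound over pairs) is routine.
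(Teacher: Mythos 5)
Your overall plan (restrict $H$ to $S(x,y)$, invoke the $G(n,p)$ connectivity threshold, union bound over the fewer than $n^2$ pairs) is exactly the paper's, but the ``main obstacle'' you build the proof around does not exist, and the detour you take to avoid it leaves a real quantitative gap. The function $t \mapsto \frac{\log t}{t}$ is \emph{decreasing} for $t \geq e$, so for $m = |S(x,y)| \geq s$ you have $p = \frac{(1+\epsilon)\log s}{s} \geq \frac{(1+\epsilon)\log m}{m}$: the edge probability is \emph{above} the connectivity threshold of the larger induced graph, not below it. (You assert the opposite: ``$p$ may be smaller than $\frac{(1+\epsilon)\log m}{m}$ when $m \gg s$.'') Once the inequality is read in the correct direction, the argument is the one the paper gives: each of the $3\log n$ independent copies, restricted to $S(x,y)$, is a $G_{m,p}$ above its own threshold and hence connected with probability at least $1/2$; since connectivity is monotone under adding edges, the union fails only if every copy fails, which happens with probability at most $2^{-3\log n} = n^{-3}$; then union bound.

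The route you say you would actually write up does not close. Merging the copies first gives $p' m \approx 3\log n$, so your first-moment bound on isolated vertices is $m\,e^{-p'(m-1)} \approx m\, n^{-3/2}$, which is as large as $n^{-1/2}$ when $m$ is comparable to $n$; your claim that this is $o(n^{-2})$ is false, and the union bound over $n^2$ pairs then gives nothing. Your fallback of taking $c\log n$ copies for a larger constant $c$ would repair this, but it changes the construction fixed in the lemma rather than proving the lemma as stated. The fix is simply to drop the detour: analyze each copy separately using the (correctly oriented) monotonicity of $\frac{\log t}{t}$, and use independence of the $3\log n$ copies to drive the per-pair failure probability down to $n^{-3}$.
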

\begin{proof}
  Let $(x,y)$ be a pair with $|S(x,y)| \geq s$.  Obviously $(x,y)$ is
  satisfied if the graph induced on $S(x,y)$ is connected.  It is
  known \cite{bollobas:01} that there is some small $\epsilon$ with $0
  < \epsilon < 1$ so that $G_{s,p}$ is connected with probability at
  least $1/2$.  Since $H$ is the union of $3\log n$ instantiations of
  $G_{n,p}$, we know that the probability that the subgraph of $H$
  induced on $S(x,y)$ is not connected is at most $1/n^3$.  We can now
  take a union bound over all such $(x,y)$ pairs, giving us that the
  probability that there is some unsatisfied $(x,y)$ pairs with
  $|S(x,y)| \geq s$ is at most $1/n$.
\end{proof}

We will now combine the randomized rounding of the LP and the random sampling into a single approximation algorithm.  Our algorithm is
divided into two phases: first, we solve the LP and randomly include every edge $e$ with probability $O(c_e \sqrt{n} \ln n)$.  By Lemma~\ref{lem:sample} this takes care of safe sets of size at most $\sqrt{n}$.  Second, if the objective is to
minimize the number of edges we do star sampling with probability
$(3\ln n) / \sqrt{n}$, and if the objective is to minimize the maximum
degree we do edge sampling using the construction of Lemma
\ref{lem:edge_sample} with $s = \sqrt{n}$.  It is easy to see that this algorithm with high probability results in a valid solution that is a $\tilde{O}(\sqrt{n})$-approximation.

\begin{theorem} \label{thm:approx}
  This algorithm is a $\tilde{O}(\sqrt{n})$-approximation to both {\sc Constrained Connectivity-Sum} and {\sc Constrained Connectivity-Degree} on $K_n$.
\end{theorem}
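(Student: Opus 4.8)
The plan is to prove feasibility (that the returned subgraph $H$ is safely connected with high probability) and the cost bound separately, and in each case to split the analysis according to whether a pair $(x,y)$ has a \emph{small} safe set ($|S(x,y)| \le \sqrt{n}$) or a \emph{large} one ($|S(x,y)| > \sqrt{n}$). The first phase of the algorithm is responsible for the small safe sets and the second phase for the large ones, and the threshold $\sqrt{n}$ is exactly the point at which the two phases' costs balance.

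For feasibility, first take a pair $(x,y)$ with $|S(x,y)| \le \sqrt{n}$. In the first phase we include each edge $e$ with probability $\min\{12 c_e \sqrt{n}\ln n, 1\} \ge \min\{12 c_e |S(x,y)|\ln n, 1\}$, so Lemma~\ref{lem:sample} applies and yields a safe $x$--$y$ path with probability at least $1-1/n^3$; a union bound over the fewer than $n^2$ such pairs shows that all of them are satisfied with probability at least $1-1/n$. Now take a pair with $|S(x,y)| > \sqrt{n}$. For {\sc Constrained Connectivity-Sum} the second phase is star sampling with $p = 3\ln n/\sqrt{n}$, so Lemma~\ref{lem:star_sample} (with $s = \sqrt{n}$) satisfies all such pairs with high probability; for {\sc Constrained Connectivity-Degree} the second phase is the edge sampling of Lemma~\ref{lem:edge_sample} with $s = \sqrt{n}$, which again satisfies all such pairs with high probability. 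A final union bound over this constant number of bad events shows $H$ is safely connected with high probability.

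For the cost, recall that the flow/cut LP value is a lower bound on the corresponding integer optimum $\mathrm{OPT}$. In the {\sc Sum} case the first phase has expected cost $\sum_e \min\{12 c_e\sqrt{n}\ln n,1\} \le 12\sqrt{n}\ln n \sum_e c_e \le 12\sqrt{n}\ln n \cdot \mathrm{OPT}$, which holds up to constant factors with high probability by a Chernoff bound; the second phase samples $O(\sqrt{n}\ln n)$ star centers (in expectation, and by Chernoff also with high probability), each contributing at most $n-1$ edges, for a total of $O(n^{3/2}\ln n)$ edges. Since any safely connected subgraph is connected and hence has at least $n-1$ edges we have $\mathrm{OPT}\ge n-1$, so the second phase also costs $O(\sqrt{n}\ln n)\cdot\mathrm{OPT}$, giving a $\tilde{O}(\sqrt{n})$-approximation overall. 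In the {\sc Degree} case, let $\lambda^\ast$ be the optimal value of the degree LP. After the first phase $\expct{\deg(u)} \le 12\sqrt{n}\ln n \sum_{v}c_{uv} \le 12\sqrt{n}\ln n\cdot\lambda^\ast$ for every $u$, so by Chernoff and a union bound over the $n$ vertices the maximum degree after phase one is $O(\sqrt{n}\ln n\cdot\max\{\lambda^\ast,1\})$ with high probability; in the second phase each of the $3\log n$ copies of $G_{n,p}$ contributes expected degree $np = \tilde{O}(\sqrt{n})$ per vertex, so (again by Chernoff and a union bound over vertices) phase two adds maximum degree $\tilde{O}(\sqrt{n})$ with high probability. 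Since $\mathrm{OPT}\ge\lambda^\ast$ and $\mathrm{OPT}\ge 1$, the total maximum degree is $\tilde{O}(\sqrt{n})\cdot\mathrm{OPT}$.

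Essentially all of the real content is in Lemmas~\ref{lem:sample}, \ref{lem:star_sample}, and~\ref{lem:edge_sample}, so what remains is routine: verifying that the rounding cost ($\tilde{O}(\sqrt{n})$ times the fractional optimum) and the sampling cost ($\tilde{O}(n^{3/2})$ edges, or $\tilde{O}(\sqrt{n})$ added degree) are balanced at the threshold $s=\sqrt n$, and that the trivial bounds $\mathrm{OPT}\ge n-1$ (sum) and $\mathrm{OPT}\ge 1$ (degree) suffice to absorb the additive sampling cost into a multiplicative guarantee. The only step needing a little care is the concentration argument: when $\lambda^\ast$ (or the LP value) is small one must apply the Chernoff upper tail in its $\Pr[X>t]\le 2^{-t}$ form rather than the $(1+\delta)\mu$ form, which is precisely why the first-phase degree bound comes out as $\tilde{O}(\sqrt{n})\cdot\max\{\lambda^\ast,1\}$ rather than $\tilde{O}(\sqrt{n})\cdot\lambda^\ast$.
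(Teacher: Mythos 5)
Your proof is correct and follows essentially the same route as the paper's: the same small/large safe-set split at threshold $\sqrt{n}$, feasibility via Lemmas~\ref{lem:sample}, \ref{lem:star_sample}, and~\ref{lem:edge_sample}, and the same cost accounting using $\mathrm{OPT} \ge n-1$ for the sum version and Chernoff bounds on vertex degrees for the degree version. Your remark about using $\max\{\lambda^\ast,1\}$ when the fractional degree is small is a careful refinement of a point the paper passes over silently, but it does not change the argument.
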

\begin{proof}
  We first argue that the algorithm does indeed give a valid solution
  to the problem.  Let $(x,y)$ be an arbitrary pair.  If $|S(x,y)|
  \leq \sqrt{n}$, then Lemma \ref{lem:sample} implies that the first phase of
  the algorithm results in a safe path.  If $|S(x,y)| \geq \sqrt{n}$,
  then Lemma \ref{lem:star_sample} or Lemma \ref{lem:edge_sample}
  imply that the second phase of the algorithm results in a safe path.
  So every pair has a safe path, and thus the solution is valid.

  We now show that the cost of this algorithm is at most
  $\tilde{O}(\sqrt{n}) \times OPT$.  We first consider the objective
  function of minimizing the number of edges.  In the LP rounding step we only
  increase capacities by at most a factor of $\tilde{O}(\sqrt{n})$, so since the
  LP is a relaxation of the problem we know that the expected cost cost of
  the rounding is at most $\tilde{O}(\sqrt{n}) \times OPT$.  For phase 2, in
  expectation we chose $3\sqrt{n} \ln n$ stars, for a total of at most
  $3 n^{3/2} \ln n$ edges.  But since there is a demand for every pair
  we know that $OPT \geq n-1$, so phase 2 has total cost at most
  $\tilde{O}(\sqrt{n}) \times OPT$.

  If instead our objective function is to minimize the maximum degree,
  then since phase 1 only increases capacities by $\tilde{O}(\sqrt{n})$ we know
  that after phase 1 the maximum degree is at most $\tilde{O}(\sqrt{n}) \times
  OPT$ (by a Chernoff bound, with high probability every vertex has degree at most $\tilde{O}(\sqrt{n})$ times its fractional degree in the LP).  In phase 2, a simple Chernoff bound implies that with high
  probability every node gets $\tilde{O}(\sqrt{n})$ new edges, and thus
  the node with maximum degree still has degree at most
  $\tilde{O}(\sqrt{n}) \times OPT$.
\end{proof}

\subsection{Primal-Dual Algorithm} \label{sec:PD}

We also have a primal-dual algorithm that gives a slightly worse result for the {\sc Constrained Connectivity-Sum} problem.  While this algorithm and its analysis is slightly more complicated and only works for the Sum version, by not solving the linear program we get a faster algorithm.  In particular, the best known algorithms for solving linear programs with $m$ variables take $\Omega(m^{3.5})$ time on general LPs, so since there are $n^4$ variables in the compact version of the flow LP this takes $\Omega(n^{12.5})$ time.  The primal-dual algorithm, on the other hand, is significantly faster: a na\"{i}ve analysis shows that it takes $\tilde{O}(n^6)$ time.

In this algorithm we use the cut LP rather than the
flow LP (in fact, the algorithm is very similar to the primal-dual
algorithm for Steiner Forest, which uses a similar cut LP but doesn't
have to deal with safe sets).  Since this is a primal-dual algorithm,
instead of solving and rounding the cut LP we will consider the dual,
which has a variable $y_S^{uv}$ for every pair $(u,v)$ and $S \in
\mathcal{S}(u,v)$.  We say the an edge $e \in S(u,v)$ if both endpoints of $e$ are in $S(u,v)$.
\begin{align*}
\max &\sum_{u,v \in V} \quad \sum_{S \in \mathcal{S}(u,v)}
y_S^{uv} \\
\text{s.t. } &\sum_{u,v \in V : e \in S(u,v)}\quad \sum_{S \in
  \mathcal{S}(u,v) : e \in \delta_{uv}(S)} y_S^{uv} \leq 1 \qquad
\forall e \in {V \choose 2}
\end{align*}

Unfortunately we will not be able to use a pure primal-dual
approximation, but will have to trade off with a random sampling
scheme as in the rounding algorithm.  So instead of this primal, we
will only have constraints for $u,v \in V$ with $|S(u,v)| \leq t$ for
some parameter $t$ that we will set later.  Thus in the dual we will
only have variables $y_S^{uv}$  for $(u,v)$ with $|S(u,v)| \leq t$.
This clearly preserves the property that the primal is a valid
relaxation of the actual problem.  Let $D = \{(u,v) : |S(u,v)| \leq
t\}$.

Our primal-dual algorithm, like most primal-dual algorithms,
maintains a set of \emph{active} dual variables that it increases
until some dual constraint becomes tight.  Once that happens we buy an
edge (i.e.\ set some $x_e$ to $1$ in the primal), change the set of
active dual variables, and repeat.  We do this until we have a
feasible primal.

Initially our primal solution $H$ is empty and the active dual
variables are $y_{\{u\}}^{uv}$ for every $(u,v) \in D$, i.e.\ every
node $u$ has an active dual variable for every other $v$ that it has a
demand with corresponding to the cut in $S(u,v)$ that is the singleton
$\{u\}$.  We raise these variables uniformly until some constraint
(say the one for $e = \{w,z\}$) becomes tight.  At this point we add
$e$ to our current primal solution $H$.  We now change the active dual
variables by ``merging'' moats that cross $e$.  In particular, there are some active variables $\{y_S^{uv}\}$ where $e \in \delta_{uv}(S)$ (which implies
that $w,z \in S(u,v)$ as well).  Let $H|_{S(u,v)}$
denote the subgraph of $H$ induced on
$S(u,v)$.  Without loss of generality we can assume that $w \in S$ and
$z \not\in S$.  Let $T \subset S(u,v)$ be the connected component of
$H|_{S(u,v)}$ containing $z$.  We now make $y_S^{uv}$ inactive, and
make $y_{S \cup T}^{uv}$ active.  We do this for all such active
variables, and then repeat this process (incrementing all dual
variables until some dual constraint becomes tight, adding that edge
to $H$, and then merging moats that cross it) until all pairs $(u,v)
\in D$ have a safe path in $H$.

\begin{lemma} \label{lem:PD_feasible} This algorithm always maintains
  a feasible dual solution and an active set that does not contribute
  to any tight constraint.
\end{lemma}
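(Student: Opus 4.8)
The plan is to prove the two claims separately by induction on the execution of the algorithm, and the key invariant that ties them together is a structural description of the active dual variables: at every point in time, for each pair $(u,v) \in D$ there is exactly one active variable $y_S^{uv}$, where $S$ is the union of $\{u\}$ with all connected components of $H|_{S(u,v)}$ that are ``pulled in'' by the merging steps, and crucially $S$ is always a union of connected components of $H|_{S(u,v)}$ together with possibly the isolated vertex $u$; in particular $v \notin S$ until the pair is satisfied and its variable is frozen. This invariant is preserved by the merge operation essentially by construction: when we replace $y_S^{uv}$ by $y_{S \cup T}^{uv}$, where $T$ is the component of $H|_{S(u,v)}$ containing the far endpoint $z$ of the newly bought edge $e$, the set $S \cup T$ is again a union of components of the (updated) induced subgraph, and $v \notin S \cup T$ precisely because $e \in \delta_{uv}(S)$ was a constraint we were still raising, which only happens while the pair is unsatisfied.

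For feasibility of the dual, I would argue that no dual constraint is ever violated. Constraints are only ever made tight, never exceeded, because we raise all active variables at a uniform rate and stop the instant the first constraint $\sum y \le 1$ for some edge $e=\{w,z\}$ reaches equality; the only subtlety is to check that after buying $e$ and performing the merges, the left-hand side of the constraint for $e$ does not increase — otherwise a later uniform increase would push it past $1$. This is exactly the content of the second claim, so really the two halves of the lemma are proved together. The point is that after we add $e$ to $H$, the two components of $H|_{S(u,v)}$ on either side of $e$ (for any pair $(u,v)$ whose active moat crossed $e$) merge into one, so $e$ is now \emph{interior} to that component: $e \notin \delta_{uv}(S \cup T)$. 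Hence none of the reconstituted active variables $y_{S\cup T}^{uv}$ contributes to the constraint for $e$, so the active set contributes $0$ to that (now tight) constraint, and more generally, for \emph{every} edge $e'$ that is currently tight, $e'$ lies inside some component of each relevant $H|_{S(u,v)}$ and therefore is not in $\delta_{uv}(S)$ for any active $S$. This is the ``active set does not contribute to any tight constraint'' statement, and it is exactly what licenses the next round of uniform increase.

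The main obstacle I expect is bookkeeping the merges correctly when a single bought edge $e$ simultaneously triggers merges for many different pairs $(u,v)$, and when an edge becomes tight that is \emph{not} in any active moat's boundary (so that buying it changes nothing and we must argue the algorithm still makes progress / the invariant is vacuously maintained) — and, dually, arguing that a tight edge which later becomes interior to some moat stays with zero contribution even as that moat grows. I would handle this by phrasing the invariant purely in terms of the current graph $H$ and the current induced components, so that it is manifestly re-established after each batch of merges regardless of how many pairs were affected, and by noting that the merging rule only ever \emph{adds} whole components to $S$, so the boundary $\delta_{uv}(S)$ only loses edges (namely those that became interior) and never gains a previously-tight edge. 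Termination is immediate: each bought edge strictly decreases $\sum_{(u,v)\in D}(\text{number of components of }H|_{S(u,v)}$ separating $u$ from $v)$, which is a nonnegative integer potential, so the process ends with every pair in $D$ safely connected while the dual stays feasible throughout.
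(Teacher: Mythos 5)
Your proposal is correct and follows essentially the same route as the paper's proof: induction over iterations, feasibility from stopping at the first tight constraint, and the observation that each newly activated variable $y_{S\cup T}^{uv}$ contains both endpoints of the newly bought edge (and of all previously bought edges, since $T$ is a whole component of $H|_{S(u,v)}$), so the active set contributes to no tight constraint. The extra invariant you maintain and the termination argument are harmless elaborations of what the paper states more tersely.
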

\begin{proof}
  We will show this by induction, where the inductive hypothesis is
  that the dual solution is feasible and that no dual variables that
  contribute to a tight constraint are active.  Initially all dual
  variable are $0$, so it is obviously a feasible solution and no
  constraints are tight.  Now suppose this is true after we add some
  edge $e'$.  We need to show that it is also true after we add the
  next edge $e = \{w,z\}$.  By induction the dual solution after we
  added $e'$ is feasible and none of the active dual variables
  contribute to any tight constraints.  Thus raising the active dual
  variables until some constraint becomes tight maintains dual
  feasibility.

  To prove that no active variables contribute to a tight constraint,
  note that the only new tight constraint is the one corresponding to
  $e$.  The only variables contributing to that constraint are of the
  form $y_S^{uv}$ where $e \in \delta_{uv}(S)$.  But our algorithm
  made all of these variables inactive, and only added new active
  variables for sets $S'$ that contain both $w$ and $z$ and thus do
  not contribute to the newly tight constraint.  Furthermore, these
  sets $S'$ are formed by the union of $S$ and the connected component
  in $H|_{S(u,v)}$ containing the other endpoint, so no newly active
  variable contributes to a constraints that became tight previously
  (since they correspond to edges in $H$).
\end{proof}

\begin{theorem} \label{thm:PD_main}
  The primal-dual algorithm returns a graph $H$ with at most $O(t^2)
  \times OPT$ edges in which every pair $(u,v)$ with $|S(u,v)| \leq t$
  has a safe path.
\end{theorem}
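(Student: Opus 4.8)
The plan is the standard primal-dual charging argument (much like the one for Steiner Forest), where the only new wrinkle is that each tight constraint forces us to pay for an entire safe-set cut rather than for a single forest edge, and that is exactly what costs a factor of $O(t^2)$.

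First I would dispatch correctness and termination. I would observe that throughout the run the single active moat of a pair $(u,v)\in D$ is precisely the connected component of $u$ in $H|_{S(u,v)}$ (this is preserved by the merge step, since the newly absorbed component $T$ is glued onto $u$'s component through the tight edge), so it is active exactly while $(u,v)$ has no safe path in $H$; in that case it is a proper subset of $S(u,v)$ not containing $v$, so on $K_n$ we have $\delta_{uv}(S_{uv})\ne\emptyset$, and by Lemma~\ref{lem:PD_feasible} none of those edges' constraints are tight (the active variable $y^{uv}_{S_{uv}}$ contributes to each of them), so the active duals really can be raised until some constraint tightens. Since each iteration adds a new edge of $K_n$ and a tight constraint never tightens again, the algorithm halts after at most $\binom n2$ iterations, by construction exactly when every pair of $D$ has a safe path.

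Then I would bound the cost. By Lemma~\ref{lem:PD_feasible} the final $\{y^{uv}_S\}$ is feasible for the dual above, which is the LP dual of the cut LP with its constraints restricted to pairs in $D$; that restricted cut LP is a relaxation of safely connecting all pairs of $D$ (take $x_e=1$ on the edges of any such solution), which is no harder than safely connecting all pairs, so weak duality gives $\sum_{(u,v)\in D}\sum_{S\in\mathcal S(u,v)} y^{uv}_S\le OPT$. On the other hand, every edge of $H$ was bought when its dual constraint first reached $1$, and by Lemma~\ref{lem:PD_feasible} it stays at $1$ thereafter, so, swapping the order of summation, the computation I have in mind is
\begin{align*}
|E(H)| \;=\; \sum_{e\in E(H)} 1
&\;=\; \sum_{e\in E(H)}\ \sum_{(u,v)\in D}\ \sum_{\substack{S\in\mathcal S(u,v)\\ e\in\delta_{uv}(S)}} y^{uv}_S\\
&\;=\; \sum_{(u,v)\in D}\ \sum_{S\in\mathcal S(u,v)} y^{uv}_S\,\bigl|\delta_{uv}(S)\cap E(H)\bigr|\\
&\;\le\; \frac{t^2}{2}\sum_{(u,v)\in D}\ \sum_{S\in\mathcal S(u,v)} y^{uv}_S \;\le\; \frac{t^2}{2}\,OPT,
\end{align*}
where the third line uses the one new (and deliberately crude) estimate: since $S\subseteq S(u,v)$, every edge of $\delta_{uv}(S)$ has both endpoints in $S(u,v)$, so $\bigl|\delta_{uv}(S)\cap E(H)\bigr|\le|\delta_{uv}(S)|\le\binom{|S(u,v)|}2\le\binom t2\le t^2/2$. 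This gives $|E(H)|\le O(t^2)\cdot OPT$, as claimed.

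I do not expect any single step to be hard; the care is all in the setup — making sure the restricted cut LP is a genuine relaxation (so that weak duality bounds the dual value by $OPT$) and that every bought edge's constraint remains exactly tight (which is precisely Lemma~\ref{lem:PD_feasible}). The $t^2$ loss is where this differs from Steiner Forest: there one exploits that the moats are the connected components of a forest and charges through an ``average degree $\le 2$'' bound, but here we simply pay for every one of the up to $\binom t2$ safe edges crossing each active cut, and the connected-component structure of the moats is used only for feasibility, not for the ratio. A sharper analysis would try to recover an average-degree argument relative to the induced subgraphs $H|_{S(u,v)}$; but $O(t^2)$ already suffices, since combining this theorem with star sampling for the pairs with $|S(u,v)|>t$ and balancing $t^2$ against $\tilde O(n^2/t)/(n-1)$ yields the $\tilde O(n^{2/3})$-approximation.
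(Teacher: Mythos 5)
Your proposal is correct and follows essentially the same route as the paper: feasibility of the final dual via Lemma~\ref{lem:PD_feasible}, tightness of the constraint for every bought edge, swapping the order of summation, bounding $|\delta_{uv}(S)\cap E(H)|$ by $O(t^2)$ using $|S(u,v)|\le t$, and closing with weak duality against the (restricted) cut LP. The extra care you take with termination and with the moat/connected-component correspondence is a more explicit version of what the paper dispatches in one sentence, and your $\binom{t}{2}$ bound is a cleaner statement of the paper's $O(t^2)$ estimate.
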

\begin{proof}
  After every iteration of the algorithm all of the tight constraints
  are added to $H$, which together with Lemma \ref{lem:PD_feasible}
  implies that the algorithm never gets stuck.  Thus it will run until
  every pair $u,v$ with $|S(u,v)| \leq t$ has a safe path.  It just
  remains to show that the total number of edges returned is at most
  $O(t^2) \times OPT$.  To see this, note that every edge in $H$
  corresponds to a tight constraint in the feasible dual solution we
  constructed, so if $e \in H$ then $\sum_{u,v : e \in S(u,v)} \sum_{S
    \in \mathcal{S}(u,v) : e \in \delta_{uv}(S)} y_S^{uv} = 1$.  Thus
  we have that
  \begin{align*}
    |H| &= \sum_{e \in H} 1 = \sum_{e \in H} \sum_{(u,v) \in D : e \in S(u,v)}
    \sum_{S \in \mathcal{S}(u,v) : e \in \delta_{uv}(S)} y_S^{uv} \\
    &= \sum_{(u,v) \in D} \sum_{S \in \mathcal{S}(u,v)} \sum_{e \in
      \delta_{uv}(S) \cap H} y_S^{uv} \\
    &= \sum_{(u,v) \in D} \sum_{S \in \mathcal{S}(u,v)} |H \cap
    \delta_{uv}(S)| y_S^{uv} \\
    & \leq t^2 \sum_{(u,v) \in D} \sum_{S \in \mathcal{S}(u,v)} y_S^{uv}
    \\
    & \leq t^2 \times OPT
  \end{align*}
  where the last inequality is by duality, and the next to last
  inequality is because $|H \cap \delta_{uv}(S)| \leq
  {|\delta_{uv}(S)| \choose 2} \leq t^2$ (since $(u,v) \in
  D$).
\end{proof}

\begin{lemma} \label{lem:PD_time}
The primal-dual algorithm takes at most $\tilde{O}(n^6)$ time.
\end{lemma}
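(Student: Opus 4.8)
The plan is to bound separately the number of iterations of the primal-dual loop and the work done in a single iteration, and then multiply the two bounds.

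\textbf{Number of iterations.} Each iteration raises the currently active dual variables uniformly until some dual constraint, say the one for an edge $e$, becomes tight, and then permanently adds $e$ to the primal solution $H$. By Lemma~\ref{lem:PD_feasible} a tight constraint never has an active variable contributing to it, so the edge that becomes newly tight is always one not already in $H$; since $H$ is a subgraph of $K_n$ and edges are never removed, the loop therefore executes at most $\binom{n}{2} = O(n^2)$ times. (Theorem~\ref{thm:PD_main} guarantees that the loop does not get stuck before every pair in $D$ has a safe path, so this is a genuine bound on termination.)

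\textbf{Work per iteration.} The structural fact I would establish first is that at every moment there is at most one active dual variable per pair $(u,v) \in D$: the algorithm starts with the single active variable $y_{\{u\}}^{uv}$ for each pair, and a merge step replaces $y_S^{uv}$ by exactly one new variable $y_{S \cup T}^{uv}$, or by none once $u$ and $v$ lie in the same connected component of $H|_{S(u,v)}$. Hence there are never more than $|D| \le \binom{n}{2} = O(n^2)$ active dual variables. With this in hand, the next event can be found cheaply: for each of the $O(n^2)$ candidate edges I compute the growth rate of the left-hand side of its dual constraint, namely the number of active variables $y_S^{uv}$ with $e \in \delta_{uv}(S)$; iterating over all active variables and, for each, over the at most $O(n^2)$ edges of its cut $\delta_{uv}(S)$, this costs $O(n^4)$, after which the minimizing edge and the step size are obtained in $O(n^2)$ and the stored constraint slacks are updated within the same $O(n^4)$ budget. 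The merge step triggered by the newly tight edge $e = \{w,z\}$ requires, for each of the $O(n^2)$ pairs whose active variable crosses $e$, a single connected-component computation in $H|_{S(u,v)}$, a graph on at most $n$ vertices and hence $O(n^2)$ time, for $O(n^4)$ in total; the check whether all pairs in $D$ are now satisfied can be folded into this. Allowing a $\polylog(n)$-type overhead for the bookkeeping and for arithmetic on the rational dual values, each iteration runs in $\tilde{O}(n^4)$ time.

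\textbf{Combining.} Multiplying $O(n^2)$ iterations by $\tilde{O}(n^4)$ work each yields the claimed $\tilde{O}(n^6)$ running time. The one place I expect to need care is the $O(n^2)$ bound on the number of simultaneously active dual variables, since the whole per-iteration estimate rests on it; but it is immediate from the description of the merge operation, which never increases the count of active variables attached to a given demand pair. Everything else is routine accounting for the na\"{i}ve implementation, and is deliberately loose — maintaining the rates and slacks incrementally rather than recomputing them would shave off polynomial factors, but $\tilde{O}(n^6)$ is already far below the $\Omega(n^{12.5})$ cost of solving the flow LP directly, which is the comparison the paper cares about.
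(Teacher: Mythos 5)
Your proposal is correct and follows essentially the same route as the paper: at most $O(n^2)$ iterations since each adds a new edge, $O(n^2)$ active dual variables (the paper phrases this as at most two active moats per demand, corresponding to the two ordered versions of each pair), and $\tilde{O}(n^4)$ work per iteration dominated by recomputing the dual constraint values. The extra detail you give on the merge step and slack updates is consistent with, and slightly more explicit than, the paper's argument.
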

\begin{proof}
The primal-dual algorithm adds at least one new edge per iteration, so there can be at most $n^2$ iterations.  In each iteration we have to figure out the current value of every dual constraint and the number of active variables in each constraint, which together will imply what the next tight constraint is and how much to raise the $y$ variables.  We then need to raise the active variables by that amount and merge moats.  Note that for every demand there are at most two active moats, so the total number of active variables is at most $O(n^2)$.  Thus each iteration can be done in time $O(n^4)$, where the dominant term is the time taken to calculate the value of each dual constraint.  So the total time is $\tilde{O}(n^6)$, where there are extra poylogarithmic terms due to data structure overhead.
\end{proof}

Now we can trade this off with the random sampling solution for large
safe sets to get an actual approximation algorithm:

\begin{theorem} \label{thm:PD_total}
  There is a $\tilde{O}(n^{2/3})$ approximation algorithm for the
  {\sc Constrained Connectivity-Sum} problem on $K_n$ that runs in time $\tilde{O}(n^6)$.
\end{theorem}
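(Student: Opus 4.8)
The plan is to combine two ingredients that are already in hand: the primal-dual algorithm of Theorem~\ref{thm:PD_main}, which cheaply handles all pairs with \emph{small} safe sets, and the star-sampling procedure of Lemma~\ref{lem:star_sample}, which cheaply handles all pairs with \emph{large} safe sets. The only real work is to run both at a common threshold $t$ chosen to balance their costs, and to check that the union of the two subgraphs is feasible.

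Concretely, I would first run the primal-dual algorithm with parameter $t$ to obtain a subgraph $H_1$; by Theorem~\ref{thm:PD_main} it has at most $O(t^2) \times OPT$ edges and every pair $(u,v)$ with $|S(u,v)| \le t$ has a safe path in $H_1$. Then, independently, I would perform star sampling with probability $p = 3\ln n / t$ to obtain a subgraph $H_2$; applying Lemma~\ref{lem:star_sample} with $s = t$, with high probability every pair $(u,v)$ with $|S(u,v)| \ge t$ has a safe path in $H_2$. Since every pair $(u,v)$ falls into one of these two regimes, a union bound shows that $H = H_1 \cup H_2$ is, with high probability, a feasible solution to {\sc Constrained Connectivity-Sum}.

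For the cost, $H_2$ contains in expectation $3 n \ln n / t$ stars and hence at most $3 n^2 \ln n / t$ edges; since there is a demand for every pair we have $OPT \ge n-1$, so $|H_2| \le \tilde{O}(n/t) \times OPT$. Combined with $|H_1| \le O(t^2) \times OPT$, the total cost is $\tilde{O}\!\left(t^2 + n/t\right) \times OPT$. Choosing $t = n^{1/3}$ makes $t^2 = n^{2/3}$ and $n/t = n^{2/3}$, so both terms are $\tilde{O}(n^{2/3}) \times OPT$, giving the claimed approximation ratio; a standard Markov/Chernoff argument (or simply a constant number of repetitions) upgrades the expected-cost bound to a high-probability bound. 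The running time is dominated by the primal-dual phase, which is $\tilde{O}(n^6)$ by Lemma~\ref{lem:PD_time}, while the sampling phase costs only $O(n^2)$; hence the total is $\tilde{O}(n^6)$.

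There is no serious obstacle here, since both components and their guarantees are established earlier. The one point that needs care is purely bookkeeping: the primal-dual guarantee and the star-sampling guarantee must use the \emph{same} threshold $t$, so that every pair is covered by at least one phase, and one should verify that intersecting the deterministic primal-dual guarantee with the high-probability sampling guarantee still yields feasibility of the union with high probability, which follows immediately from a union bound over the at most $n^2$ demand pairs.
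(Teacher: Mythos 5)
Your proposal is correct and is essentially identical to the paper's own proof: run the primal-dual algorithm at a threshold $t \approx n^{1/3}$ (the paper uses $t = O((n\log n)^{1/3})$) to handle small safe sets, then star-sample with $s = t$ to handle large ones, balancing the $O(t^2)\times OPT$ and $\tilde{O}(n/t)\times OPT$ costs at $\tilde{O}(n^{2/3})$. The running-time accounting via Lemma~\ref{lem:PD_time} also matches.
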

\begin{proof}
Our algorithm first runs the primal-dual algorithm with $t = O((n \log n)^{1/3})$.  By Theorem~\ref{thm:PD_main}, this returns a graph $H$ with at most $O((n \log n)^{2/3}) \times OPT$ edges in which there is a safe path for every $(u,v)$ with $|S(u,v)| \leq O((n \log n)^{1/3})$.  We then use the random star sampling of Lemma~\ref{lem:star_sample} with $s = \Omega((n \log n)^{1/3})$ and thus $p = O((\log n)^{1/3} / n^{2/3})$.  By Lemma~\ref{lem:star_sample} this satisfies the rest of the demands (the pairs $(u,v)$ with $|S(u,v)| \geq s$) with high probability, and the number of edges added is with high probability at most $O(p n^2) = O((n \log n)^{2/3} n) = O((n \log n)^{2/3}) \times OPT$ as desired.

The time bound follows from Lemma~\ref{lem:PD_time} together with the trivial fact that star sampling can be done in $O(n^6)$ time.
\end{proof}

\subsection{Simulations} \label{sec:Simulations}

In this section we discuss some the results of simulations using our
algorithms.  While we believe that the main contribution of this work
is theoretical, it is interesting that the algorithms are fast enough
to be practical and give solutions that are in practice far superior
to the worst case $\tilde{O}(n^{2/3})$ bound.

We implemented both the LP rounding and the primal-dual algorithm for
the {\sc iBGP-Sum} problem.  However, the rounding algorithm turned
out to be impractical, mainly due to memory constraints.  Recall that
in the compact version of the flow LP there is a flow variable
$f_{(u,v)}^{(x,y)}$ for every pair $(u,v)$ and $(x,y)$.  This variable
denotes the amount of flow from $u$ to $v$ along the edge $\{u,v\}$ for
the demand $(x,y)$.  There are also $\Theta(n^4)$ capacity
constraints.  So on even a modest size AS topology, say one with $50$
nodes, the linear program has over six million variables and
constraints.  Running on a commodity desktop, the memory used by CPLEX
merely to create and store this LP results in an extremely large
running time, even without attempting to solve it.  Our primal-dual
algorithm, on the other hand, only needs to keep track of $O(n^2)$
active dual variables and the current values of the $O(n^2)$ dual
constraints.  So we can actually run this algorithm on reasonably
sized graphs.

One change that we make from the theoretical algorithm is the tradeoff
with random sampling.  In the theoretical analysis we are only able to
get a nontrivial approximation bound by using the primal-dual
algorithm to handle small safe sets and random sampling to handle
large safe sets, but experimentation revealed that the simpler
algorithm of using the primal-dual technique to handle all safe sets
was sufficient.

\begin{table}
\begin{center}
\begin{tabular}{|c|l|c|c|}
\hline
AS & Name & Number of PoPs & Number of links\\
\hline
1221 & Telstra & 44 & 88 \\
1239 & Sprint & 52 & 168 \\
2914 & NTT & 70 & 222 \\
3257 & TINET & 41 & 174 \\
3356 & Level 3 & 63 & 570 \\
\hline
\end{tabular}
\end{center}
\caption{ISP Topologies Used}
\label{tab:topologies}
\end{table}

To test out this algorithm we ran it on five real-world ISP topologies
with link weights given by the Rocketfuel project~\cite{Rocketfuel04}.
Our implementation is still relatively slow, so we consider
Point-of-Presence level topologies rather than router-level
topologies.  We feel that this is not unrealistic, though, since in
practice the routers at a given PoP would probably just use a single
router at that PoP as a route reflector~\cite[Section
3.1]{POAMZ10}. The topologies we used are summarized in Table
\ref{tab:topologies}.

We compare the number of iBGP sessions used by a full mesh to the
number of edges in the overlay produced by the primal-dual algorithm.
We assume (conservatively) that all the nodes in the topology are
external BGP routers.  Our results are shown in Table
\ref{tab:results} and in Figure \ref{fig:main}.  These results show
that the primal-dual algorithm gives graphs that are much smaller than
the default full mesh.  Of course, we do not model additional
requirements such as fault-tolerance and stability, but the massive
gap suggests that even if adding extra requirements results in
doubling or tripling the size of the overlay we will still see a large
benefit over the full mesh.  Moreover, these results show that the
$\tilde{O}(n^{2/3})$ upper bound on the approximation ratio that we
proved in Section~\ref{sec:PD} is extremely pessimistic.  On these
actual topologies the primal-dual algorithm gives results that are
only slightly larger than $n$ (the worst case is for Level 3, in which
the primal-dual algorithm gives an overlay with about $2.75 \times n$
edges).  Since $n-1$ is an obvious lower bound (the overlays clearly
must be connected), this means that in practice our algorithm gives a
$O(1)$-approximation.

\begin{table}
\begin{center}
\begin{tabular}{|c|c|c|c|}
\hline
AS & full-mesh & Primal-Dual & Fraction of full-mesh \\
\hline
1221 & 946 & 44 & 4.65\% \\
1239 & 1326 & 83 & 6.26\% \\
2914 & 2415 & 109 & 4.5\% \\
3257 & 820 & 75 & 9.15\% \\
3356 & 1953 & 173 & 8.86\% \\
\hline
\end{tabular}
\end{center}
\caption{Primal-Dual vs.\ full-mesh}
\label{tab:results}
\end{table}

\begin{figure}
\begin{center}
\includegraphics[scale=0.65]{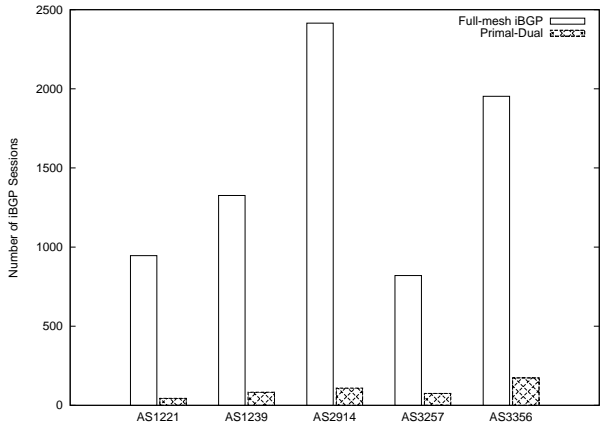}
\end{center}
\caption{Primal-Dual vs.\ full-mesh}
\label{fig:main}
\vspace*{-.2in}
\end{figure}

\section{Complexity of {\sc iBGP-Sum} and {\sc iBGP-Degree}} \label{sec:iBGP_hardness}

In this section we will show that the iBGP problems are $\Omega(\log
n)$-hard to approximate by a reduction from {\sc Hitting Set} (or
equivalently from {\sc Set Cover}).  This is a much weaker hardness
than the $2^{\log^{1-\epsilon} n}$ hardness that we prove for the general
Constrained Connectivity problems in Section~\ref{sec:CC}, but the iBGP problems are much more
restrictive.  We note that this $\Omega(\log n)$ hardness is easy to prove for Constrained Connectivity on $K_n$; the main difficulty is constructing a metric so that the geometrically defined safe sets of iBGP have the structure that we want.

We begin by giving a useful gadget that
encodes a {\sc Hitting Set} instance as an instance of an iBGP problem
in which all we care about is minimizing the degree of a particular
vertex.  We will then show how a simple combination of these gadgets
can be used to prove that {\sc iBGP-Degree} is hard to approximate,
and how more complicated modifications to the gadget can be used to
prove that {\sc iBGP-Sum} is hard to approximate.

Suppose we are given an instance of hitting set with elements $1,2,
\dots, n$ (note that we are overloading these as both integers and
elements) and sets $T_1, T_2, \dots, T_{m}$.  Our gadget will contain
a node $x$ whose degree we want to minimize, a node $a_i$ for all
elements $i \in \{1,\dots, n\}$, and a node $b_{T_j}$ for each set
$T_j$ in the instance.  We will also have four extra ``dummy" nodes:
$z,y,u$, and $h$.  The following table specifies some of the distances
between points.  All other distances are the shortest path graph
distances given these. Let $M$ be some large value (e.g.~$20$), and let $\epsilon$
be some extremely small value larger than $0$.

\begin{center}
\begin{tabular}{c|ccccccc}
 & x & z & y & $a_i$ & $b_{T_j}$ & u & h \\
\hline
x &  & M &  &  & $M+ 1.4 + j\epsilon$ & &  \\
z & M & &  $1.5$ & $1+i\epsilon$ &  & $2$ &  \\
y & & $1.5$ & & \\
$a_i$ & & $1+i\epsilon$& & & $1+(i+j)\epsilon$ (if $i \in T_j$) & $1.1$&  \\
$b_{T_j}$ & $M +1.4 + j\epsilon$ & & & $1+(i+j)\epsilon$ (if $i \in T_j$) & & & $1+j\epsilon$ \\
$u$ & & $2$ & & $1.1$\\
$h$ &  & & & & $1+j\epsilon$
\end{tabular}
\end{center}

It is easy to check that this is indeed a metric space.  Informally,
we want to claim that any solution to the iBGP problems on this
instance must have an edge from $x$ to $a_i$ nodes such that the
associated elements $i$ form a hitting set.  Here $y,u$, and $h$ are nodes that force the safe sets into the form we want, and $z$ is used to guarantee the existence of a small solution.

\begin{lemma} \label{lem:gadget_necessary} Let $E$ be any feasible
  solution to the above iBGP instance.  For every vertex $b_{T_j}$
  there is either an edge $\{x, b_{T_j}\} \in E$ or an edge $\{x,
  a_i\} \in E$ where $i \in T_j$
\end{lemma}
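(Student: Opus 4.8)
The plan is to use Theorem~\ref{thm:iBGP_safe_sets}: a feasible iBGP solution $E$ must contain, for every pair $(p,q)$, a safe signaling path from $q$ to $p$ using only vertices in $S(p,q)$. So I would apply this to the pair $(x, b_{T_j})$ and argue that the only vertices that can legitimately appear on a safe path from $b_{T_j}$ to $x$ are $x$, $b_{T_j}$ itself, and the $a_i$ with $i \in T_j$. If that is established, then the first edge of that path leaving $b_{T_j}$ must go either directly to $x$ or to some $a_i$ with $i \in T_j$; in the latter case I still need to know the path eventually reaches $x$, but since $S(x,b_{T_j})$ contains only $x$, $b_{T_j}$, and those $a_i$'s, the path must at some point take an edge into $x$, and that edge is incident to either $b_{T_j}$ or an $a_i$ with $i \in T_j$ — which is exactly the claim. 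Actually the cleanest phrasing: the path from $b_{T_j}$ to $x$ has a last edge entering $x$; its other endpoint is in $S(x,b_{T_j}) \setminus \{x\}$, so if I show $S(x,b_{T_j}) \subseteq \{x, b_{T_j}\} \cup \{a_i : i \in T_j\}$, I am done.

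So the real work is the computation of the safe set $S(x, b_{T_j}) = \{w : d(w, b_{T_j}) < d(w, D(x, b_{T_j}))\} \cup \{b_{T_j}\}$, where $D(x, b_{T_j}) = \{w : d(x,w) > d(x, b_{T_j})\}$. First I would pin down $D(x, b_{T_j})$: since $d(x, b_{T_j}) = M + 1.4 + j\epsilon$ and $d(x,z) = M$, the vertex $z$ is \emph{not} in $D(x,b_{T_j})$ (it is closer to $x$), but $y$, $u$, $h$, and the $b_{T_k}$ with $k > j$ should all be far enough from $x$ — going through $z$ costs $M$ plus the short local distances, so $d(x,y) = M + 1.5$, $d(x,u) = M + 2$, $d(x,a_i) = M + 1 + i\epsilon$, and $d(x,h), d(x, b_{T_k})$ computed via shortest paths — I need to check which of these exceed $M + 1.4 + j\epsilon$. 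Roughly: $y, u$ are in $D$; the $a_i$'s are in $D$ iff $1 + i\epsilon > 1.4 + j\epsilon$, i.e. essentially never for the relevant $\epsilon$ scale, so $a_i \notin D(x,b_{T_j})$; and $h$ and $b_{T_k}$ need case analysis on whether their shortest path to $x$ goes through the $a_i$–$b_{T_j}$ neighborhood or through $z$. Then, for $w$ to be safe it must be strictly closer to $b_{T_j}$ than to every vertex in $D(x,b_{T_j})$. I expect: $x$ itself is safe; each $a_i$ with $i \in T_j$ is safe (distance $1 + (i+j)\epsilon$ to $b_{T_j}$, and I must verify it is closer to $b_{T_j}$ than to $y$, $u$, and the other far vertices); and every other vertex ($z$, $y$, $u$, $h$, $a_i$ with $i \notin T_j$, and $b_{T_k}$ for $k \neq j$) fails the safety test because it is at least as close to some vertex of $D(x,b_{T_j})$ as to $b_{T_j}$ — e.g. $z$ is at distance $\approx 1 + i\epsilon$ from some $a_i$ but farther from $b_{T_j}$; $h$ is at distance $1 + j\epsilon$ from $b_{T_j}$ but also $1 + k\epsilon$ from $b_{T_k}$ for other $k$ and one of those is in $D$; an $a_i$ with $i \notin T_j$ has no short edge to $b_{T_j}$ at all. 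These are the checks the roles of $y, u, h$ are designed to make come out right.

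The main obstacle is purely bookkeeping: correctly computing all the shortest-path distances from the given table (in particular $d(x, \cdot)$, which routes through $z$, and the distances among $a_i$, $b_{T_j}$, $h$), and then verifying the strict inequalities defining $S(x, b_{T_j})$ for every vertex — being careful with the $\epsilon$-order terms (so that, say, $1 + (i+j)\epsilon < 1.1$ and the various $j\epsilon$ comparisons resolve as intended) and with the large parameter $M$. I would organize this as: (1) state $d(x,w)$ for all $w$; (2) deduce $D(x,b_{T_j})$; (3) for each $w \notin \{x, b_{T_j}\} \cup \{a_i : i\in T_j\}$ exhibit a witness $w' \in D(x,b_{T_j})$ with $d(w,w') \le d(w, b_{T_j})$, hence $w \notin S(x,b_{T_j})$; (4) conclude via the last-edge argument above. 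No step is deep, but step (3) is where the gadget's design is actually used and is the place an error would hide.
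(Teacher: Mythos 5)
Your proposal takes essentially the same route as the paper: the paper also reduces the lemma to computing $S(x,b_{T_j})$ exactly, first identifying $D(x,b_{T_j}) = \{y,u,h\}\cup\{b_{T_k}:k>j\}$ (with $z$ and all $a_i$ inside the ball), then verifying each $a_i$ with $i\in T_j$ is safe and exhibiting, for every other vertex, a witness in $D(x,b_{T_j})$ at distance at most its distance to $b_{T_j}$. One detail to correct when you carry out your step (3): your suggested witness for $z$ (a nearby $a_i$) cannot work, since the $a_i$ lie inside $B(x,d(x,b_{T_j}))$ and hence are not in $D(x,b_{T_j})$; the witness the construction intends is $y$, with $d(z,y)=1.5 < 2 \le d(z,b_{T_j})$.
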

\begin{proof}
  We will prove this by analyzing $S(x,b_{T_j})$.  If we can show that
  $S(x, b_{T_j}) = \{x, b_{T_j}\} \cup \{a_i : i \in T_j\}$ then we
  will be finished.  Note that $d(x, b_{T_j}) = M + 1.4 + j\epsilon$,
  so the vertices outside $B(x, d(x, b_{T_j}))$ are $y$ (distance
  $M+1.5$ from $x$), $u$ (distance $M+2$ from $x$), $h$ (distance at least
  $M+2.4$ from $x$), and $b_{T_k}$ with $k > j$ (distance
  $M+1.4+k\epsilon$ from $x$).  The vertices inside the ball are
  $x,z$, all $a_i$ nodes, and $b_{T_k}$ with $k \leq j$.

Obviously $x$ and $b_{T_j}$ are in $S(x, b_{T_j})$ by definition.  Let
$a_i$ be a vertex with $i \in T_j$.  It is easy to verify that $a_i$
is closer to $b_{T_j}$ than to any vertex outside of the ball: it has
distance $1+ (i+j)\epsilon$ from $b_{T_j}$, distance $1+(i+k)\epsilon$
from $b_{T_k}$ with $k > j$, distance $2.5 + i\epsilon$ from $y$,
distance $1.1$ from $u$, and distance greater than $2$ from $h$.  So
$a_i \in S(x, b_{T_j})$ as required.  On the other hand, suppose $i
\not\in T_j$.  Then $d(a_i, b_{T_j}) > 2$, while $d(a_i, u) = 1.1$, so
$a_i \not\in S(x, b_{T_j})$.  Similarly, any vertex $b_{T_k}$ with $k
< j$ is closer to $h$ (distance $1+j\epsilon$) than to $b_{T_j}$
(distance at least $2$) and $z$ is closer to $y$ (distance $1.5$) than
to $b_{T_j}$ (distance at least $2$).  Thus $S(x, b_{T_j}) = \{x,
b_{T_j}\} \cup \{a_i : i \in T_j\}$, so $E$ must include an edge from
$x$ to either $b_{T_j}$ or an $a_i$ with $i \in T_j$.
\end{proof}

We now want to use this gadget to prove logarithmic hardness for {\sc
  iBGP-Sum}.  We will use the basic gadget but will duplicate $x$.  So there
will be $\ell$ copies of $x$, which we will call $x_1, x_2, \dots,
x_{\ell}$, and their distances are defined to be $d(x_i, z) = M +
i\epsilon$ and $d(x_i, b_{T_j}) = M + 1.4 + (i+j)\epsilon$ with all
other distances defined to be the shortest path.  Note that all we did
was modify the gadget to ``break ties'' between the $x_i$'s.  Also
note that the shortest path between $x_i$ and $x_j$ is through $z$, for
a total distance of $2M + (i+j)\epsilon$.  As before, let $H$ be the
smallest hitting set.

\begin{lemma} \label{lem:ibgp-sum-necessary}
  Any feasible {\sc iBGP-Sum} solution has at least $\ell |H|$ edges.
\end{lemma}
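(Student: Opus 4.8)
The plan is to reduce to the single-$x$ gadget analyzed in Lemma~\ref{lem:gadget_necessary}. The key observation is that the tie-breaking modification ($d(x_i,z) = M + i\epsilon$, $d(x_i, b_{T_j}) = M + 1.4 + (i+j)\epsilon$) does not change the essential structure of the safe sets $S(x_i, b_{T_j})$: the vertices $x_1, \dots, x_\ell$ and $z$ are all at distance roughly $M$ from each $a_i$ and each $b_{T_j}$, while the dummy nodes $y, u, h$ still play exactly the same role they played in the original gadget. So first I would redo the computation of Lemma~\ref{lem:gadget_necessary} for each fixed $i$, showing that $S(x_i, b_{T_j}) = \{x_i, b_{T_j}\} \cup \{a_k : k \in T_j\}$. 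The main point to check is that no other $x_{i'}$ with $i' \neq i$ lands in $S(x_i, b_{T_j})$, and that the inclusion/exclusion of the $a_k$ nodes and the $b_{T_k}$ nodes is governed by the same distances to $u$ and $h$ as before — the $i\epsilon$ perturbations are too small to disturb any of the strict inequalities used in that proof.

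Once the safe sets are pinned down, the argument is immediate: by Lemma~\ref{lem:gadget_necessary} applied with $x_i$ in the role of $x$ (or more precisely by rerunning its proof), for every $i \in \{1,\dots,\ell\}$ and every set $T_j$, any feasible solution $E$ must contain either the edge $\{x_i, b_{T_j}\}$ or an edge $\{x_i, a_k\}$ with $k \in T_j$. Fix $i$. Let $A_i = \{k : \{x_i, a_k\} \in E\}$ and let $B_i = \{j : \{x_i, b_{T_j}\} \in E\}$. The feasibility condition says that for every $j$ either $j \in B_i$ or $A_i \cap T_j \neq \emptyset$; replacing each $b_{T_j}$-edge by an arbitrary $a_k$-edge with $k \in T_j$ (such a $k$ exists since hitting set instances may be assumed to have no empty sets) shows that $A_i \cup (\text{one element per } T_j \text{ with } j \in B_i)$ is a hitting set, so $|A_i| + |B_i| \geq |H|$. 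Hence the number of edges of $E$ incident to $x_i$ is at least $|H|$.

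Finally I would sum over $i$. Since the $x_i$ are distinct vertices, an edge can be incident to at most\ldots well, at most two of them, but in fact edges of the form $\{x_i, a_k\}$ or $\{x_i, b_{T_j}\}$ are incident to exactly one $x_i$ (the $a_k$ and $b_{T_j}$ are not among the $x$'s), and edges $\{x_i, x_{i'}\}$ don't help satisfy any of these constraints since neither $x_i$ nor $x_{i'}$ lies in any $S(x_{i''}, b_{T_j})$. So the sets of "useful" edges incident to distinct $x_i$'s are disjoint, and summing the per-$x_i$ lower bound gives $|E| \geq \ell |H|$. The one step that requires genuine care — and the place I expect the bookkeeping to be most delicate — is the first: verifying that with $\ell$ copies of $x$ present, the safe set $S(x_i, b_{T_j})$ really does exclude all the other $x_{i'}$'s and is otherwise unchanged, because this is where the geometry of the perturbed metric has to be checked against every competing vertex.
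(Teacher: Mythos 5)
Your proposal is correct and follows essentially the same route as the paper: verify that the perturbed metric leaves each $S(x_i, b_{T_j})$ equal to $\{x_i, b_{T_j}\} \cup \{a_k : k \in T_j\}$ (with the other $x_{i'}$ excluded because $d(x_i,x_{i'}) = 2M + (i+i')\epsilon$ puts them outside the ball), conclude that each $x_i$ needs at least $|H|$ incident edges to $a$/$b$ nodes, and sum over the $\ell$ disjoint edge sets. Your explicit justification of the disjointness in the final summation is a detail the paper leaves implicit, but the argument is the same.
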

\begin{proof}
  It is easy to see that Lemma \ref{lem:gadget_necessary} still holds,
  i.e.~that $S(x_i, b_{T_j}) = \{x_i, b_{T_j}\} \cup \{a_k : k \in
  T_j\}$.  Intuitively this is because all other $x$ nodes are outside
  of $B(x_i, d(x_i b_{T_j}))$ and all distances from $x$ to the gadget
  are the same as before except with an additional $i\epsilon$.  This
  implies that the number of $a_k$ and $b_{T_j}$ nodes adjacent to
  $x_i$ in any feasible solution must be at least $|H|$, since if
  there were fewer such adjacent nodes it would imply the existence of
  a smaller hitting set (any $b_{T_j}$ nodes adjacent to $x_i$ could
  just be covered using an arbitrary element in $T_j$ at the same cost
  as using the set itself).  Thus the total number of edges must be at
  least $\ell |H|$.
\end{proof}

\begin{lemma} \label{lem:ibgp-sum-sufficient}
  There is a feasible {\sc iBGP-Sum} solution with at most $\ell |H| +
  \ell + (m+n+4)^2$ edges.
\end{lemma}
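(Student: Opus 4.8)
The plan is to exhibit an explicit feasible {\sc iBGP-Sum} solution and count its edges, breaking the count into three pieces corresponding to the three summands $\ell|H|$, $\ell$, and $(m+n+4)^2$. First I would take $H$ to be a minimum hitting set and, for each copy $x_i$ of $x$, add the $|H|$ edges $\{x_i, a_k\}$ for all $k \in H$. This contributes $\ell|H|$ edges and is clearly the ``essential'' part of the solution, designed to match the lower bound of Lemma~\ref{lem:ibgp-sum-necessary}. Next, to connect the $x_i$'s into the rest of the graph I would add the $\ell$ edges $\{x_i, z\}$, one per copy; since the shortest path between any $x_i$ and $x_j$ goes through $z$, and $z$ is close to the $a_k$ nodes, this is the cheap ``backbone'' that ties the $x_i$ cluster to the gadget. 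Finally, for the gadget itself — the nodes $a_i$, $b_{T_j}$, and the four dummies $z,y,u,h$, a total of $m+n+4$ nodes — I would simply take the complete graph on them, contributing at most $\binom{m+n+4}{2} \le (m+n+4)^2$ edges.

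The bulk of the work is then verifying feasibility: for every pair $(p,q)$ of vertices I must check that the chosen edge set $E$ contains a path from $p$ to $q$ lying entirely inside $S(p,q)$. I would organize this by cases on the types of the endpoints. The pairs with both endpoints among the $m+n+4$ gadget nodes are trivially satisfied because we included the full mesh on those nodes and every safe set contains its two endpoints. For a pair $(x_i, x_j)$, I would check that $S(x_i,x_j)$ contains $z$ (which should follow because $z$ lies strictly inside $B(x_i, d(x_i,x_j))$ and is closer to $x_j$ than to anything farther from $x_i$ than $x_j$ is), so the path $x_i - z - x_j$ is safe. For a pair $(x_i, v)$ with $v$ a gadget node, I would split further: if $v = b_{T_j}$, then by (the analogue of) Lemma~\ref{lem:gadget_necessary} we have $S(x_i,b_{T_j}) = \{x_i, b_{T_j}\} \cup \{a_k : k \in T_j\}$, and since $H$ hits $T_j$ there is some $k \in H \cap T_j$, giving the safe path $x_i - a_k - b_{T_j}$ (both edges are in $E$, and all three vertices lie in $S(x_i,b_{T_j})$). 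If $v$ is $z$, the single edge $\{x_i,z\}$ works; if $v$ is one of $a_k, y, u, h$, I would exhibit a short path through $z$ and/or a gadget vertex and argue it stays inside the relevant safe set, using the metric table and the full mesh on gadget nodes.

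The main obstacle I expect is precisely this last family of checks — verifying that the safe sets $S(x_i, a_k)$, $S(x_i,y)$, $S(x_i,u)$, $S(x_i,h)$ are large enough to contain the short connecting paths we want to use (e.g. $x_i - z - a_k$ or $x_i - z - \cdots$), since these safe sets were engineered in Lemma~\ref{lem:gadget_necessary} only with an eye toward the $b_{T_j}$ pairs. Concretely I would need to confirm that $z$ (and whatever intermediate $a$-node is used) lies in each such safe set, which amounts to checking, for each relevant pair, which vertices fall inside the appropriate ball around $x_i$ and comparing distances using the table — the same style of computation as in the proof of Lemma~\ref{lem:gadget_necessary}. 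Once all pairs are verified, the edge count $\ell|H| + \ell + (m+n+4)^2$ is immediate by summing the three groups, which completes the proof.
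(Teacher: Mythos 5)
Your construction is exactly the paper's: edges $\{x_i, a_k\}$ for $k \in H$, edges $\{x_i, z\}$, and a clique on the $m+n+4$ gadget nodes, with feasibility verified by the same case analysis (routing $x_i$--$b_{T_j}$ pairs through a hitting-set element $a_k$ and all other pairs involving an $x_i$ through $z$). The only thing to be careful about in carrying out your remaining checks is that the safe sets are asymmetric, so $S(x_i,\alpha)$ and $S(\alpha,x_i)$ must be verified separately, as the paper does.
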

\begin{proof}
  The solution is simple: create a clique on the $a_i, b_{T_j},z,u,y,h$
  nodes (which obviously has size at most $(m+n+4)^2$), include an
  edge from every $x_i$ to $z$ (another $\ell$ edges) and include an
  edge from every $x_i$ to every $a_k$ with $k \in H$ (another $\ell
  |H|$ edges).  Obviously there are the right number of edges in this
  solution, so it remains to prove that it is feasible.  To show this
  we partition the pairs into types and show that every pair in every
  type is satisfied.  The types are
  \begin{inparaenum}[\itshape 1\upshape)]
  \item $x_i - b_{T_j}$,
  \item $x_i - h$,
  \item $x_i - x_j$,
  \item $x_i - \alpha$ (where $\alpha$ is any other node in the gadget
    not included in a previous type), and
  \item $\alpha - x_i$
  \end{inparaenum}
  This is clearly an exhaustive partitioning, so we can just
  demonstrate that each type is satisfied in turn.

  For the first type we already showed that $S(x_i, b_{T_j})$ includes
  all $a_k$ where $k \in T_j$.  Since $H$ is a valid hitting set $x_i$
  must be adjacent to one such $a_k$, which in turn is adjacent to
  $b_{T_j}$, forming a valid safe path.  For the second type the only
  vertices outside $B(x_i, d(x_i,h))$ are $x_j$ with $j \neq i$, and
  $z$ is closer to $h$ than to any such $x_j$.  Thus $z \in S(x_i, h)$
  so the path $x_i - z - h$ in our solution is a valid safe path.  For
  the third type the vertices outside $B(x_i, d(x_i, x_j))$ are $\{x_k
  : k > j \text{ and } k \neq i\}$.  Because of the tie-breaking we
  introduced, $d(z, x_j) = M + j\epsilon$ while $d(z, x_k) = M + k
  \epsilon > M + j\epsilon$, and thus $z \in S(x_i, x_j)$ and so the
  path $x_i - z - x_j$ in our solution is a valid safe path.  The
  fourth type is even simpler, since $\alpha$ must be either $z,u,y$,
  or an $a_k$ node and the shortest path from $x_i$ to any of these is
  through $z$.  So $z \in S(x_i, \alpha)$ and $x_i - z - \alpha$ is a
  valid safe path.  Finally, for the last type the vertices outside
  $B(\alpha, d(\alpha, x_i))$ are $\{x_k : k > i\}$, and $z$ is closer
  to $x_i$ (distance $M + i\epsilon$) than any such $x_k$ (distance
  $M+k\epsilon$).  So again $z \in S(\alpha, x_i)$ and thus $\alpha -
  z - x_i$ is a valid safe path.
\end{proof}

\begin{theorem}
\label{thm:iBGP_sumhard}
  It is NP-hard to approximate {\sc iBGP-Sum} to a factor better than
  $\Omega(\log N)$, where $N$ is the number of vertices in the
  metric.
\end{theorem}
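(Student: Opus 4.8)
The plan is to complete the reduction from \textsc{Hitting Set} that has been set up through Lemmas~\ref{lem:ibgp-sum-necessary} and~\ref{lem:ibgp-sum-sufficient}. Recall that \textsc{Hitting Set} (equivalently \textsc{Set Cover}) is NP-hard to approximate to a factor better than $\Omega(\log m)$, where $m$ is the number of sets, even when the optimal hitting set size $|H|$ is polynomially bounded. Given such an instance with elements $1,\dots,n$ and sets $T_1,\dots,T_m$, I build the duplicated gadget described above with $\ell$ copies $x_1,\dots,x_\ell$ of the distinguished vertex, where $\ell$ is a polynomial in $m+n$ to be chosen. The total number of vertices in the metric is $N = \ell + m + n + 4$, and the metric is valid as already checked.

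The core of the argument is the two-sided bound. By Lemma~\ref{lem:ibgp-sum-necessary}, any feasible \textsc{iBGP-Sum} solution has at least $\ell|H|$ edges, so $\mathrm{OPT}_{\text{iBGP}} \ge \ell|H|$. By Lemma~\ref{lem:ibgp-sum-sufficient}, there is a feasible solution with at most $\ell|H| + \ell + (m+n+4)^2$ edges, so $\mathrm{OPT}_{\text{iBGP}} \le \ell|H| + \ell + (m+n+4)^2$. Choosing $\ell$ large enough that $\ell + (m+n+4)^2 = o(\ell|H|)$ — for instance $\ell = (m+n)^3$, which is fine since $|H| \ge 1$ and we may assume $|H|$ is not too small, or alternatively pad the hitting set instance so $|H| \ge 1$ always — makes the additive slack lower-order, so $\mathrm{OPT}_{\text{iBGP}} = (1+o(1))\,\ell|H|$. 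Conversely, given any feasible \textsc{iBGP-Sum} solution $E$, for each $x_i$ the set of $a_k$ and $b_{T_j}$ nodes adjacent to $x_i$ yields a hitting set of size at most $\deg_E(x_i)$ (replacing each adjacent $b_{T_j}$ by an arbitrary element of $T_j$, using $S(x_i,b_{T_j}) = \{x_i,b_{T_j}\}\cup\{a_k : k\in T_j\}$ from Lemma~\ref{lem:gadget_necessary}); hence $\min_i \deg_E(x_i) \ge |H|$ is forced, and more usefully $|E| \ge \ell |H|$ while a solution of size $|E|$ certifies some $x_i$ with $\deg_E(x_i) \le |E|/\ell$, giving a hitting set of size at most $|E|/\ell$.

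Putting these together: a $\rho$-approximation to \textsc{iBGP-Sum} on this instance returns $E$ with $|E| \le \rho\cdot\mathrm{OPT}_{\text{iBGP}} \le \rho(1+o(1))\ell|H|$, which yields a hitting set of size at most $|E|/\ell \le \rho(1+o(1))|H|$, i.e.\ a $\rho(1+o(1))$-approximation for \textsc{Hitting Set}. Since $N = \mathrm{poly}(m+n)$, a bound of the form $\rho = o(\log N)$ would give $\rho = o(\log m)$, contradicting the $\Omega(\log m)$-hardness of \textsc{Hitting Set} unless $\mathrm{P}=\mathrm{NP}$. Therefore \textsc{iBGP-Sum} is NP-hard to approximate to within $\Omega(\log N)$.

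The main obstacle I anticipate is purely bookkeeping around the additive term: one must verify that the polynomial $\ell$ can be chosen so that $\ell + (m+n+4)^2$ is genuinely negligible compared to $\ell|H|$ in the regime where the \textsc{Hitting Set} hardness instances live. This is handled by a standard padding argument — the hard instances of \textsc{Set Cover} used to derive $\Omega(\log m)$-hardness can be taken with $|H| \to \infty$ (e.g.\ $|H| = \Theta(\log m)$ or larger in the Feige/Dinur–Steurer style constructions), and in any case taking $\ell$ a sufficiently large fixed power of $m+n$ (say $\ell = (m+n)^2$) already makes $(m+n+4)^2 = O(\ell)$ lower-order than $\ell|H|$ whenever $|H| = \omega(1)$; the factor $\ell$ cancels in the ratio, so it does not affect the hardness factor as a function of $N$. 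No new ideas beyond the already-established lemmas are needed.
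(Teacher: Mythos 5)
Your proposal is correct and follows essentially the same route as the paper: combine Lemma~\ref{lem:ibgp-sum-necessary} and Lemma~\ref{lem:ibgp-sum-sufficient}, choose $\ell$ to be a polynomial in $m+n$ (the paper takes $\ell=(m+n+4)^2$) so the additive slack is absorbed, and invoke the $\Omega(\log m)$-hardness of \textsc{Hitting Set}. The only cosmetic difference is that the paper phrases this as a gap reduction (distinguishing $|H|\le\beta$ from $|H|\ge\beta\ln m$), which makes your worry about small $|H|$ moot since only a constant-factor relation between $\mathrm{OPT}$ and $\ell|H|$ is needed, not $(1+o(1))$.
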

\begin{proof}
  It is known that there is some $\beta$ for which it is NP-hard to
  distinguish hitting set instances with a hitting set of size at most
  $\beta$ from instances in which all hitting sets have size at least
  $\beta \ln m$.  In the first case we know from Lemma
  \ref{lem:ibgp-sum-sufficient} that there is a valid {\sc iBGP-Sum}
  solution of size at most $\ell \beta + \ell + (m+n+4)^2$.  In the
  second cast we know from Lemma \ref{lem:ibgp-sum-necessary} that any
  valid {\sc iBGP-Sum} solution must have size at least $\ell \beta
  \ln m$.  If we set $\ell$ = $(m+n+4)^2$ this gives a gap of $\ell
  \beta \ln m / \ell(\beta + 2) = \beta \ln m / \beta + 2 =
  \Omega(\log m)$.  The number of vertices $N$ in the {\sc iBGP-Sum}
  instance is $O((m+n+4)^2)$ so $\log m = \Omega(\log N)$, and thus we
  get $\Omega(\log n)$ hardness of approximation.
\end{proof}

It is also fairly simple to modify the basic gadget to prove the same logarithmic hardness for {\sc iBGP-Degree}.  We do this by duplicating everything \emph{other} than $x$, instead of duplicating $x$.  This will force $x$ to have the largest degree.

\begin{theorem}
\label{thm:iBGP_deghard}
  It is NP-hard to approximate {\sc iBGP-Degree} to a factor better than
  $\Omega(\log N)$, where $N$ is the number of vertices in the
  metric.
\end{theorem}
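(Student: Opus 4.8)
The plan is to mimic the structure of the \textsc{iBGP-Sum} hardness proof (Theorem~\ref{thm:iBGP_sumhard}), but with the roles of ``duplicate $x$'' replaced by ``duplicate the gadget''. Starting from the basic gadget of Lemma~\ref{lem:gadget_necessary}, I would take $\ell$ disjoint copies of the portion $\{z, y, u, h, a_i, b_{T_j}\}$ — call the copies $z^{(k)}, y^{(k)}, u^{(k)}, h^{(k)}, a_i^{(k)}, b_{T_j}^{(k)}$ for $k = 1, \dots, \ell$ — while keeping a \emph{single} shared vertex $x$. The intra-copy distances are exactly as in the original table; the distances from $x$ into copy $k$ are $d(x, z^{(k)}) = M + k\epsilon'$ and $d(x, b_{T_j}^{(k)}) = M + 1.4 + (k \cdot m + j)\epsilon'$ for a suitably tiny $\epsilon'$, and cross-copy distances are the induced shortest paths (which, as before, route through $x$ or through the $z^{(k)}$'s). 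One must check this is still a metric — routine, since all perturbations are infinitesimal relative to the $M$-scale separations.

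The first key step is a ``necessity'' lemma analogous to Lemma~\ref{lem:ibgp-sum-necessary}: in any feasible \textsc{iBGP-Degree} solution, for each copy $k$ we have $S(x, b_{T_j}^{(k)}) = \{x, b_{T_j}^{(k)}\} \cup \{a_i^{(k)} : i \in T_j\}$, because every vertex in the other copies lies outside the ball $B(x, d(x, b_{T_j}^{(k)}))$ (their distances from $x$ are all either $\geq M + 1.5$ within-copy or route around through $x$ and are thus much larger), and within copy $k$ the argument of Lemma~\ref{lem:gadget_necessary} carries over verbatim. Hence for each copy $k$, the number of edges from $x$ to $\{a_i^{(k)}\} \cup \{b_{T_j}^{(k)}\}$ is at least $|H|$ (replacing any $b_{T_j}^{(k)}$-edge by an $a_i^{(k)}$-edge with $i \in T_j$ only helps), so $\deg(x) \geq \ell |H|$. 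The second key step is a ``sufficiency'' construction giving a solution of maximum degree $O(|H| + \text{(size of one gadget copy)})$: put a clique on each copy's $2m{+}n{+}\dots$ gadget vertices (degree $O(m+n)$ inside a copy), add an edge from $x$ to $z^{(k)}$ and from $x$ to each $a_i^{(k)}$ with $i \in H$ for every $k$, and — to connect the copies to each other and let small-safe-set pairs route — add edges among the $z^{(k)}$'s (say a clique on $\{z^{(1)}, \dots, z^{(\ell)}\}$, degree $\ell$). The case analysis of Lemma~\ref{lem:ibgp-sum-sufficient} (pairs of type $x$–$b_{T_j}^{(k)}$, $x$–$h^{(k)}$, $x$–$\alpha$, $\alpha$–$x$, and now also cross-copy pairs $\alpha^{(k)}$–$\beta^{(k')}$) transfers, with cross-copy pairs handled by routing through the appropriate $z$'s. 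In this solution $\deg(x) = \ell + \ell|H| = \ell(|H|+1)$ and every other vertex has degree $O(\ell + m + n)$.

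Finally I would assemble the gap exactly as in Theorem~\ref{thm:iBGP_sumhard}: using the known NP-hardness of distinguishing \textsc{Hitting Set} instances with optimum $\leq \beta$ from those with optimum $\geq \beta \ln m$, the ``yes'' case gives max-degree $\leq \ell(\beta + 1)$ and the ``no'' case forces $\deg(x) \geq \ell \beta \ln m$; setting $\ell$ large enough — say $\ell = \Theta(m+n)$ so that $\ell$ dominates the $O(m+n)$ degree of the non-$x$ vertices and so that the degree bound is governed by $\deg(x)$ — yields a ratio of $\Omega(\ln m)$. Since the total number of vertices is $N = O(\ell(m+n)) = \mathrm{poly}(m)$, we have $\ln m = \Omega(\ln N)$, giving $\Omega(\log N)$ hardness.

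The main obstacle I expect is the bookkeeping in the sufficiency direction — specifically verifying that the cross-copy pairs $\alpha^{(k)}$–$\beta^{(k')}$ (and pairs involving $x$ and a far copy) all have $z$-type vertices in their safe sets so that the sparse ``spine'' on the $z^{(k)}$'s suffices, and simultaneously confirming that \emph{no} non-$x$ vertex is accidentally forced to high degree by the safe-set structure (which would destroy the reduction, since we need $x$ to be the unique degree bottleneck). Getting the perturbation parameters $\epsilon'$ and the copy-offsets right so that all the ``outside the ball'' claims hold, and so that ties among copies break cleanly the way they do among the $x_i$'s in Lemma~\ref{lem:ibgp-sum-sufficient}, is the delicate part; the rest is a direct translation of the \textsc{iBGP-Sum} argument.
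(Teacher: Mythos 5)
Your construction is essentially the paper's: the paper likewise takes many copies of the gadget, identifies all the $x$ vertices into a single shared $x$ (with cross-copy distances routed through $x$), applies Lemma~\ref{lem:gadget_necessary} per copy to get $\deg(x) \geq \ell |H|$, and exhibits a feasible solution of maximum degree $\ell(|H|+1)$ in which $x$ is the degree bottleneck, so the overall plan and the final gap calculation match.

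One justification you give for the crucial safe-set claim is wrong as stated, though the conclusion survives. You assert that $S(x, b_{T_j}^{(k)})$ contains no vertex of any other copy ``because every vertex in the other copies lies outside the ball $B(x, d(x, b_{T_j}^{(k)}))$.'' That is false: $d(x, z^{(k')}) \approx M$ and $d(x, a_i^{(k')}) \approx M+1$, both well inside the radius $M + 1.4 + \cdots$. The correct reason these vertices are excluded is the one the paper uses: every non-$x$ vertex of copy $k'$ is within distance $O(1)$ of one of $y^{(k')}, u^{(k')}, h^{(k')}$, all of which \emph{do} lie in $D(x, b_{T_j}^{(k)})$, whereas its distance to $b_{T_j}^{(k)}$ is about $2M$; hence it fails the ``closer to $b_{T_j}^{(k)}$ than to $D(x,b_{T_j}^{(k)})$'' test. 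This exclusion is essential — if a vertex $a_i^{(k')}$ from another copy could appear in $S(x, b_{T_j}^{(k)})$, a single edge out of $x$ might serve demands from several copies and the $\ell|H|$ lower bound would collapse — so the argument must be stated this way. The remaining deviations from the paper (your $\epsilon'$ tie-breaking perturbations, the extra clique on the $z^{(k)}$'s for cross-copy routing where the paper instead routes such pairs through $x$, and the choice $\ell = \Theta(m+n)$ versus the paper's $(n+m+4)^3$) are harmless, modulo the safe-set verifications you already flag as needing care.
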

\begin{proof}
  We will use multiple copies of the above gadget.  Let $\alpha$ be some
  large integer that we will define later.  We create $\alpha$ copies
  of the gadget but identify all of the $x$ vertices, so there is
  still a unique $x$ but for all other nodes $v$ in the original there
  are now $\alpha$ copies $v^1, v^2, \dots, v^{\alpha}$.  The distance
  between two nodes in the same copy is exactly as in the original
  gadget, and the distance between two nodes in different copies (say
  $s^i$ and $t^j$) is the distance implied by forcing them to go
  through $x$ (i.e.~$d(s^i, t^j) = d(s,x) + d(x,t)$).  Call this
  metric $M = (V,d)$.  Every vertex in copy $i$ is closer to the rest
  of copy $i$ than to any vertex in copy $j$, so Lemma
  \ref{lem:gadget_necessary} holds for every copy.  Thus if the
  smallest hitting set is $H$ the degree of $x$ in any feasible
  solution to {\sc iBGP-Degree} on $M$ must be at least $\alpha |H|$.

  Conversely, we claim that there is a feasible solution to {\sc
    iBGP-Degree} in which every vertex has degree at most
  $\alpha(|H|+1)$.  Consider the solution in which $x$ is adjacent to
  $z^j$ and to $a_i^j$ for all $j \in [\alpha]$ and $i \in H$, and all
  nodes (other than $x$) in copy $j$ are adjacent to all other nodes
  (other than $x$) in copy $j$ for all $j \in [\alpha]$.  By the above
  analysis of $S(x, b_{T_j}^i)$ we know that this solution satisfies
  these safe sets (via the safe path $x-a_i-b_{T_j}$ where $i \in H$
  is an element in $T_j$).  It also obviously satisfies pairs not
  involving $x$ in the same copy, since there is an edge directly
  between them.  It remains to show that pairs involving $x$ are
  satisfied and that pairs involving two different copies are
  satisfied.

  For the first of these we will show that $z$ is in all safe sets of
  the form $S(x, w^i)$ where $w$ is not a $b$ node.  This is easy to
  verify exhaustively.  It is also true that $z$ is
  in all safe sets of the form $S(w^i,x)$ even when $w$ is a $b$ node,
  since all vertices outside the ball $B(w^i, d(w^i,x))$ are in
  different copies and the shortest path from $z$ to any node in a
  different copy must go through $x$.  Thus the path $x-z-w^i$ in our
  solution satisfies both of these safe sets.  Finally, it is again easy to verify that
  pairs in different copies are also satisfied.

  Now by setting $\alpha$ appropriately we are finished.  Each copy
  has $n+m+4$ nodes, so in the feasible solution we have constructed
  the degree of any node other than $x$ is at most $(n+m+4)^2 + 1$.
  If we set $\alpha$ to some value larger than this, say $(n+m+4)^3$,
  we know that the degree of $x$ has to be at least $(n+m+4)^3 |H|$.
  It is known that it is hard to distinguish between hitting set
  instances with hitting sets of size at most $\beta$ and
  those in which every hitting set has size at least $\beta \ln m$ for
  some value $\beta$.  Suppose that we are in the first case, where
  there is a hitting set of size at most $\beta$.  Then we constructed
  a feasible solution to the {\sc iBGP-Degree} problem with maximum
  degree at most $(n+m+4)^3(\beta+1)$.  In the second case, where
  every hitting set has size at least $\beta \ln m$, we showed that
  the degree of $x$ (and thus the maximum degree) must be at least
  $(n+m+4)^3 \beta \ln n$.  This gives a gap of $\beta \ln m / (\beta
  + 1)$, which is clearly $\Omega(\log m)$.  Since the number of
  vertices in the {\sc iBGP-Degree} instance is polynomial in $m$,
  this implies $\Omega(\log N)$-hardness.
  \end{proof}

\section{Constrained Connectivity} \label{sec:CC}
In this section we consider the hardness of the Constrained Connectivity problems and the integrality gaps of the natural LP relaxations.

\subsection{Hardness}
We now show that the {\sc Constrained Connectivity-Sum} and {\sc
  Constrained Connectivity-Degree} problems are both hard to approximate to better than
$2^{\log^{1-\epsilon} n}$ for any constant $\epsilon > 0$.  We do this via a reduction from {\sc
  Min-Rep}, a problem that is known to be impossible to approximate to
better than $2^{\log^{1-\epsilon} n}$ unless $\text{NP} \subseteq
\text{DTIME}(n^{\polylog(n)})$~\cite{Kortsarz:99}. An instance of {\sc
  Min-Rep} is a bipartite graph $G = (U,V,E)$ in which $U$ is
partitioned into groups $U_1, U_2, \dots, U_m$ and $V$ is partitioned
into group $V_1, V_2, \dots, V_m$.  There is a \emph{super-edge}
between $U_i$ and $V_j$ if there is an edge $\{u,v\} \in E$ such that
$u \in U_i$ and $v \in V_j$.  The goal is to find a minimum set $S$ of
vertices such that for all super-edges $\{U_i, V_j\}$ there is some
edge $\{u,v\} \in E$ with $u \in U_i$ and $v \in V_j$ and $u,v \in S$.  Vertices from a group that are in $S$ are called the \emph{representatives} of the group.
It is easy to prove by a reduction from {\sc Label Cover} that {\sc
  Min-Rep} is hard to approximate to better than $2^{\log^{1-\epsilon}
  n}$, and in particular it is hard to distinguish the case when $2m$
vertices are enough (one from each part in the partition for each side
of the graph) from the case when $2m \times 2^{\log^{1-\epsilon}n}$
vertices are necessary~\cite{Kortsarz:99}.

Given an instance of {\sc Min-Rep}, we want to convert it into an
instance of {\sc Constrained Connectivity-Sum}.  We will create a graph
with five types of vertices: $x_j^i$ for $j \in [m]$ and $i \in [d]$; $U$; $V$; $y_j^i$ for $j \in [m]$ and $i \in [d]$; and $z$.
Here the $x$ nodes represent $d$ copies of the groups of $U$ and the
$y$ nodes represent $d$ copies of the groups of $V$, where $d$ is some
parameter that we will define later.  $z$ is a dummy node that we will
use to connect pairs that are not crucial to the analysis.  Given
this vertex set, there will be four types of edges: $\{x_j^i, u\}$ for all $j \in [m]$ and $i \in [d]$ and $u \in
  U_j$; $\{u,v\}$ for all edges $\{u,v\}$ in the original {\sc Min-Rep}
  instance; $\{v, y_j^i\}$ for all $j \in [m]$ and $i \in [d]$ and $v \in
  V_j$; and $\{w,z\}$ for all vertices $w$.

\begin{figure}
\centerline{\scalebox{.75}{\input{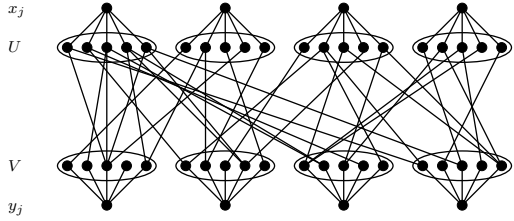}}}
\caption{Basic hardness construction.}
\label{fig:CC_hard}
\end{figure}

This construction is shown in Figure \ref{fig:CC_hard}, except in the actual construction there are $d$ copies of each node in the top and bottom layer and there is a $z$ node that is adjacent to all other nodes.  In Figure \ref{fig:CC_hard} the middle two layers are identical to the original {\sc Min-Rep} problem, and the large ellipses represent the groups.  In the figure we have simply added a new vertex for each group, and in the construction there are $d$ such new vertices per group as well as a $z$ vertex.

Now that we have described the constrained connectivity graph, we need
to define the safe sets.  There are two types of safe sets: if in the
original instance there is a super-edge between $U_i$ and $V_j$ then
$S(x_i^k, y_j^k) = S(y_j^k, x_i^k) = \{x_i^k, y_j^k\} \cup U_i \cup
V_j$ for all $k \in [d]$.  All other safe sets consist of the two
endpoints and $z$.  Let $e_{MR}$ denote the number of super-edges in
the {\sc Min-Rep} instance, let $n_{MR}$ denote the number of
vertices.

The following theorem shows that this reduction works.  The intuition behind it is that a safe path between an $x$ node and a $y$ node corresponds to using the intermediate nodes in the path as the representatives of the groups corresponding to the $x$ and $y$ nodes, so minimizing the number of labels is like minimizing the number of edges incident on $x$ and $y$ nodes.

\begin{theorem} \label{thm:Subgraph_Reduce}
  The original {\sc Min-Rep} instance has a solution of size at most
  $K$ if and only if there is a solution to the reduced Constrained
  Connectivity problem of size at most $Kd + e_{MR} +2md + n_{MR}$.
\end{theorem}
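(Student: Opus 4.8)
The plan is to prove the biconditional by establishing the two implications separately: turn a small Min-Rep solution into a small safely connected subgraph, and conversely extract a small Min-Rep solution from any safely connected subgraph that meets the budget. For the forward direction, suppose $S$ is a Min-Rep solution with $|S|\le K$. For each super-edge $\{U_i,V_j\}$ I fix an edge $\{u_{ij},v_{ij}\}\in E$ with $u_{ij}\in U_i\cap S$ and $v_{ij}\in V_j\cap S$, which exists because $S$ is feasible. The subgraph $H$ will consist of all edges $\{w,z\}$ (there are $2md+n_{MR}$ of them, one per non-$z$ vertex), all of the chosen edges $\{u_{ij},v_{ij}\}$ (at most $e_{MR}$ of them), and, for every copy $k\in[d]$ and super-edge $\{U_i,V_j\}$, the two edges $\{x_i^k,u_{ij}\}$ and $\{v_{ij},y_j^k\}$. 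Since a vertex of $S\cap U$ is adjacent only to $x$-nodes of its own group (and similarly for $S\cap V$), the number of edges of this last type is at most $d\,|S\cap U|+d\,|S\cap V| = d|S|\le Kd$, giving $|H|\le Kd+e_{MR}+2md+n_{MR}$. And $H$ is safely connected: any pair that is not a super-edge pair has safe set equal to its two endpoints together with $z$, hence is joined through $z$; and a super-edge pair $(x_i^k,y_j^k)$ is joined by the length-three path $x_i^k-u_{ij}-v_{ij}-y_j^k$, whose vertices all lie in $\{x_i^k,y_j^k\}\cup U_i\cup V_j$.

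For the backward direction, let $H$ be any safely connected subgraph with $|H|\le Kd+e_{MR}+2md+n_{MR}$. Every edge $\{w,z\}$ must already lie in $H$: the pair $(w,z)$ has safe set exactly $\{w,z\}$, so its only possible safe path is that single edge. These are $2md+n_{MR}$ edges, so at most $Kd+e_{MR}$ edges of $H$ remain, and each is an $x$-edge $\{x_i^k,u\}$ with $u\in U_i$, a $y$-edge $\{v,y_j^k\}$ with $v\in V_j$, or an inner edge $\{u,v\}\in E$. Writing $c_k$ for the number of $x$- and $y$-edges incident to copy-$k$ nodes, we have $\sum_{k=1}^{d}c_k\le Kd+e_{MR}$, so if $d$ is chosen to be a polynomial (in the instance size) larger than $e_{MR}$, then the copy $k^*$ minimizing $c_{k^*}$ satisfies $c_{k^*}\le K+e_{MR}/d<K+1$, i.e.\ $c_{k^*}\le K$. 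Let $S^*$ be the set of $U\cup V$ endpoints of these edges, so $|S^*|\le K$. To check feasibility of $S^*$ for Min-Rep, fix a super-edge $\{U_i,V_j\}$ and take the safe path from $x_i^{k^*}$ to $y_j^{k^*}$ guaranteed by $H$: it stays inside $\{x_i^{k^*},y_j^{k^*}\}\cup U_i\cup V_j$, so its first edge is an $x$-edge reaching some $u\in U_i\cap S^*$ and its last edge is a $y$-edge leaving some $v\in V_j\cap S^*$; since the bipartite graph between $U_i$ and $V_j$ in a projection Min-Rep instance is a disjoint union of stars centered in $U_i$, this path has exactly one interior vertex in each of $U_i$ and $V_j$, so it is $x_i^{k^*}-u-v-y_j^{k^*}$ with $\{u,v\}\in E$, which is precisely a pair of representatives witnessing the super-edge.

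The step I expect to be the main obstacle is this last one — pinning down the shape of the safe path between an $x$-node and a $y$-node. Without the disjoint-union-of-stars structure a safe path could weave back and forth between $U_i$ and $V_j$, and then the vertex $u$ that the first edge forces into $S^*$ and the vertex $v$ that the last edge forces in need not be joined by an edge of $E$; one would then have to put some interior vertices of the path into $S^*$ and charge them against the extra inner edges that such a detour must use, and closing that accounting is exactly where the projection structure of the hard Min-Rep instances — equivalently, the fact that every relevant safe path has length exactly three — gets used. A second, milder point is that $d$ has to be taken large enough, a polynomial exceeding $e_{MR}$, for the averaging over the $d$ copies to round down to $K$ rather than $K+O(e_{MR}/d)$; since this inflates the instance size only polynomially it is harmless for the eventual $2^{\log^{1-\epsilon}n}$ hardness.
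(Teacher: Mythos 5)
Your proof is correct and follows essentially the same two-direction argument as the paper: build the star at $z$ plus representative edges for the forward direction, and for the converse peel off the forced star edges, average over the $d$ copies to find a cheap copy, and read off a Min-Rep solution from the $x$- and $y$-incident edges of that copy. The one place you genuinely diverge is the step you yourself flag as the obstacle: the paper simply asserts that the $U\cup V$ endpoints of the cheap copy's edges form a valid Min-Rep solution, which implicitly assumes every safe path $x_i^{k^*}\text{--}\cdots\text{--}y_j^{k^*}$ has length exactly three; a path such as $x_i^{k^*}\text{--}u_1\text{--}v_1\text{--}u_2\text{--}v_2\text{--}y_j^{k^*}$ would place $u_1$ and $v_2$ in your set without any edge between them. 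Your resolution --- invoking the disjoint-union-of-stars (projection) structure of the hard Min-Rep instances to rule out such detours --- is a legitimate and more honest treatment than the paper's, at the mild cost of proving the backward direction only for projection instances (which is all the hardness application needs). Two small further differences, neither problematic: you isolate the cheap copy by bounding $c_{k^*}\le K+e_{MR}/d$ and requiring $d>e_{MR}$, whereas the paper first observes that at least $e_{MR}$ of the non-star edges must be inner $U$--$V$ edges and so at most $Kd$ edges touch $x$- or $y$-nodes, which avoids any condition on $d$; and in the forward direction you connect each $x_i^k$ only to the chosen representatives $u_{ij}$ rather than to all of $U_i\cap S$, which is an equally valid way to stay within the $Kd$ budget.
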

\begin{proof}
  We first prove the only if direction by showing that if there is a
  {\sc Min-Rep} solution of size $K$ then there is a Constrained
  Connectivity solution of size $Kd + e_{MR} + 2md + n_{MR}$.  Let
  $OPT_{MR}$ be the set of vertices in a Min-Rep solution of size $K$.
  Our constrained connectivity solution includes all edges of type
  $4$, i.e.\ we include a star centered at $z$.  For each $i \in [d]$
  and $j \in [m]$ we also include all edges of the form $\{x_j^i, u\}$
  where $u \in U_j \cap OPT_{MR}$ and all edges of the form $\{y_j^i,
  v\}$ where $v \in V_j \cap OPT_{MR}$.  Finally, for each super-edge
  in the Min-Rep instance we include the edge between the pair from
  $OPT_{MR}$ that satisfies it (if there is more than one such pair we
  choose one arbitrarily).  The star clearly has $2md +n_{MR}$ edges,
  there are $Kd$ edges from $x$ and $y$ nodes to nodes in $OPT_{MR}$,
  and there are clearly $e_{MR}$ of the third type of edges, so the
  total number of edges in our solution is $Kd + e_{MR} + 2md +
  n_{MR}$ as required.  To prove that it is a valid solution, we first
  note that for all pairs except those of the form $(x_i^k, y_j^k)$ or
  $(y_j^k, x_i^k)$ where $\{U_i, V_j\}$ is a super-edge are satisfied
  via the star centered at $z$.  For pairs $(x_i^k, y_j^k)$ and
  $(y_j^k, x_i^k)$ with an associated super-edge, since $OPT_{MR}$ is
  a valid solution there must be some $u \in U_i \cap OPT_{MR}$ and $v
  \in V_j \cap OPT_{MR}$ that have an edge between them, and the above
  solution would include that edge as well as the edge from $x_i^k$ to
  $u$ and from $y_j^k$ to $v$, thus forming a safe path of length $3$.

  For the if direction we need to show that if there is a Constrained
  Connectivity solution of size $Kd + e_{MR} + 2md + n_{MR}$ then
  there is a Min-Rep solution of size at most $K$.  Let $OPT_{CC}$ be
  a constrained connectivity solution with $Kd + e_{MR} + 2md +
  n_{MR}$ edges.  Since $S(w,z) = \{w,z\}$ for all vertices $w$, $2md
  + n_{MR}$ of those edges must be a star centered at $z$, so only $Kd
  + e_{MR}$ edges are between other vertices.  Obviously there need to
  be at least $e_{MR}$ edges between $U$ and $V$, since otherwise it
  would be impossible to satisfy all of the demands between $x$ and
  $y$ nodes corresponding to super-edges.  Thus there are at most $Kd$
  edges incident on either $x$ or $y$ nodes.  We can partition these
  edges into $d$ parts, where the edges in the $i$th part are those
  incident on an $x^i$ or $y^i$ node.  So there must be one part of
  size at most $K$; let $i$ be this part.  But since this is a valid
  constrained connectivity solution there is a safe path between
  $x_j^i$ and $y_\ell^i$ for all $j, \ell$ such that there is a
  super-edge between $U_j$ and $y_{\ell}$, and thus the nodes in $U$
  and $V$ that are incident to edges in this $i$th part must form a
  valid Min-Rep solution of size at most $K$.
\end{proof}

We can now set $d = n_{MR}^2$, which gives the following theorem:

\begin{theorem} \label{thm:ccsumhard}
  {\sc Constrained Connectivity-Sum} cannot be approximated better
  than $2^{\log^{1-\epsilon} n}$ for any $\epsilon > 0$ unless $\text{NP} \subseteq
  \text{DTIME}(n^{\polylog(n)})$
\end{theorem}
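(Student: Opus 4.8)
The plan is to combine the reduction of Theorem~\ref{thm:Subgraph_Reduce} with the known hardness of {\sc Min-Rep}, choosing the blow-up parameter $d$ large enough that the additive ``overhead'' terms $e_{MR} + 2md + n_{MR}$ in the reduced instance are swamped by the multiplicative gap coming from {\sc Min-Rep}. Recall that it is hard (unless $\text{NP} \subseteq \text{DTIME}(n^{\polylog(n)})$) to distinguish {\sc Min-Rep} instances where $2m$ representatives suffice from those where at least $2m \cdot 2^{\log^{1-\epsilon} n_{MR}}$ are needed. So in the first case Theorem~\ref{thm:Subgraph_Reduce} gives a Constrained Connectivity solution of size at most $2md + e_{MR} + 2md + n_{MR} = 4md + e_{MR} + n_{MR}$, while in the second case every solution has size at least $2m \cdot 2^{\log^{1-\epsilon} n_{MR}} \cdot d$.

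The key step is the choice $d = n_{MR}^2$ (as stated just before the theorem). With this choice the additive terms $e_{MR} + n_{MR} \le 2 n_{MR}^2 \le 2d$ are at most a constant times $md$, so the ``yes'' bound is $O(md)$ and the ``no'' bound is $\Omega(md \cdot 2^{\log^{1-\epsilon} n_{MR}})$; dividing, the gap between the two cases is $\Omega(2^{\log^{1-\epsilon} n_{MR}})$. It remains to re-express this in terms of $n$, the number of vertices in the Constrained Connectivity instance. The construction has $2md$ nodes of $x$/$y$ type, $n_{MR}$ nodes of $U$/$V$ type, and one $z$ node, so $n = 2md + n_{MR} + 1 = \mathrm{poly}(n_{MR})$ (indeed $n = \Theta(n_{MR}^3)$ up to the dependence of $m$ on $n_{MR}$). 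Since $n$ is polynomial in $n_{MR}$, we have $\log n = \Theta(\log n_{MR})$, and hence $2^{\log^{1-\epsilon} n_{MR}} = 2^{\Omega(\log^{1-\epsilon} n)} \ge 2^{\log^{1-\epsilon'} n}$ for a slightly smaller constant $\epsilon' > 0$ (absorbing the constant factor in the exponent by shrinking $\epsilon$). Thus a $2^{\log^{1-\epsilon} n}$-approximation for {\sc Constrained Connectivity-Sum} would distinguish the two {\sc Min-Rep} cases, contradicting the assumed hardness.

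The only mildly delicate point — and the step I would be most careful about — is the exponent bookkeeping: one must check that replacing $n_{MR}$ by $n = \mathrm{poly}(n_{MR})$ in the expression $2^{\log^{1-\epsilon} n_{MR}}$ still leaves a bound of the form $2^{\log^{1-\epsilon} n}$ after adjusting $\epsilon$, and that the constant-factor slack from the additive terms $e_{MR} + n_{MR}$ (and from $e_{MR} \le n_{MR}^2$) is genuinely negligible against a super-polylogarithmic factor. Both are routine: any constant or polylogarithmic loss is absorbable into the exponent $1-\epsilon$, since for every constant $c$ and every $\epsilon > 0$ we have $c \cdot 2^{\log^{1-\epsilon} n} \ge 2^{\log^{1-\epsilon'} n}$ for large $n$ whenever $\epsilon' > \epsilon$. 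The corresponding statement for {\sc Constrained Connectivity-Degree} follows by the same argument applied to the degree version of Theorem~\ref{thm:Subgraph_Reduce} (the reduction is identical, with the star at $z$ and the per-copy structure bounding all degrees other than those of the $x$/$y$ nodes, whose degrees encode the number of representatives).
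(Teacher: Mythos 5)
Your proposal is correct and follows essentially the same route as the paper: apply Theorem~\ref{thm:Subgraph_Reduce} to the gap version of {\sc Min-Rep}, set $d = n_{MR}^2$ so the additive overhead $e_{MR} + 2md + n_{MR}$ is dominated by the $Kd$ term, and absorb the polynomial relation between $n$ and $n_{MR}$ by shrinking $\epsilon$. (Your closing aside about the degree version understates the work there --- the paper uses a modified construction with $d^2$ copies of the inner gadget --- but that is outside the statement being proved.)
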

\begin{proof}
  We know that it is hard to distinguish between an instance of
  {\sc Min-Rep} with a solution of size at most $2m$ and an instance in
  which every solution is of size at least $2m \times
  2^{\log^{1-\epsilon} n}$.  Let $d = n_{MR}^2$.  Then Theorem
  \ref{thm:Subgraph_Reduce} implies that it is hard to distinguish
  between an instance of constrained connectivity with a solution of
  size at most $2mn_{MR}^2 + e_{MR} + 2mn_{MR}^2 + n_{MR} = O(m
  n_{MR}^2)$ and an instance in which every solution has size at least
  $2m2^{\log^{1-\epsilon} n_{MR}} n_{MR}^2 + e_{MR} + 2mn_{MR}^2 +
  n_{MR} = \Omega(m n_{MR}^2 2^{\log^{1-\epsilon} n_{MR}})$.  This
  gives an inapproximability gap of $\Omega(2^{\log^{1-\epsilon}
    n_{MR}})$.  Since $d = n_{MR}^2$ the number of vertices $n$ in our
  constrained connectivity instances is $n_{MR} + 2mn_{MR}^2 \leq
  O(n_{MR}^2)$, and thus $\Omega(2^{\log^{1-\epsilon} n_{MR}}) =
  2^{\Omega(\log^{1-\epsilon} n)}$.  To get this to $2^{\log^{1-\epsilon} n}$ we can simply use a smaller $\epsilon'$.
\end{proof}

We will now prove that {\sc Constrained Connectivity-Degree} has the same hardness of approximation of {\sc Constrained Connectivity-Sum}.  The reduction from {\sc Min-Rep} to the degree problem is basically the same as the reduction to the sum problem, except there are also $d^2$ additional copies of the gadget other than the $x$ and $y$ nodes.  More formally, now the nodes are $x_j^i$ and $y_j^i$ for $j \in [m]$ and $i \in [d]$, $u^{ij}$ for $u \in U$ and $i,j \in [d]$, $v^{ij}$ for $v \in V$ and $i,j \in [d]$, and $z^{ij}$ for $i,j \in [d]$.  Now intuitively each copy $ij$ of the original $U,V$, and $z$ is hooked together exactly like in the original construction, and is hooked up to the nodes $\{x_k^i\}_{k \in [m]}$ and $\{y_k^j\}_{k \in [m]}$ exactly as if they were one copy of the outer $x$ and $y$ nodes of the original construction.

More formally, the edges are the same as before, except now each of the $d^2$ new copies is independent.  In other words, there is an edge between $x_j^i$ and $u^{ik}$ for all $i,k \in [d]$ and $j \in [m]$ and $u \in U_j$, an edge between $y_j^i$ and $v^{ki}$ for all $i,k \in [d]$ and $j \in [m]$ and $v \in V_j$, an edge between $u^{ij}$ and $v^{ij}$ for all $i,j \in [d]$ and edges $\{u,v\}$ in the original {\sc Min-Rep} instance, an edge between $x_j^i$ and $z^{ik}$ for all $i,k \in [d]$ and $j \in [m]$, an edge between $y_j^i$ and $z^{ki}$ for all $i,k \in [d]$ and $j \in [m]$, an edge between $u^{ik}$ and $z^{ik}$ for all $i,k \in [d]$ and $u \in U$, and an edge between $v^{ik}$ and $z^{ik}$ for all $i,k \in [d]$ and $v \in V$.  Similarly, the safe sets are as before but defined by the copies.  That is, $S(x_i^k, y_j^{\ell}) = S(y_j^{\ell}, x_i^k) = \{x_i^k, y_j^{\ell}\} \cup U_i^k \cup V_j^{\ell}$.  All safe sets between nodes in the same copy $ij$ are the two endpoints together with $z^{ij}$, and the safe set of vertices in different copies is just all vertices.

\begin{theorem} \label{thm:ccdeghard}
{\sc Constrained Connectivity-Degree} cannot be approximated better than $2^{\log^{1-\epsilon} n}$ for any constant $\epsilon > 0$ unless $\text{NP} \subseteq \text{DTIME}(n^{\polylog(n)})$
\end{theorem}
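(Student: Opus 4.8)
My plan is to establish the degree analogue of Theorem~\ref{thm:Subgraph_Reduce} for the construction described above and then feed it the {\sc Min-Rep} gap with $d=n_{MR}^2$. The first thing to pin down is the safe-set structure. For a super-edge $\{U_i,V_j\}$ and copy indices $k,\ell$, the only edges of the constrained connectivity graph lying inside $S(x_i^k,y_j^\ell)=\{x_i^k,y_j^\ell\}\cup U_i^k\cup V_j^\ell$ are the edges from $x_i^k$ to copy-$(k,\ell)$ versions of $U_i$-vertices, the edges from $y_j^\ell$ to copy-$(k,\ell)$ versions of $V_j$-vertices, and the copied {\sc Min-Rep} edges inside copy $(k,\ell)$; in particular no $z$-vertex and no vertex of another copy is in this safe set. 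Using that the {\sc Min-Rep} instances produced by {\sc Label Cover} are projection instances — so the bipartite graph between any two groups is a union of stars, and the only walk from a $U_i$-vertex to a $V_j$-vertex has length one — any safe path for $(x_i^k,y_j^\ell)$ must be exactly $x_i^k - u^{(k,\ell)} - v^{(k,\ell)} - y_j^\ell$ with $u\in U_i$, $v\in V_j$, and $\{u,v\}$ an edge of the {\sc Min-Rep} instance. Every other pair lies inside one copy or across two copies and is routed through the appropriate $z$-vertex, forcing only the $z$-stars.

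For completeness, given a {\sc Min-Rep} solution $S$ using at most one representative per group (in particular the ``yes'' instances, where $|S|=2m$), I would take $H$ to consist of the star at each $z^{(k,\ell)}$ over the $n_{MR}+2m$ vertices of copy $(k,\ell)$; one copied {\sc Min-Rep} edge between the representatives of each super-edge, placed in every copy; and, for all $i,k,\ell$, the edge from $x_i^k$ to the copy-$(k,\ell)$ node of the representative of $U_i$, and symmetrically for the $y$'s and $V$. I would check that $H$ is feasible (super-edge pairs use the length-$3$ path above, every other pair routes through its $z$-vertex, and $H$ is connected so cross-copy pairs are satisfied) and has maximum degree $2d$: the $z$-vertices have degree $n_{MR}+2m$, the copied $U/V$-vertices have degree at most $m+2$, and each $x_i^k$ (resp.\ $y_j^\ell$) has degree exactly $2d$, one representative edge and one $z$-edge per second (resp.\ first) copy index; since $d=n_{MR}^2\ge n_{MR}+2m$ the bound follows.

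For soundness, let $H$ be any safely connected subgraph, and for each copy $(k,\ell)$ set $S_U^{(k,\ell)}=\{u\in U:\{x_{g(u)}^{k},u^{(k,\ell)}\}\in E(H)\}$, where $g(u)$ is the group of $u$, and define $S_V^{(k,\ell)}$ symmetrically from the $y$-edges. The safe-path analysis shows that for every super-edge $\{U_i,V_j\}$ there are $u\in U_i\cap S_U^{(k,\ell)}$ and $v\in V_j\cap S_V^{(k,\ell)}$ joined by a {\sc Min-Rep} edge, so $S_U^{(k,\ell)}\cup S_V^{(k,\ell)}$ is a feasible {\sc Min-Rep} solution. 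Since the groups partition $U$ and $V$, distinct members of $S_U^{(k,\ell)}$ yield distinct edges of $H$ at $x$-vertices, giving
\[
\sum_{i,k}\deg_H(x_i^k)+\sum_{j,k}\deg_H(y_j^k)\ \ge\ \sum_{k,\ell}\bigl(|S_U^{(k,\ell)}|+|S_V^{(k,\ell)}|\bigr),
\]
so if every {\sc Min-Rep} solution has size at least $R$ the right side is at least $d^2R$, and since there are only $2md$ vertices among the $x$'s and $y$'s some such vertex has degree at least $dR/(2m)$.

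Combining the two directions: it is hard to distinguish {\sc Min-Rep} instances with $R\le 2m$ from those with $R\ge 2m\cdot 2^{\log^{1-\epsilon}n_{MR}}$, so the resulting constrained connectivity instance has optimal maximum degree $\le 2d$ in the first case and $\ge d\cdot 2^{\log^{1-\epsilon}n_{MR}}$ in the second — a gap of $\tfrac12\,2^{\log^{1-\epsilon}n_{MR}}$. With $d=n_{MR}^2$ the instance has $n=O(n_{MR}^{5})$ vertices, so the gap is $2^{\Omega(\log^{1-\epsilon}n)}$, which becomes $2^{\log^{1-\epsilon}n}$ after shrinking $\epsilon$ slightly. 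I expect the crux to be the safe-set analysis — confirming that safe paths between super-edge endpoints are forced to have length three and to stay inside a single copy, which is exactly what makes the $x$/$y$-incident edges of a solution encode {\sc Min-Rep} solutions copy by copy (as in Theorem~\ref{thm:Subgraph_Reduce}) and which relies on the union-of-stars structure of {\sc Min-Rep} instances. The degree bookkeeping in the completeness direction, and in particular why $d$ must be polynomially large so the $z$-stars and copied {\sc Min-Rep} edges contribute only lower-order degree, is the other point that needs care.
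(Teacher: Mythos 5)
Your proof is correct and follows essentially the same route as the paper's: the same construction with $d$ copies of the outer $x/y$ nodes and $d^2$ copies of the inner gadget, the same explicit low-degree solution for completeness, and the same averaging argument over the $2md$ outer vertices for soundness; your choice $d = n_{MR}^2$ versus the paper's $d = 3n_{MR}$ only changes the polynomial size of the instance, not the gap. The one place you are more careful than the paper is in invoking the projection (union-of-stars) structure of Label-Cover-derived {\sc Min-Rep} instances to force safe paths between super-edge endpoints to have length exactly three, a point on which the paper's extraction of a {\sc Min-Rep} solution from the $x/y$-incident edges implicitly relies.
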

\begin{proof}
Every vertex in $U^{ij}$ or $V^{ij}$ can have degree at most $n_{MR} + 1$, since there are only $n_{MR}-1$ other nodes in its copy, and it can in addition be adjacent to $z^{ij}$ and and the node $x^i_k$ or $y^j_k$ corresponding to its group $U_k$ and $V_k$ respectively.  Every node $z^{ij}$ has degree at most $n_{MR} + 2m < 3n_{MR}$, since it can be adjacent to $n_{MR}$ nodes in $U^{ij}$ and $V^{ij}$ as well as $m$ nodes from $X^i$ and $m$ nodes from $Y^j$.  On the other hand, every $x^i_k$ node and every $y^j_k$ node must be adjacent to at least $1$ $U^{i\ell}$ or $V^{\ell j}$ node respectively for all $d$ possibilities for $\ell$.  So every such $x$ or $y$ nodes has degree at least $d$, so if we set $d = 3n_{MR}$ we know that the node with maximum degree must be an $x$ or a $y$ node.

Recall that it is hard to distinguish {\sc Min-Rep} instances with solutions of size at most $2m$ from those in which all solutions have size at least $2m 2^{\log^{1-\epsilon} n_{MR}}$.  Suppose that there is a solution of size $2m$, i.e.\ there is a solution with one representative from each group.  Then there is a solution to the corresponding {\sc Constrained Connectivity-Degree} instance with max degree at most $d$: every $x^i_j$ and $y^i_j$ is connected to its corresponding representative in each of the $d$ copies corresponding to it as well as to the $z$ node for that copy, and in each copy $ij$ we include all edges between $U^{ij}$ and $V^{ij}$ and all edges between those nodes and $z^{ij}$.  It is easy to see that this is a valid solution: by the analysis of Theorem \ref{thm:Subgraph_Reduce} we know that it is valid inside of each copy, and to get between copies nodes $s^{ij}$ and $t^{k\ell}$ can use the safe path $s^{ij} - z^{ij} - x^i_h - z^{i\ell} - y^{k\ell}_h - z^{k\ell} - t^{k\ell}$, where $s$ and $t$ are arbitrary nodes in the copy $ij$, and $h$ is an arbitrary index in $[m]$.

On the other hand, suppose that every solution to the {\sc Min-Rep} instance has size at least $2m2^{\log^{1-\epsilon} n_{MR}}$.  Then as in the analysis of Theorem \ref{thm:Subgraph_Reduce} for every copy $ij$ there must be at least $2m2^{\log^{1-\epsilon} n_{MR}}$ edges that are either between $X^i$ and $U^{ij}$ or between $Y^j$ and $V^{ij}$.  Thus there are at least $d^2 2m2^{\log^{1-\epsilon} n_{MR}}$ such edges.  Since there are only $2md$ vertices in $X \cup Y$, at least one such vertex must have degree at least $2^{\log^{1-\epsilon} n_{MR}} d$.

This shows that it is hard to approximate {\sc Constrained Connectivity-Degree} to better than $2^{\log^{1-\epsilon} n_{MR}} d / d = 2^{\log^{1-\epsilon} n_{MR}}$.  Since the number of vertices $n$ in our instances is polynomial in $n_{MR}$, this means that it is hard to approximate to better than $2^{\Omega(\log^{1-\epsilon} n)}$.  We can then get this to $2^{\log^{1-\epsilon} n}$ by just using a smaller $\epsilon'$.
\end{proof}

\subsection{Integrality Gap for Constrained Connectivity}

We claim that the integrality gap of the flow and cut LP relaxations is large for both {\sc
  Constrained Connectivity-Sum} and {\sc Constrained Connectivity-Degree}.  The intuition is that we use a {\sc Min-Rep} instance in which the edges between each group form a matching (allowing the LP to cheat by breaking up the flow) but many representatives are needed for a valid solution.  This instance is then changed into a Constrained Connectivity problem as in the hardness reduction.  These results are in many ways similar to the $\Omega(n^{1/3 - \epsilon})$ integrality gap for {\sc Min-Rep} recently proved by Charikar et al.~\cite{CHK09}, but the reduction to Constrained Connectivity adds extra complications.

  The instances for
which we will show a large integrality gap are derived from instances
of the \emph{Unique Games problem}, in which we are given a graph
$G=(V,E)$ and a set of permutations $\pi_{uv}$ on some alphabet
$\Sigma$ (one constraint for every edge $(u,v) \in E$) and are asked
to assign a value $x_u$ from $\Sigma$ to each vertex $u$ so as to satisfy
the maximum number of constraints of the form $\pi_{uv}(x_u) = x_v$.
This problem was first considered by Khot~\cite{khot:02}, who
conjectured that it was NP-hard to distinguish instances on which
$1-\delta$ fraction of the constraints can be satisfied from instances
on which at most $\epsilon$ fraction of the constraints can be
satisfied (for sufficiently small $\epsilon$ and $\delta$).  For our
purposes we will consider a minimization version of the Unique Games
problem in which we can assign multiple labels to vertices and the
goal is to assign as few labels as possible so that for every edge
$(u,v)$ there is some label $x_u$ assigned to $u$ with $\pi_{uv}(x_u)$
assigned to $v$.  We first show that there exist instances that
require many labels:

\begin{lemma} \label{lem:Unique} For any constant $\epsilon < 1$,
  there are instances of Unique Games with alphabet size
  $O\left(n^{\frac{2(1+\epsilon)}{1-3\epsilon}}\right)$ and $\Theta(n^2)$ edges that
  require $\epsilon n^2$ labels for any valid solution.
\end{lemma}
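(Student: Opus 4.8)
The plan is a probabilistic construction. I would take the Unique Games instance whose constraint graph is the complete graph $K_n$ (so it has $\binom{n}{2}=\Theta(n^2)$ edges), fix an alphabet $\Sigma$ of size $R=\Theta\!\left(n^{\,2(1+\epsilon)/(1-3\epsilon)}\right)$ (the exponent is positive precisely when $\epsilon<1/3$, so I assume this throughout), and choose each permutation $\pi_{uv}$ independently and uniformly at random among the $R!$ permutations of $\Sigma$. I will show that with probability $1-o(1)$ this instance has no valid labeling of total cost less than $\epsilon n^2$; a single such instance then witnesses the claim. The intuition is that random permutation constraints on a dense graph give no way for an integral labeling to reuse labels across edges, so covering all $\Theta(n^2)$ edges forces $\Omega(n^2)$ labels, whereas the LP can ``cheat'' by splitting flow.

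First I would reduce to a counting statement. A candidate solution assigns a set $L_u\subseteq\Sigma$ to each vertex $u$; it is \emph{valid} if every edge $\{u,v\}$ is covered (i.e.\ $\pi_{uv}(a)\in L_v$ for some $a\in L_u$), and its cost is $\sum_u|L_u|$. Adding labels only creates more coverage, so if some valid labeling has cost $<\epsilon n^2$ then (padding with arbitrary extra labels, which is possible since $nR\gg\epsilon n^2$) some valid labeling has cost exactly $k:=\lceil\epsilon n^2\rceil$; moreover every valid labeling has $|L_u|\ge 1$ for all $u$ (an empty set leaves every incident edge uncovered, and $n\ge 2$). Hence it suffices to show that with positive probability \emph{no} labeling of cost exactly $k$ is valid. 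The number of labelings with $\sum_u|L_u|=k$ is the coefficient of $x^k$ in $\prod_{u=1}^n(1+x)^R=(1+x)^{nR}$, namely $\binom{nR}{k}\le(enR/k)^k$, so a union bound will do once I control the validity probability of one fixed labeling.

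Next I would bound $\Pr[L\text{ valid}]$ for a fixed labeling $L$ with $s_u:=|L_u|$ and $\sum_u s_u=k$. Since the $\pi_{uv}$ are independent, $\Pr[L\text{ valid}]=\prod_{\{u,v\}}p_{uv}$ with $p_{uv}=\Pr_{\pi_{uv}}[\exists a\in L_u:\pi_{uv}(a)\in L_v]$, and a union bound over the $s_us_v$ pairs $(a,b)$ (each with $\Pr[\pi_{uv}(a)=b]=1/R$) gives $p_{uv}\le s_us_v/R$. Using that each vertex lies in $n-1$ edges, followed by AM--GM on $s_1,\dots,s_n$ subject to $\sum_u s_u=k$,
\[
\Pr[L\text{ valid}]\;\le\;\prod_{\{u,v\}}\frac{s_us_v}{R}\;=\;\frac{1}{R^{\binom{n}{2}}}\prod_u s_u^{\,n-1}\;\le\;\frac{1}{R^{\binom{n}{2}}}\left(\frac{k}{n}\right)^{n(n-1)}\;=\;\left(\frac{k^2}{n^2R}\right)^{\binom{n}{2}}.
\]

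Finally I would combine the two estimates: the probability that some labeling of cost exactly $k$ is valid is at most
\[
\left(\frac{enR}{k}\right)^{k}\left(\frac{k^2}{n^2R}\right)^{\binom{n}{2}}\;=\;\left(\frac{(1+o(1))eR}{\epsilon n}\right)^{\epsilon n^2}\left(\frac{(1+o(1))\epsilon^2n^2}{R}\right)^{\binom{n}{2}}.
\]
Taking logarithms, dividing by $n^2$, and writing $R=n^{\alpha}$, the dominant term is $\bigl(\epsilon(\alpha-1)+\tfrac12(2-\alpha)\bigr)\ln n$, so the whole expression is $o(1)$ as soon as $\epsilon(\alpha-1)+\tfrac12(2-\alpha)<0$; a routine calculation shows this inequality holds for every constant $\alpha$ above a threshold depending only on $\epsilon$, and in particular for $\alpha=2(1+\epsilon)/(1-3\epsilon)$ (where the coefficient equals $\epsilon(5\epsilon-3)/(1-3\epsilon)<0$ for $0<\epsilon<1/3$). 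For this choice of $R$ the union bound is $o(1)<1$, so a valid instance exists: a Unique Games instance on $K_n$ with $\Theta(n^2)$ edges, alphabet size $O\!\left(n^{\,2(1+\epsilon)/(1-3\epsilon)}\right)$, for which every valid multi-labeling uses at least $\epsilon n^2$ labels. The one step that needs genuine care is this last calculation: pinning down the exponent requires being honest about the $(1\pm o(1))$ factors, the lower-order $\ln n$ and constant terms, and about ensuring $k^2\ll n^2R$ (which is exactly why the alphabet must be polynomially larger than $n^2$); everything else is standard.
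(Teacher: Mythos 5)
Your proposal is correct and follows the same overall strategy as the paper: a random Unique Games instance on $K_n$ with independent uniform permutations, a union bound over the $\binom{nR}{k}\le(enR/k)^k$ candidate labelings of total cost $k=\epsilon n^2$, and a per-labeling bound on the probability of validity that decays like $(\text{poly}(n)/R)^{\Theta(n^2)}$. The one place you genuinely diverge is in how that per-labeling probability is bounded. The paper applies Markov's inequality to find $n/2$ ``light'' vertices carrying at most $2\alpha=2k/n$ labels each and only multiplies the bound $4\alpha^2/k$ over light edges, whereas you bound every edge by $s_us_v/R$ and then apply AM--GM to $\prod_u s_u^{n-1}$ subject to $\sum_u s_u=k$, arriving at $\bigl(k^2/(n^2R)\bigr)^{\binom{n}{2}}$. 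Your route is cleaner and in fact tighter: it uses all $\binom{n}{2}$ edges rather than only the $\approx\binom{n/2}{2}$ light ones (the paper's displayed exponent of $\binom{n}{2}$ is, strictly speaking, only justified for the light edges), and it dispenses with the light/heavy case split entirely. Your closing exponent computation, $\epsilon(\alpha-1)+\tfrac12(2-\alpha)=\epsilon(5\epsilon-3)/(1-3\epsilon)<0$ for $0<\epsilon<1/3$ at $\alpha=2(1+\epsilon)/(1-3\epsilon)$, checks out, and your explicit restriction to $\epsilon<1/3$ is the right reading of the lemma statement (the paper's exponent is meaningless otherwise).
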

\begin{proof}
  We will prove this by the probabilistic method, i.e.\ we will
  analyze a \emph{random} Unique Games instance with the given
  parameters and show that the probability that it has a solution of
  size at most $O(n^2)$ is strictly less than $1$.  This then implies
  the existence of such an instance.  For our random instance, the
  underlying graph will be $K_n$, so there is a permutation constraint
  on every pair of vertices.  Let $k = |\Sigma|$ be the size of the
  alphabet (we will later set this to the value claimed in the lemma, but for now we will leave it
  as a parameter).  For each pair of vertices we will then select a
  permutation uniformly at random from $S_k$.

  Now consider some fixed set $S$ of $\alpha n$ labels (so the average
  number of labels per node is $\alpha$).  What is the probability
  that $S$ is a valid solution?  By Markov's inequality, we know that
  at most $n/2$ vertices have more than $2\alpha$ labels, so there are
  at least $n/2$ vertices with at most $2\alpha$ labels.  Call these
  vertices \emph{light}, and call an edge \emph{light} if both of its
  endpoints are light.  Let $\{u,v\}$ be a light edge.  We claim that
  the probability that $S$ satisfies $\{u,v\}$ is at most
  $\frac{4\alpha^2}{k}$.  To see this, let $\ell \in \Sigma$ be one of
  the labels assigned to $u$ by $S$.  Since the permutation for
  $\{u,v\}$ was chosen uniformly at random, the probability that
  $\ell$ is matched to one of the labels assigned to $v$ by $S$ is at
  most $2\alpha / k$.  Now we can do a union bound over all such
  labels $\ell$, of which there are at most $2\alpha$, to get that the
  probability that edge $\{u,v\}$ is satisfied by $S$ is at most
  $\frac{4\alpha^2}{k}$.  Since the permutations for each edge are
  chosen independently, the event that edge $e$ is satisfied is
  independent of the event that edge $e'$ is satisfied for all $e'
  \neq e$.  Thus the probability that $S$ satisfies \emph{every} edge
  is at most the product of the probabilities that it satisfies each
  fixed edge, i.e.\ the probability that $S$ is a valid solution is at
  most $\left(\frac{4\alpha^2}{k}\right)^{{n \choose 2}} <
  \left(\frac{4\alpha^2}{k}\right)^{\frac{1-\epsilon}{2}n^2}$ (for sufficiently large $n$).

  By the trivial union bound, we know that the probability that there
  is \emph{some} valid solution of size $\alpha n$ for our random
  instance is at most the sum over all possible solutions of size
  $\alpha n$ of the probability that the solution is valid, which by
  the above analysis we know is at most $|\{S : |S| = \alpha n\}|
  \times \left(\frac{4\alpha^2}{k}\right)^{{n \choose 2}}$.  So we
  will now bound $N = |\{S : |S| = \alpha n\}|$, which is easy to do
  by a simple counting argument.  In particular, it is obvious that $N
  = {kn \choose \alpha n}$, since there are exactly $kn$ total labels
  and we are just choosing $\alpha n$ of them.  Now standard bounds
  for binomial coefficients imply that $N \leq \left(\frac{kne}{\alpha
      n}\right)^{\alpha n} = \left(\frac{ke}{\alpha}\right)^{\alpha
    n}$.  Combining this with the previous analysis and setting
  $\alpha = \epsilon n$, we get that the probability that there is
  some valid solution of size $\alpha n$ is at most

  \begin{align*}
    \left(\frac{ke}{\alpha}\right)^{\alpha n} \times \left(\frac{4\alpha^2}{k}\right)^{\frac{1-\epsilon}{2}n^2} &= \frac{4^{\frac{1-\epsilon}{2}n^2} e^{\epsilon n^2} \alpha^{n^2}}{k^{\frac{1-3\epsilon}{2}n^2}}\\
    & = \frac{4^{\frac{1-\epsilon}{2}n^2} e^{\epsilon n^2} \epsilon^{n^2} n^{n^2}}{k^{\frac{1-3\epsilon}{2}n^2}} \\
    & < \frac{n^{(1+\epsilon)n^2}}{k^{\frac{1-3\epsilon}{2}n^2}}
  \end{align*}

  The final inequality is true as long as $n$ is sufficiently large.  If
  we set $k = n^{\frac{2(1+\epsilon)}{1-3\epsilon}}$ then this expression is
  less than $1$.  Since this is the probability that the random Unique
  Games instance we selected has a satisfying solution of size $\alpha
  n$, this implies that for the given parameters there is \emph{some}
  unique games instance that requires more than $\alpha n = \epsilon
  n^2$ labels.
  \end{proof}

  Now that we have found a Unique Games instance that requires many
  labels we would like to use it to construct a {\sc Constrained
  Connectivity-Sum} instance on which the flow LP has large integrality
  gap.  We will basically use the same transformation that we used in
  the reduction of {\sc Min-Rep} to {\sc Constrained Connectivity-Sum}.  Let
  $V_{UG}$ be the vertex set of the above Unique Games instance, and
  let $\Sigma$ be the alphabet.  Then our {\sc Constrained Connectivity-Sum}
  instance will have vertex set $V$ equal to the disjoint union of
  $V_{UG} \times [d]$, $V_{UG} \times \Sigma$, and a special node $z$,
  where $d$ is a duplication parameter that we will set later.  For
  ease of notation, we will let $x_i$ denote the $i$'th copy of vertex
  $x$ in $V_{UG} \times [d]$, i.e.\ $x_i = (x,i)$.  For all $x \in
  V_{UG}$ and $i \in [d]$ there is an edge from $x_i$ to every vertex
  in $x \times \Sigma$.  For every $x,y \in V_{UG}$ and $\alpha, \beta
  \in \Sigma$ there is an edge between $(x, \alpha)$ and $(y, \beta)$
  if and only if assigning $\alpha$ to $x$ and $\beta$ to $y$ is
  sufficient to satisfy the $\{x,y\}$ edge in the Unique Games
  instance (i.e.\ the permutation for that edge matches them up).
  There is also an edge between every vertex and $z$.  For $x,y \in
  V_{UG}$ and $i \in [d]$ we set $S(x_i, y_i) = S(y_i, x_i) = \{x,y\}
  \cup (x \times \Sigma) \cup (y \times \Sigma)$, and we set all other
  safe sets to the two endpoints and $z$.

  \begin{lemma} \label{lem:LP_gap} The value of the flow LP on the
    above {\sc Constrained Connectivity-Sum} instance is at most $2d|V_{UG}| +
    |\Sigma||V_{UG}| + {|V_{UG}| \choose 2}$.
  \end{lemma}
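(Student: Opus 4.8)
The plan is to write down an explicit feasible fractional solution to the flow LP whose cost exactly matches the claimed bound. I would set the edge capacities as follows: $c_e = 1$ on every ``star'' edge $\{w,z\}$; $c_e = 1/|\Sigma|$ on every edge of the form $\{x_i,(x,\alpha)\}$ joining a copy of a Unique Games vertex to one of its label nodes; and $c_e = 1/|\Sigma|$ on every ``matching'' edge $\{(x,\alpha),(y,\beta)\}$ (recall these are exactly the pairs with $\beta = \pi_{xy}(\alpha)$). First I would total the cost. There are $d|V_{UG}| + |\Sigma||V_{UG}|$ star edges (one per non-$z$ vertex), contributing $d|V_{UG}| + |\Sigma||V_{UG}|$. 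There are $d|V_{UG}||\Sigma|$ edges of the first kind, each contributing $1/|\Sigma|$, for a total of $d|V_{UG}|$. Since the underlying Unique Games graph is $K_{|V_{UG}|}$ and each constraint is a permutation, there are exactly $\binom{|V_{UG}|}{2}|\Sigma|$ matching edges, each contributing $1/|\Sigma|$, for a total of $\binom{|V_{UG}|}{2}$. Summing gives $2d|V_{UG}| + |\Sigma||V_{UG}| + \binom{|V_{UG}|}{2}$, which is the bound in the lemma.

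It then remains to exhibit, for every demand, a flow respecting these capacities. The point to keep in mind is that in the flow LP the capacity constraint is imposed separately for each demand, so each demand independently may use the full capacity of every edge. For every pair whose safe set is just its two endpoints together with $z$ --- which is every pair except the pairs $(x_i,y_i)$ carrying a Unique Games constraint on $\{x,y\}$ --- I would route the one unit of demand along the length-two path through $z$, which is legal since both incident star edges have capacity $1$ and $z$ lies in the safe set. For a pair $(x_i,y_i)$ with a constraint on $\{x,y\}$, whose safe set is $\{x_i,y_i\}\cup(x\times\Sigma)\cup(y\times\Sigma)$ and in particular excludes $z$, I would split the unit of demand into $|\Sigma|$ equal pieces, sending $1/|\Sigma|$ units along the path $x_i \to (x,\alpha) \to (y,\pi_{xy}(\alpha)) \to y_i$ for each $\alpha \in \Sigma$. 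Every node on such a path lies in $S(x_i,y_i)$, the total flow is $1$, and --- crucially --- because $\pi_{xy}$ is a permutation, distinct values of $\alpha$ give paths using disjoint sets of edges, so this demand places exactly $1/|\Sigma|$ units on each edge it uses, within capacity. All capacities and path flows are at most $1$, so the remaining LP box constraints hold.

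I do not anticipate any real obstacle; the only thing requiring care is the bookkeeping, in particular the observation that the per-demand nature of the capacity constraints is precisely what allows the $\binom{|V_{UG}|}{2}|\Sigma|$ matching edges to cost only $\binom{|V_{UG}|}{2}$ in total --- each demand touches only $|\Sigma|$ of them, and only to extent $1/|\Sigma|$ --- while the $z$-star handles all the ``uninteresting'' pairs at cost essentially $|V|$. Together with Lemma~\ref{lem:Unique} and the reduction underlying Theorem~\ref{thm:Subgraph_Reduce}, which force the integral optimum to be large, this fractional solution is exactly what will yield the $n^{1/3-\epsilon}$ integrality gap.
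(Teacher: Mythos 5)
Your proposal is correct and matches the paper's proof essentially line for line: both set capacity $1$ on the $z$-star and $1/|\Sigma|$ on all remaining edges, do the same edge count, route the trivial pairs through $z$, and split each constraint demand $(x_i,y_i)$ into $|\Sigma|$ disjoint length-three paths through matched label pairs. Your explicit remark that the capacity constraints are imposed per demand is a correct and worthwhile clarification of why the $1/|\Sigma|$ capacities suffice, but it does not change the argument.
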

  \begin{proof}
    We prove this by constructing an LP solution of the required size.
    We first set the capacity of every edge incident on $z$ to $1$,
    for a total cost of $|\Sigma||V_{UG}| + d|V_{UG}|$.  This is
    enough capacity to satisfy all pairs other than those of the form
    $(x_i, y_i)$ or $(y_i, x_i)$, since for any other pair $z$ is in
    the safe set so we can send one unit of flow on the edge from one
    endpoint to $z$ and then one unit of flow on the edge from $z$ to
    the other endpoint.

    Now we set the capacity of every other edge to $1/|\Sigma|$.
    Since the number of other edges is $d|V_{UG}||\Sigma| + {|V_{UG}|
      \choose 2} |\Sigma|$ this costs us $d|V_{UG}| + {|V_{UG}|
      \choose 2}$ more, which when added to the cost of the edges to
    $z$ gives us the claimed total LP value.  So we just need to prove
    that this is enough capacity to satisfy demands between $x_i$ and
    $y_i$ for all $x,y \in V$ and $i \in [d]$.  But this is easy to
    see: $x_i$ can send $1/|\Sigma|$ flow to every node in $x \times
    \Sigma$ (for a total flow of $1$), and each of these nodes will
    forward its incoming flow to its neighbor in $y \times \Sigma$.
    Since this is a Unique Games instance this neighbor will be
    unique, and each node in $y \times \Sigma$ will have exactly $1 /
    |\Sigma|$ incoming flow, which it can then forward along its edge
    to $y_i$.  Thus we have enough capacity to send one unit of flow
    from $x_i$ to $y_i$.  And $y_i$ can send flow to $x_i$ the same
    way, just in reverse.
  \end{proof}

  \begin{lemma} \label{lem:IP_gap} Any integral solution to the above
    {\sc Constrained Connectivity-Sum} instance must have size at least $(d
    \times OPT_{UG}) + {|V_{UG}| \choose 2} + d|V_{UG}| +
    |\Sigma||V_{UG}|$ where $OPT_{UG}$ is the minimum number of labels
    needed to satisfy the original Unique Games instance.
  \end{lemma}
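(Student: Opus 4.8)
The plan is to follow the template of the ``if'' direction of Theorem~\ref{thm:Subgraph_Reduce}: I will argue that in any feasible solution the present edges split into three disjoint families — those incident on $z$, those between label nodes of two distinct Unique Games vertices, and those joining a copy-vertex $x_i$ to a label node of $x$ — and lower bound each family separately. Since the construction only contains edges of these three kinds, the family sizes simply add, and the three bounds will come out to exactly $d|V_{UG}| + |\Sigma||V_{UG}|$, $\binom{|V_{UG}|}{2}$, and $d\cdot OPT_{UG}$ respectively.

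First, because $S(w,z) = \{w,z\}$ for every vertex $w \ne z$, the only $w$--$z$ path inside the safe set is the single edge $\{w,z\}$, so every edge incident on $z$ is forced; as $z$ is adjacent to all $d|V_{UG}| + |\Sigma||V_{UG}|$ other vertices, this accounts for that many edges. Second, I would analyze the demands $(x_i, y_i)$ for a Unique Games edge $\{x,y\}$ (recall the UG graph is complete, so every pair is an edge). The safe set $S(x_i,y_i)$ consists of $x_i$, $y_i$, and all label nodes of $x$ and of $y$; inside it the instance has no edges internal to $x\times\Sigma$ or to $y\times\Sigma$, $x_i$ is not adjacent to $y_i$, and --- crucially, since this is Unique Games --- the label edges between $x\times\Sigma$ and $y\times\Sigma$ form a matching (a subgraph of the one given by $\pi_{xy}$). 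Consequently any safe $x_i$--$y_i$ path must have the form $x_i - (x,\alpha) - (y,\beta) - y_i$ and use one present label edge; this already yields at least one present label edge per UG pair, hence at least $\binom{|V_{UG}|}{2}$ such edges, and these touch neither a copy-vertex nor $z$.

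Third, for each copy $i\in[d]$ I would set $L_i(x) = \{\alpha : \{x_i,(x,\alpha)\}\text{ is present}\}$. The forced path shape above shows that for every UG edge $\{x,y\}$ there is $\alpha \in L_i(x)$ with $\pi_{xy}(\alpha) = \beta \in L_i(y)$, so the multi-assignment $x \mapsto L_i(x)$ satisfies every UG constraint and therefore $\sum_x |L_i(x)| \ge OPT_{UG}$; that is, copy $i$ contributes at least $OPT_{UG}$ present edges between copy-vertices and label nodes. Edges for different copies touch disjoint sets of copy-vertices, so summing over $i$ gives at least $d\cdot OPT_{UG}$ distinct edges, disjoint from the previous two families. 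Adding the three bounds completes the proof.

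The one point that needs care is the claim that a safe $x_i$--$y_i$ path is forced to have length exactly three: this rests on the observation that the bipartite subgraph of present label edges between $x\times\Sigma$ and $y\times\Sigma$ is a sub-matching of $\pi_{xy}$, so any path through it is a single edge, together with the fact that $x_i$ (resp.\ $y_i$) has no safe-set neighbor other than a label node of $x$ (resp.\ $y$) --- in particular $z \notin S(x_i,y_i)$. Everything else is routine bookkeeping about which edges touch $z$, which touch a copy-vertex, and which touch only label nodes, which is exactly what makes the three families pairwise disjoint and their union all of the edges present in a feasible solution.
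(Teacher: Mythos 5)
Your proof is correct and follows essentially the same route as the paper's: the same decomposition into the forced star at $z$, at least one label edge per Unique Games pair, and at least $OPT_{UG}$ copy-to-label edges per copy $i$ (your $L_i(x)$ is exactly the paper's ``endpoints in $V_{UG}\times\Sigma$ of the $i$th part of the partition''), all hinging on the observation that a safe $x_i$--$y_i$ path is forced to have length three through matched labels. Your write-up is somewhat more explicit than the paper's about the pairwise disjointness of the three edge families and about why the path shape is forced, but these are refinements of the same argument, not a different one.
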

  \begin{proof}
    The safe set of any node and $z$ is only that node and $z$, so all
    edges incident to $z$ need to be present in any integral solution
    for a cost of $d|V_{UG}| + |\Sigma||V_{UG}|$.  Furthermore, for
    every pair $u,v \in V_{UG}$ at least one edge must be present from
    $(u \times \Sigma)$ to $(v \times \Sigma)$ since if no such edge
    existed there would be no way of connecting $u_i$ and $v_i$
    through $S(u_i, v_i)$ for any $i \in [d]$.  This adds ${|V_{UG}|
      \choose 2}$ to the total cost, so now we just need to prove that
    there must be at least $d OPT_{UG}$ edges between $(V_{UG} \times
    [d])$ and $(V_{UG} \times \Sigma)$.

    To show this, we will consider some arbitrary integral solution
    and partition the edges between $(V_{UG} \times [d])$ and $(V_{UG}
    \times \Sigma)$ into $d$ parts where the $i$th part consists of
    those edges incident on nodes $\{x_i : x \in V_{UG}\}$.  If every
    part has size at least $OPT_{UG}$ then we are finished.  To prove
    that this is indeed the case, we will prove that for every part,
    the endpoints that are in $V_{UG} \times \Sigma$ actually form a
    valid solution to the Unique Games instance.  So consider the
    $i$th part of the partition.  Suppose that the associated label
    assignment does not form a valid solution to the Unique Games
    instance.  Then there is some pair $u,v \in V$ such that none of
    the labels assigned to $u$ and none of the labels assigned to $v$
    are matched to each other in the permutation corresponding to edge
    $\{u,v\}$.  But this clearly implies that there is no safe path
    from $u_i$ to $v_i$, as any such path must be of length $3$ and
    pass through a label for $u$ and a label for $v$ that are matched
    to each in the permutation corresponding to edge $\{u,v\}$.  This
    is a contradiction since the integral solution must be a valid
    solution.
  \end{proof}

  \begin{theorem} \label{thm:gap} The flow LP for {\sc Constrained Connectivity-Sum} has an integrality gap
    of $\Omega(n^{\frac{1}{3} - \epsilon})$ for any constant $\epsilon
    > 0$.
  \end{theorem}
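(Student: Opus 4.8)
The plan is to plug the three preceding lemmas together with a single well-chosen value of the duplication parameter $d$; essentially all of the real work has already been done, and what remains is the arithmetic of balancing $d$ and translating the bounds into the instance size $n$. Fix a small constant $\epsilon>0$ and let $c = \frac{2(1+\epsilon)}{1-3\epsilon}$, so that Lemma~\ref{lem:Unique} supplies a Unique Games instance with $|V_{UG}|$ vertices, $\Theta(|V_{UG}|^2)$ edges, alphabet size $|\Sigma| = \Theta(|V_{UG}|^c)$, and $OPT_{UG}\ge \epsilon |V_{UG}|^2$. Run the reduction preceding Lemma~\ref{lem:LP_gap} on this instance, choosing the duplication parameter to be $d = |\Sigma|$.

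First I would bound the LP value from above: by Lemma~\ref{lem:LP_gap} it is at most $2d|V_{UG}| + |\Sigma||V_{UG}| + {|V_{UG}|\choose 2}$, which with $d=|\Sigma|=\Theta(|V_{UG}|^c)$ equals $\Theta(|V_{UG}|^{c+1})$, since $c+1>2$ makes the $|V_{UG}|^{c+1}$ terms dominate the ${|V_{UG}|\choose 2}$ term. Next I would bound every integral solution from below: by Lemma~\ref{lem:IP_gap} it has at least $d\cdot OPT_{UG} \ge |\Sigma|\cdot \epsilon |V_{UG}|^2 = \Omega(|V_{UG}|^{c+2})$ edges. Dividing, the integrality gap is $\Omega(|V_{UG}|^{c+2}/|V_{UG}|^{c+1}) = \Omega(|V_{UG}|)$.

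Finally I would convert this into a bound in terms of $n$, the number of vertices of the Constrained Connectivity instance. By construction $n = |V_{UG}|d + |V_{UG}||\Sigma| + 1 = \Theta(|V_{UG}||\Sigma|) = \Theta(|V_{UG}|^{c+1})$, so $|V_{UG}| = \Theta(n^{1/(c+1)})$ and the gap is $\Omega(n^{1/(c+1)})$. A short computation gives $\frac{1}{c+1} = \frac{1-3\epsilon}{3-\epsilon}$, which tends to $\frac13$ as $\epsilon\to 0$; hence for any desired $\epsilon'>0$ one picks $\epsilon$ small enough that $\frac{1}{c+1} > \frac13 - \epsilon'$, yielding the claimed $\Omega(n^{1/3-\epsilon'})$ gap. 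Since the flow LP and the cut LP describe the same polytope with respect to the capacity variables, the same gap holds for the cut LP.

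The main thing to get right — rather than a genuine obstacle — is the choice $d=|\Sigma|$: a larger $d$ only inflates $n$ and so hurts the exponent, while a smaller $d$ shrinks the $d\cdot OPT_{UG}$ term and so wastes the integral lower bound, so the balance point is essentially forced. The one routine check is that in both the LP upper bound and the integral lower bound the $z$-star, $U$–$V$, and $x/y$–$\Sigma$ edge counts are genuinely lower-order, which holds precisely because $c>1$ (indeed $c>2$).
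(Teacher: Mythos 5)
Your proposal is correct and follows essentially the same route as the paper's proof: the same choice $d=|\Sigma|$, the same combination of Lemmas~\ref{lem:Unique}, \ref{lem:LP_gap}, and~\ref{lem:IP_gap}, and the same final exponent $\frac{1-3\epsilon}{3-\epsilon}=\frac{1}{c+1}$ tending to $\frac13$. The only differences are notational (introducing $c$ as shorthand), so there is nothing further to add.
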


  \begin{proof}
    We will use the Unique Games instance of Lemma \ref{lem:Unique}
    in the above reduction.  Lemma \ref{lem:LP_gap} implies that the
    flow LP has value at most $O(d |V_{UG}| +
    |V_{UG}|^{\frac{3-\epsilon}{1-3\epsilon}})$ and Lemma
    \ref{lem:IP_gap} implies that any integral solution has size at
    least $\Omega(d\epsilon|V_{UG}|^2) +
    |V_{UG}|^{\frac{3-\epsilon}{1-3\epsilon}})$.  If we let $d = |\Sigma| =
    |V_{UG}|^{\frac{2(1+\epsilon)}{1-3\epsilon}}$ then this gives us an
    integrality gap of
  \begin{equation*}
    \Omega\left(\frac{\epsilon |V_{UG}|^{\frac{4-4\epsilon}{1-3\epsilon}}}{|V_{UG}|^{\frac{3-\epsilon}{1-3\epsilon}}}\right) = \Omega\left(\epsilon|V_{UG}|\right).
  \end{equation*}
  It is easy to see that the number of nodes $n$ in our reduction
  equals $d|V_{UG}| + |\Sigma||V_{UG}|+1$ which in this case is
  $\Theta(|V_{UG}|^{\frac{3-\epsilon}{1-3\epsilon}})$.  Thus the integrality
  gap is $\Omega(n^{\frac{1-3\epsilon}{3-\epsilon}})$, which is
  sufficient since we can set $\epsilon$ to be arbitrarily small.
\end{proof}

We can modify this construction to show a polynomial integrality gap for the flow LP for {\sc Constrained Connectivity-Degree} also.  We will need Unique Games instances with the same parameters as in Lemma \ref{lem:Unique} but on the complete bipartite graph rather than the complete graph.  It is easy to see that Lemma \ref{lem:Unique} can be modified to prove the existence of these instance.  Now the modification is basically the same as the modification we made to show hardness: we just make $d^2$ copies of the inner Unique Games instance and connect them up to the $d$ copies of the outer $x_i$ and $y_i$ nodes in the obvious way.

\begin{theorem} \label{thm:degree_gap}
The flow LP for {\sc Constrained Connectivity-Degree} has an integrality gap of $\Omega(n^{\frac19 - \epsilon})$ for any constant $\epsilon > 0$.
\end{theorem}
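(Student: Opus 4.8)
The plan is to combine the integrality-gap construction of Theorem~\ref{thm:gap} with the ``blow up into $d^2$ copies'' device that turned Theorem~\ref{thm:ccsumhard} into Theorem~\ref{thm:ccdeghard}. Start from the hard Unique Games instance promised by (the bipartite variant of) Lemma~\ref{lem:Unique}: a complete bipartite graph on vertex set $V_{UG}$, alphabet $\Sigma$ of size $\Theta(|V_{UG}|^{2(1+\epsilon)/(1-3\epsilon)})$, $\Theta(|V_{UG}|^2)$ edges, and $OPT_{UG}\ge\epsilon|V_{UG}|^2$. Build a {\sc Constrained Connectivity-Degree} instance with: $d$ ``outer'' copies $x_i$ of each left vertex $x$ and $d$ outer copies $y_j$ of each right vertex $y$; for each $(i,j)\in[d]^2$ an independent ``inner'' copy holding a label node $(x,\alpha)^{ij}$ for every $x\in V_{UG},\alpha\in\Sigma$, a node $z^{ij}$, the matching edges $(x,\alpha)^{ij}$--$(y,\beta)^{ij}$ whenever $\pi_{xy}(\alpha)=\beta$, and all edges from $z^{ij}$ to the rest of copy $(i,j)$; and an edge from $x_i$ to $(x,\alpha)^{ij}$ for all $\alpha$ and all $j$, an edge from $x_i$ to $z^{ij}$ for all $j$, and symmetrically for the $y_j$. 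The safe sets mirror Theorem~\ref{thm:ccdeghard}: $S(x_i,y_j)=\{x_i,y_j\}\cup\{(x,\alpha)^{ij}\}_{\alpha}\cup\{(y,\beta)^{ij}\}_{\beta}$ for every UG edge $(x,y)$ (the matching structure is exactly what lets the LP split flow), every intra-copy pair gets its two endpoints plus $z^{ij}$, and every inter-copy pair gets all of $V$.

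For the LP upper bound I would imitate Lemma~\ref{lem:LP_gap}: assign capacity $1$ to every edge incident to some $z^{ij}$ and capacity $1/|\Sigma|$ to every other edge. Feasibility of the flow LP is checked exactly as before --- intra-copy demands route through the relevant $z^{ij}$; the demand $(x_i,y_j)$ splits one unit of flow evenly over the $|\Sigma|$ nodes $(x,\alpha)^{ij}$, each $1/|\Sigma|$ unit crossing the unique matching edge to the corresponding $(y,\beta)^{ij}$ and then to $y_j$; inter-copy demands relay through $z$ nodes as in the proof of Theorem~\ref{thm:ccdeghard}. The crucial point is the maximum fractional degree: each $z^{ij}$ has degree $\Theta(|V_{UG}||\Sigma|)$, each label node has degree $\Theta(|V_{UG}|/|\Sigma|)+1=O(1)$, and each outer node $x_i$ has degree $\Theta(d)$ (total $1$ across its $|\Sigma|$ capacity-$1/|\Sigma|$ edges to label nodes in each of the $d$ copies it meets, plus $1$ from each of its $d$ edges to $z$ nodes). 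Choosing $d=|V_{UG}||\Sigma|$ makes all three quantities $\Theta(|V_{UG}||\Sigma|)$, so the LP admits a solution of maximum degree $O(|V_{UG}||\Sigma|)$.

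For the integral lower bound I would generalize Lemma~\ref{lem:IP_gap}. Every edge incident to a $z^{ij}$ is forced (those safe sets are singletons plus $z^{ij}$), so each $z^{ij}$ also has integral degree $\Theta(|V_{UG}||\Sigma|)$ --- no gap there. The gap is produced by the outer nodes: fix a copy $(i,j)$ and look at the edges between the outer set $\{x_i\}\cup\{y_j\}$ and copy $(i,j)$; reading, for each $x$, the labels $\{\alpha:(x,\alpha)^{ij}\sim x_i\}$, one checks (as in Theorem~\ref{thm:Subgraph_Reduce}: any safe $x_i$--$y_j$ path has length $3$, since the matching edges are injective a path cannot usefully alternate between $x$- and $y$-label nodes) that this reading is a valid Unique Games solution, so copy $(i,j)$ contains $\ge OPT_{UG}\ge\epsilon|V_{UG}|^2$ such edges. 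Summing over the $d^2$ copies gives $\ge d^2\epsilon|V_{UG}|^2$ edges incident to only $\Theta(d|V_{UG}|)$ outer nodes, so some outer node has integral degree $\Omega(\epsilon d|V_{UG}|)=\Omega(\epsilon|V_{UG}|^2|\Sigma|)$. Hence the integrality gap is $\Omega(\epsilon|V_{UG}|^2|\Sigma|)/O(|V_{UG}||\Sigma|)=\Omega(\epsilon|V_{UG}|)$. Finally $n=\Theta(d^2|V_{UG}||\Sigma|)=\Theta((|V_{UG}||\Sigma|)^3)$, and with $|\Sigma|=\Theta(|V_{UG}|^{2(1+\epsilon)/(1-3\epsilon)})$ this is $\Theta\!\big(|V_{UG}|^{(9-3\epsilon)/(1-3\epsilon)}\big)$, so the gap is $\Omega\!\big(n^{(1-3\epsilon)/(9-3\epsilon)}\big)$; letting $\epsilon\to0$ (i.e.\ using a smaller $\epsilon'$) gives $\Omega(n^{1/9-\epsilon})$ for every constant $\epsilon>0$.

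The main obstacle is getting the degree bookkeeping in the LP solution exactly right. The $z$ nodes are unavoidably high-degree, $\Theta(|V_{UG}||\Sigma|)$, in \emph{both} the LP and every integral solution, so the duplication parameter $d$ must be pushed up to $\Theta(|V_{UG}||\Sigma|)$ --- just large enough that the $\Theta(d)$ LP-degree of the outer nodes does not exceed this, yet small enough that $n$ stays $\Theta((|V_{UG}||\Sigma|)^3)$. This three-way balance (forced $z$-degree, chosen $d$, cubic vertex count) is exactly what degrades the $n^{1/3-\epsilon}$ gap of the sum case to $n^{1/9-\epsilon}$. A secondary point to verify carefully is the ``length-$3$ path'' claim, which is what pins the integral cost of each copy to $OPT_{UG}$.
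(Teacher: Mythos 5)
Your proposal is correct and follows essentially the same route as the paper: the same $d^2$-copy construction with bipartite Unique Games instances, the same choice $d=\Theta(|V_{UG}||\Sigma|)$ to balance the forced $z$-degree against the outer nodes' LP degree of $\Theta(d)$, the same fractional solution with capacities $1/|\Sigma|$, and the same counting of $d^2\cdot\epsilon|V_{UG}|^2$ forced edges over $\Theta(d|V_{UG}|)$ outer nodes, yielding a gap of $\Omega(\epsilon|V_{UG}|)=\Omega(n^{(1-3\epsilon)/(9-3\epsilon)})$. No substantive differences from the paper's argument.
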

\begin{proof}
The maximum degree of any node other than the outer $d$ copies of the $x$ and $y$ nodes is at most $2 |V_{UG}|^{\frac{3-\epsilon}{1-3\epsilon}}$, so if we set $d$ equal to that value we know that the maximum degree must be achieved by some copy of an $x_i$ or $y_i$.  By splitting up the flow equally as in the proof of Lemma \ref{lem:LP_gap} we know that there is an LP solution in which the maximum degree is at most $d|\Sigma| / |\Sigma| + d = 2d$ (where the extra $d$ factor is due to being adjacent to all associated $z$ copies).  On the other hand, we know that any valid integer solution must use at least $\epsilon |V_{UG}|^2$ edges incident on copies of $x_i$ or $y_i$ nodes for each of the $d^2$ instances.  Thus there are at least $d^2 \epsilon |V_{UG}|^2$ edges incident on these nodes in total, and since there are $d|V_{UG}|$ such nodes there must be at least one with degree at least $\epsilon d |V_{UG}|$.  Thus the integrality gap is at least $\epsilon d |V_{UG}| / d = \epsilon |V_{UG}|$.  The total number of nodes in our {\sc Constrained Connectivity-Degree} instance is $O(|V_{UG}| |\Sigma| d^2) = O(|V_{UG}|^{\frac{9-3\epsilon}{1-3\epsilon}})$, so this means the integrality gap is $\Omega(n^{\frac{1-3\epsilon}{9-3\epsilon}})$.  By setting $\epsilon$ small enough this gives us the claimed gap of $\Omega(n^{\frac19-\epsilon})$.
\end{proof}

\section{Hierarchical and Symmetric Safe Sets} \label{sec:hierarchical}

While the constraint the $G = K_n$ gave us some extra power for the iBGP problems, we did not leverage the structure of the safe sets in any way.  In this section we get rid of the requirement on $G$, but show that if the safe sets have an extremely nice structure then {\sc Constrained Connectivity-Sum} can actually be solved optimally in polynomial time.  In the
hierarchical and symmetric safe set version of {\sc Constrained Connectivity-Sum}, $S(x,y) = S(y,x)$ for all $x,y \in V$ and
if some node $z \in S(x,y)$ then $S(x,z) \subseteq S(x,y)$ and $S(z,y)
\subseteq S(x,y)$.  We show that a simple greedy algorithm solves this
version optimally.

We say that a pair $\{x,y\}$ is an \emph{easy} pair if there is some
node $z \in S(x,y)$ such that $S(x,z) \subset S(x,y)$ and $S(y,z)
\subset S(x,y)$.  The pair $\{x,y\}$ is \emph{hard} otherwise.  Note
that in a hard pair $\{x,y\}$, every node $z$ in $S(x,y)$ has either
$S(x,z) = S(x,y)$ or $S(y,x) = S(x,y)$ by the hierarchy property.

\begin{lemma} \label{lem:hierarchy_easy}
  Let $G$ be a graph that has a safe path for all hard pairs.  Then
  all easy pairs also have a safe path in $G$, i.e.\ $G$ is a feasible
  solution.
\end{lemma}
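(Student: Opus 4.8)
The plan is to prove the statement by strong induction on $|S(x,y)|$, establishing the slightly stronger claim that $G$ contains a safe path for \emph{every} pair $\{x,y\}$, hard or easy. The hard pairs are handled directly by the hypothesis of the lemma, so all the content lies in the inductive step for easy pairs, where I will ``split'' the pair through its easiness witness and recurse on two strictly smaller safe sets.

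First I would dispose of the base case. Since $x,y \in S(x,y)$ always, the smallest a safe set can be is $\{x,y\}$ itself. Any pair with $|S(x,y)| = 2$ is necessarily hard: the only candidates $z \in S(x,y)$ to witness easiness are $x$ and $y$, but by symmetry $S(y,x) = S(x,y)$ (taking $z = x$) and $S(x,y) = S(x,y)$ (taking $z = y$), so the strict-containment requirement in the definition of an easy pair cannot be met. Hence for the minimum safe-set size every pair is hard and there is nothing to prove beyond the hypothesis.

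For the inductive step, let $\{x,y\}$ be an easy pair and let $z \in S(x,y)$ be a witness, so that $S(x,z) \subsetneq S(x,y)$ and, using symmetry, $S(z,y) = S(y,z) \subsetneq S(x,y)$. Since $|S(x,z)| < |S(x,y)|$ and $|S(z,y)| < |S(x,y)|$, the induction hypothesis applies to the pairs $\{x,z\}$ and $\{z,y\}$ --- whether those pairs are hard (covered by the hypothesis of the lemma) or themselves easy (covered by recursion) --- yielding a safe path $P_1$ from $x$ to $z$ with $P_1 \subseteq S(x,z)$ and a safe path $P_2$ from $z$ to $y$ with $P_2 \subseteq S(z,y)$, both in $G$. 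Concatenating $P_1$ and $P_2$ gives a walk from $x$ to $y$ in $G$; every vertex on it lies in $S(x,z) \cup S(z,y)$, which by the hierarchy property (applied to $z \in S(x,y)$) is contained in $S(x,y)$. Extracting a simple path from this walk then produces a safe $x$-to-$y$ path in $G$, completing the induction and hence the proof.

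The only genuine obstacle is ensuring the induction is well-founded: the two recursive calls must land on strictly smaller safe sets, and this is exactly why the easy/hard dichotomy is phrased with \emph{strict} containment --- for hard pairs no such progress is available, which is precisely why they are assumed given. Everything else (the walk-to-path extraction, and the use of symmetry to identify $S(y,z)$ with $S(z,y)$) is routine.
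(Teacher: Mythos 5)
Your proof is correct and follows essentially the same route as the paper's: strong induction on $|S(x,y)|$, with hard pairs handled by hypothesis and easy pairs split through the easiness witness $z$ into two strictly smaller safe sets whose paths are concatenated inside $S(x,y)$ via the hierarchy property. The extra care you take (justifying that size-2 pairs are hard, and extracting a simple path from the concatenated walk) only fills in details the paper leaves implicit.
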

\begin{proof}
  We prove that every pair $\{x,y\}$ has a safe path in $G$ by
  induction on the size of safe sets.  For the base case, all pairs
  $\{x,y\}$ with $|S(x,y)| = 2$ are hard, so by assumption they have a
  safe path in $G$.  For the inductive step, suppose that there are
  safe paths for all pairs $\{u,v\}$ with $|S(u,v)| < k$, and let
  $\{x,y\}$ be a pair with $|S(x,y)| = k$.  If $\{x,y\}$ is hard then
  by assumption there is a safe path.  If it is easy, then there is
  some node $z \in S(x,y)$ such that $S(x,z) \subset S(x,y)$ and
  $S(y,z) \subset S(x,y)$.  Since these two subsets are strictly
  smaller, by induction there is an $x-z$ path contained in $S(x,z)
  \subset S(x,y)$ and there is a $z-y$ path contained in $S(y,z)
  \subset S(x,y)$.  Concatenating these paths give an $x-y$ path
  contained in $S(x,y)$.
\end{proof}

This lemma means that we don't have to worry about satisfying easy
pairs, just hard ones.  We now prove a few structural lemmas that will be useful when designing an algorithm.

\begin{lemma} \label{lem:hierarchy_strong_containment} Let $\{x,y\}$
  be a hard pair.  Then $S(u,v) \subseteq S(x,y)$ for all $u,v \in
  S(x,y)$.
\end{lemma}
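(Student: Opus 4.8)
The plan is to derive the claim directly from the definition of a hard pair together with a single application of the hierarchical property; no induction is needed (unlike in Lemma~\ref{lem:hierarchy_easy}). Fix a hard pair $\{x,y\}$ and let $u,v\in S(x,y)$ be arbitrary. The goal is to show $S(u,v)\subseteq S(x,y)$.

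The first step is to invoke hardness at the node $u$. Since $u\in S(x,y)$ and $\{x,y\}$ is hard, at least one of the equalities $S(x,u)=S(x,y)$ or $S(y,u)=S(x,y)$ must hold. (Recall that the hierarchy property already guarantees $S(x,u)\subseteq S(x,y)$ and $S(y,u)\subseteq S(x,y)$, so ``hard'' is precisely the statement that one of these containments is actually an equality — this is the one logical point to get right.) Because both the hypothesis (hardness of $\{x,y\}$) and the conclusion are symmetric under swapping $x$ and $y$, and the safe sets are symmetric, we may assume without loss of generality that $S(x,u)=S(x,y)$.

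The second step uses $v$. Since $v\in S(x,y)=S(x,u)$, the hierarchical property applied to the pair $(x,u)$ with the element $v$ yields $S(v,u)\subseteq S(x,u)$. Combining this with the previous step and using symmetry of the safe sets once more, $S(u,v)=S(v,u)\subseteq S(x,u)=S(x,y)$. Since $u,v\in S(x,y)$ were arbitrary, this proves the lemma.

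I do not anticipate a genuine obstacle here: the argument is essentially a one-line deduction once the definition of ``hard'' is unpacked. The only care required is in justifying the ``without loss of generality'' — which is legitimate since, given symmetry of the safe sets, everything in sight (the hard condition, the statement $S(u,v)\subseteq S(x,y)$) is invariant under interchanging the roles of $x$ and $y$ — and in remembering that membership of $v$ in $S(x,y)$ together with a single invocation of the hierarchy is exactly what closes the argument after hardness has been applied to $u$.
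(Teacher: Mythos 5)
Your proof is correct and is essentially identical to the paper's: both apply the definition of hardness at $u$ to get (WLOG) $S(x,u)=S(x,y)$, then use $v\in S(x,u)$ and one application of the hierarchy property to conclude $S(u,v)\subseteq S(x,u)=S(x,y)$.
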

\begin{proof}
  Since $\{x,y\}$ is hard either $S(u,x) = S(x,y)$ or $S(u,y) =
  S(x,y)$.  Without loss of generality we assume that $S(u,x) =
  S(x,y)$.  This implies that $v \in S(u,x)$, so by the hierarchy
  property we know that $S(u,v) \subseteq S(u,x) = S(x,y)$.
\end{proof}

This clearly implies that if $G$ is a feasible solution and $\{x,y\}$
is a hard pair then $G|_{S(x,y)}$ is connected and all pairs $u,v \in
S(x,y)$ have a safe path contained in $S(x,y)$.  We now prove some
lemmas about the structure of the optimal solution.

\begin{lemma} \label{lem:hierarchy_OPT_hard}
  Every edge $\{x,y\} \in OPT$ is a hard pair.
\end{lemma}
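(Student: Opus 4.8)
The plan is to argue by contradiction: if an optimal solution $OPT$ contained an edge $\{x,y\}$ that is an \emph{easy} pair, then $H' = OPT \setminus \{\{x,y\}\}$ would still be feasible, contradicting optimality (since $|H'| = |OPT| - 1$). The only real content is showing that deleting a single easy edge cannot destroy feasibility even though that edge might a priori lie on many required safe paths; for this I would run a minimality argument over the set of pairs whose safe path is lost.

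Concretely, suppose $\{x,y\} \in OPT$ is easy and set $H' = OPT \setminus \{\{x,y\}\}$. Call a pair $\{u,v\}$ \emph{broken} if it has no safe path in $H'$; since $OPT$ is feasible, every broken pair has a safe $u$--$v$ path in $OPT$ and every such path must use the edge $\{x,y\}$ (hence $x,y \in S(u,v)$, because that path lies inside $S(u,v)$). Assume for contradiction that a broken pair exists and choose one, $\{u,v\}$, with $|S(u,v)|$ minimum. First I would show $\{u,v\}$ is a \emph{hard} pair: if it were easy, witnessed by some $w \in S(u,v)$ with $S(u,w),S(w,v) \subsetneq S(u,v)$, then $\{u,w\}$ and $\{w,v\}$ have strictly smaller safe sets and so (by minimality) are not broken, and concatenating their safe paths in $H'$ — which lie in $S(u,w)\cup S(w,v) \subseteq S(u,v)$ — yields a safe $u$--$v$ path in $H'$, contradicting brokenness. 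So $\{u,v\}$ is hard, and since $x,y \in S(u,v)$, Lemma~\ref{lem:hierarchy_strong_containment} gives $S(x,y) \subseteq S(u,v)$.

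Now I would use that $\{x,y\}$ itself is easy: there is $z \in S(x,y)$ with $S(x,z),S(z,y) \subsetneq S(x,y) \subseteq S(u,v)$, so the pairs $\{x,z\}$ and $\{z,y\}$ have safe sets strictly smaller than $|S(u,v)|$ and hence are not broken; concatenating their safe paths in $H'$ gives a safe $x$--$y$ path $R$ in $H'$ that lies inside $S(x,z)\cup S(z,y) \subseteq S(x,y)$ and does not use the edge $\{x,y\}$ (it is a path in $H'$). Finally, take a safe $u$--$v$ path $Q$ in $OPT$ using $\{x,y\}$; since $Q$ is a simple path it uses that edge exactly once, and replacing it by $R$ produces a $u$--$v$ walk in $H'$ all of whose vertices lie in $S(u,v)$ (those of $Q$ do, and those of $R$ lie in $S(x,y)\subseteq S(u,v)$), hence a safe $u$--$v$ path in $H'$. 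This contradicts $\{u,v\}$ being broken, so no broken pair exists, $H'$ is feasible, and we contradict the optimality of $OPT$. Therefore every edge of $OPT$ is a hard pair. The step I expect to require the most care is exactly the combination used here — that one deleted easy edge can be rerouted while staying inside the relevant safe set — and it is the minimality-over-broken-pairs induction that makes this clean rather than a tangle of simultaneous reroutings.
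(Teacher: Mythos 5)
Your proof is correct, and it reaches the conclusion by a genuinely different mechanism than the paper's, even though the skeleton (delete the easy edge $\{x,y\}$, use its witness $z$ to build a replacement $x$--$z$--$y$ route, and splice that route into any safe path that used the deleted edge via Lemma~\ref{lem:hierarchy_strong_containment}) is shared. The difference lies in how each argument certifies that the replacement route survives the deletion. The paper observes that $y \notin S(x,z)$ and $x \notin S(y,z)$ --- otherwise the hierarchy property would force $S(x,z) = S(x,y)$ or $S(y,z) = S(x,y)$, contradicting easiness --- so the safe $x$--$z$ and $z$--$y$ paths guaranteed by feasibility of $OPT$ cannot even visit the other endpoint of the deleted edge and therefore avoid it automatically; it then only needs to repair \emph{hard} pairs, delegating the easy ones to Lemma~\ref{lem:hierarchy_easy}. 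You instead run a minimal-counterexample argument over ``broken'' pairs: minimality guarantees that $\{x,z\}$ and $\{z,y\}$, having strictly smaller safe sets than the minimal broken pair, already have safe paths in your $H'$, and you re-derive the content of Lemma~\ref{lem:hierarchy_easy} inline when showing the minimal broken pair must be hard. Your version is self-contained and sidesteps the containment observation entirely; the paper's is shorter because it reuses Lemma~\ref{lem:hierarchy_easy} and gets edge-avoidance for free from the structure of the safe sets. Both arguments are valid.
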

\begin{proof}
  Suppose $\{x,y\}$ is an edge in $OPT$ that is an easy pair.  Then
  there is some $z \in S(x,y)$ such that $S(x,z) \subset S(x,y)$ and
  $S(y,z) \subset S(x,y)$.  Note that $y \not\in S(x,z)$, since if it
  was then by the hierarchy property we would have that $S(x,y)
  \subseteq S(x,z)$, so $S(x,z) = S(x,y)$ contradicting $\{x,y\}$
  being an easy pair.  Similarly, we know that $x \not\in S(y,z)$.
  Since $OPT$ is feasible there is an $x-z$ path in $S(x,z) \subset
  S(x,y)$ and a $z-y$ path in $S(y,z) \subset S(x,y)$, and by the
  previous observation neither of them use the $\{x,y\}$ edge.  So
  there is an $x-y$ safe path in $OPT$ that does not use the $\{x,y\}$
  edge.  Any hard pair $\{u,v\}$ that use the $\{x,y\}$ edge in a safe
  path can just use the path we found through $z$, since by Lemma
  \ref{lem:hierarchy_strong_containment} $S(x,y) \subseteq S(u,v)$.
  Thus if we remove $\{x,y\}$ all of the hard pairs still have a safe
  path, so by Lemma \ref{lem:hierarchy_easy} so do all of the easy
  pairs.  This contradicts $OPT$ being optimal.
\end{proof}

Order all hard pairs in nondecreasing order, breaking ties
arbitrarily.  We say $\{a,b\} \leq \{c,d\}$ if $\{a,b\}$ comes before
$\{c,d\}$ in this ordering.  We partition the edges of $OPT$ as
follows.  Let $e = \{u,v\}$ be an edge in $OPT$, and let $\{x,y\}$ be
the first hard pair in the ordering such that $u \in S(x,y)$ and $v
\in S(x,y)$, and assign $e$ to part $OPT_{\{x,y\}}$.  By Lemma
\ref{lem:hierarchy_OPT_hard} all edges in $OPT$ are hard pairs so this
is a valid partition.  Let $OPT_{\leq \{x,y\}} = \cup_{\{a,b\} \leq
  \{x,y\}} OPT_{\{a,b\}}$, and let $OPT_{< \{x,y\}}$ be defined
analogously.

\begin{lemma} \label{lem:hierarchy_OPT_connected}
  Let $\{x,y\}$ be a hard pair.  Then $OPT_{\leq \{x,y\}} |_{S(x,y)}$
  is connected.
\end{lemma}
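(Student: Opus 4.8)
The plan is to observe that the partition is set up so that $OPT_{\leq\{x,y\}}$ already contains \emph{every} edge of $OPT$ that lies inside $S(x,y)$, and then to invoke the connectivity of $OPT|_{S(x,y)}$ that was already noted after Lemma~\ref{lem:hierarchy_strong_containment}. So there is essentially nothing to compute here; the work is just unwinding the definition of the partition.

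First I would establish the key containment: if $e=\{u,v\}$ is an edge of $OPT$ with $u,v\in S(x,y)$, then $e\in OPT_{\leq\{x,y\}}$. By Lemma~\ref{lem:hierarchy_OPT_hard} every edge of $OPT$ is a hard pair, and by definition $e$ is assigned to $OPT_{\{a,b\}}$ where $\{a,b\}$ is the \emph{first} hard pair in the fixed ordering with $u\in S(a,b)$ and $v\in S(a,b)$. But $\{x,y\}$ is itself a hard pair containing both $u$ and $v$ in its safe set, so it is one of the candidates over which the minimum defining $\{a,b\}$ is taken; hence $\{a,b\}\leq\{x,y\}$, and therefore $e\in OPT_{\{a,b\}}\subseteq OPT_{\leq\{x,y\}}$. (Note this step uses nothing about \emph{why} the ordering is by safe-set size, only that $\{x,y\}$ is in the set being minimized over.)

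Next I would combine this with the remark following Lemma~\ref{lem:hierarchy_strong_containment}: since $OPT$ is a feasible solution and $\{x,y\}$ is a hard pair, $OPT|_{S(x,y)}$ is connected (and spans $S(x,y)$ by definition of the induced subgraph). The previous paragraph shows that every edge of $OPT|_{S(x,y)}$ is an edge of $OPT_{\leq\{x,y\}}$ with both endpoints in $S(x,y)$, i.e.\ $OPT_{\leq\{x,y\}}|_{S(x,y)}$ contains $OPT|_{S(x,y)}$ as a (spanning) subgraph on the vertex set $S(x,y)$. A supergraph of a connected spanning subgraph on the same vertex set is connected, so $OPT_{\leq\{x,y\}}|_{S(x,y)}$ is connected, as claimed.

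The only thing one has to be careful about is the direction of the argument: it is tempting to try to prove connectivity ``from scratch'' by tracing safe paths, but that is unnecessary — the content is already packaged in Lemma~\ref{lem:hierarchy_strong_containment} and its corollary, and the sole new point is the bookkeeping observation that an $OPT$-edge inside $S(x,y)$ cannot be assigned to a part strictly after $\{x,y\}$. So I do not expect any real obstacle; the lemma is a direct consequence of the definitions plus the earlier structural results.
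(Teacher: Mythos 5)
Your proof is correct and follows the same route as the paper: both arguments show that any edge of $OPT$ with both endpoints in $S(x,y)$ must be assigned to a part $OPT_{\{a,b\}}$ with $\{a,b\}\leq\{x,y\}$ (since $\{x,y\}$ is itself a hard pair whose safe set contains both endpoints), and then invoke the connectivity of $OPT|_{S(x,y)}$ noted after Lemma~\ref{lem:hierarchy_strong_containment}. No gaps.
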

\begin{proof}
  Let $\{u,v\}$ be an edge in $OPT|_{S(x,y)}$.  Then since $\{u,v\}$
  is a hard pair (by Lemma \ref{lem:hierarchy_OPT_hard}) and $\{x,y\}$
  is a hard pair with both $u$ and $v$ in $S(x,y)$, by the definition
  of the partition the part $OPT_{\{a,b\}}$ containing $\{u,v\}$ must
  have $\{a,b\} \leq \{x,y\}$.  Thus $\{u,v\} \in OPT_{\leq \{x,y\}}
  |_{S(x,y)}$.
\end{proof}

We now finally give our algorithm.  First we construct the above
ordering.  We then consider hard pairs in this order, and when
considering a pair $\{x,y\}$ we add the minimum number of edges
required to make our current graph restricted to $S(x,y)$ connected.
This algorithm clearly returns a feasible solution, since for any hard
pair $\{x,y\}$ at some point we consider it and make sure that its
safe set is connected and that is sufficient by Lemma
\ref{lem:hierarchy_easy}.  For every hard pair $\{x,y\}$, let
$ALG_{\{x,y\}}$ by the edges added by the algorithm when considering
$\{x,y\}$, and define $ALG_{<\{x,y\}} = \cup_{\{a,b\} < \{x,y\}}
ALG_{\{a,b\}}$ and $ALG_{\leq \{x,y\}}$ analogously.  Now we will
prove that $|ALG| \leq |OPT|$.

\begin{lemma} \label{lem:hierarchy_OPT_ALG}
  The endpoints of any edge in $OPT_{< \{x,y\}}|_{S(x,y)}$ are
  connected in $ALG_{< \{x,y\}}|_{S(x,y)}$.
\end{lemma}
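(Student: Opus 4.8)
The plan is to argue directly, without induction, by tracing a single edge of $OPT$ through the partition and through the execution of the greedy algorithm. Fix an edge $\{u,v\}$ lying in $OPT_{< \{x,y\}}|_{S(x,y)}$; thus $u,v \in S(x,y)$ and $\{u,v\}$ belongs to some part $OPT_{\{a,b\}}$ with $\{a,b\} < \{x,y\}$. By the definition of the partition, $\{a,b\}$ is the \emph{first} hard pair in the ordering whose safe set contains both $u$ and $v$. When the algorithm processed $\{a,b\}$ it added edges so that the current graph restricted to $S(a,b)$ became connected; since the algorithm only ever adds edges, this current graph is exactly $ALG_{\le \{a,b\}}$. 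Hence $u$ and $v$ are joined by a path that uses only vertices of $S(a,b)$ and only edges of $ALG_{\le \{a,b\}}$.

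It then remains to show that this path actually lives inside $ALG_{< \{x,y\}}|_{S(x,y)}$, i.e.\ that $S(a,b) \subseteq S(x,y)$ and $ALG_{\le \{a,b\}} \subseteq ALG_{< \{x,y\}}$. The second inclusion is immediate: since $\{a,b\} < \{x,y\}$, every part $ALG_{\{c,d\}}$ with $\{c,d\} \le \{a,b\}$ is among those unioned to form $ALG_{< \{x,y\}}$. For the first inclusion I would first observe that $\{u,v\}$ is itself a hard pair, by Lemma~\ref{lem:hierarchy_OPT_hard}, and that $u,v \in S(u,v)$ trivially; since $\{a,b\}$ is the earliest hard pair whose safe set contains both $u$ and $v$, this forces $\{a,b\} \le \{u,v\}$ in the ordering, so $|S(a,b)| \le |S(u,v)|$. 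On the other hand, applying Lemma~\ref{lem:hierarchy_strong_containment} to the hard pair $\{a,b\}$ and the vertices $u,v \in S(a,b)$ gives $S(u,v) \subseteq S(a,b)$. Combining the size inequality with this containment yields $S(a,b) = S(u,v)$. Finally, applying Lemma~\ref{lem:hierarchy_strong_containment} again, now to the hard pair $\{x,y\}$ and the vertices $u,v \in S(x,y)$, gives $S(u,v) \subseteq S(x,y)$, so $S(a,b) = S(u,v) \subseteq S(x,y)$. Every edge of the path found above therefore has both endpoints in $S(x,y)$ and lies in $ALG_{<\{x,y\}}$, so the path witnesses that $u$ and $v$ are connected in $ALG_{<\{x,y\}}|_{S(x,y)}$.

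I expect the only real obstacle is the containment $S(a,b) \subseteq S(x,y)$: it is not enough to know that $\{a,b\}$ comes before $\{x,y\}$ in the ordering, and the statement genuinely relies on two facts particular to this setting — that the ordering is by nondecreasing safe-set size (so the first hard pair capturing $\{u,v\}$ cannot have a strictly larger safe set than $S(u,v)$ itself) and the ``strong containment'' property of hard pairs from Lemma~\ref{lem:hierarchy_strong_containment}. Everything else is bookkeeping about which edge sets are unioned into $ALG_{<\{x,y\}}$ and the fact, already noted after the algorithm's description, that each time a hard pair is processed its safe set is made connected in the current graph.
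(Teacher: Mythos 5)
Your proof is correct and follows essentially the same route as the paper's: identify $\{a,b\}$ as the first hard pair capturing both endpoints, use Lemma~\ref{lem:hierarchy_strong_containment} together with the size-based ordering to conclude $S(a,b) = S(u,v) \subseteq S(x,y)$, and then observe that the connectivity of $ALG_{\le \{a,b\}}|_{S(a,b)}$ supplies the required path. The only cosmetic difference is that you derive $S(a,b)=S(u,v)$ by a direct size comparison where the paper argues by contradiction; the content is identical.
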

\begin{proof}
  Let $\{u,v\}$ be an edge in $OPT_{< \{x,y\}}|_{S(x,y)}$.  Then
  $\{u,v\} \in OPT_{\{a,b\}}$ for some $\{a,b\} < \{x,y\}$.  By
  definition, this means that $\{a,b\}$ is the first pair in the
  ordering with a safe set that contains both $u$ and $v$.  By Lemma
  \ref{lem:hierarchy_strong_containment} we know that $S(u,v)
  \subseteq S(a,b)$.  We also know that $\{u,v\}$ is a hard pair by
  Lemma \ref{lem:hierarchy_OPT_hard}, so if $S(u,v) \subset S(a,b)$
  then $\{u,v\}$ would be before $\{a,b\}$ in the ordering and would
  contain both $u$ and $v$, contradicting the definition of $\{a,b\}$.
  Thus $S(u,v) = S(a,b)$.  After considering $\{a,b\}$ the algorithm
  guarantees that $ALG_{\leq \{a,b\}}|_{S(a,b)}$ is connected, and
  therefore there is a safe $u-v$ path in $ALG$ after considering
  $\{a,b\}$.  We also know from Lemma
  \ref{lem:hierarchy_strong_containment} that $S(u,v) \subseteq
  S(x,y)$, so this safe path is entirely present in $ALG_{<
    \{x,y\}}|_{S(x,y)}$ and thus $u$ and $v$ are connected in $ALG_{<
    \{x,y\}}|_{S(x,y)}$.
\end{proof}

\begin{theorem} \label{lem:hierarchy_main}
$|ALG| \leq |OPT|$
\end{theorem}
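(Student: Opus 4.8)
The plan is to show $|ALG| \leq |OPT|$ by exhibiting an injection from the edges of $ALG$ into the edges of $OPT$, or more precisely by a charging argument that compares the two solutions pair-by-pair in the fixed ordering of hard pairs. For each hard pair $\{x,y\}$ in the ordering, I would compare $|ALG_{\{x,y\}}|$, the number of edges the greedy algorithm adds when it processes $\{x,y\}$, with $|OPT_{\{x,y\}}|$, the number of edges of $OPT$ assigned to $\{x,y\}$ by the partition defined just before the theorem. The goal is to prove $|ALG_{\{x,y\}}| \leq |OPT_{\{x,y\}}|$ for every hard pair $\{x,y\}$; summing over all hard pairs then gives $|ALG| \leq |OPT|$, since both $ALG$ and (by Lemma~\ref{lem:hierarchy_OPT_hard}) $OPT$ are partitioned this way.

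To carry this out, fix a hard pair $\{x,y\}$ and consider the moment the algorithm processes it. The algorithm adds the minimum number of edges needed to make $ALG_{\leq \{x,y\}}|_{S(x,y)}$ connected on the vertex set $S(x,y)$; if the current graph $ALG_{< \{x,y\}}|_{S(x,y)}$ has $c$ connected components (among the vertices of $S(x,y)$), then exactly $c-1$ edges are added, so $|ALG_{\{x,y\}}| = c - 1$. On the $OPT$ side, I would use Lemma~\ref{lem:hierarchy_OPT_connected}, which says $OPT_{\leq \{x,y\}}|_{S(x,y)}$ is connected; hence $OPT_{< \{x,y\}}|_{S(x,y)}$ together with the edges in $OPT_{\{x,y\}}$ spans $S(x,y)$ connectedly, so $OPT_{\{x,y\}}$ must contain at least $c' - 1$ edges where $c'$ is the number of connected components of $OPT_{< \{x,y\}}|_{S(x,y)}$. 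The crux is therefore to show $c \leq c'$, i.e.\ that the greedy graph built so far is \emph{at least as connected} on $S(x,y)$ as the optimal graph built so far. This is exactly what Lemma~\ref{lem:hierarchy_OPT_ALG} delivers: every edge of $OPT_{< \{x,y\}}|_{S(x,y)}$ has its endpoints connected in $ALG_{< \{x,y\}}|_{S(x,y)}$, so each connected component of $OPT_{< \{x,y\}}|_{S(x,y)}$ lies within a single connected component of $ALG_{< \{x,y\}}|_{S(x,y)}$, giving $c \leq c'$ and thus $|ALG_{\{x,y\}}| = c-1 \leq c'-1 \leq |OPT_{\{x,y\}}|$.

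I expect the main subtlety — though all the heavy lifting has been front-loaded into the preceding lemmas — to be making sure the counting of components is done on the correct vertex set (the vertices of $S(x,y)$, not all of $V$) and that ``the algorithm adds $c-1$ edges'' is exactly right: the greedy step is free to add \emph{any} $c-1$ edges that merge the components, and nothing forces those edges to be hard pairs or to respect the partition, but that does not matter for the count. One should also double-check the edge case of the very first hard pair in the ordering (where $ALG_{< \{x,y\}}$ and $OPT_{< \{x,y\}}$ are empty, so $c = c' = |S(x,y)|$ and the inequality is $|S(x,y)|-1 \leq |OPT_{\{x,y\}}|$, which holds since $OPT|_{S(x,y)}$ must be connected by the remark following Lemma~\ref{lem:hierarchy_strong_containment}). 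Finally I would note that combining $|ALG| \leq |OPT|$ with the earlier observation that the greedy algorithm always returns a feasible solution (via Lemma~\ref{lem:hierarchy_easy}) yields Theorem~\ref{lem:hierarchy_main}'s claim that the algorithm is optimal and clearly runs in polynomial time.
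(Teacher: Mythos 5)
Your proposal is correct and follows essentially the same route as the paper's proof: partition both $ALG$ and $OPT$ over hard pairs, then use Lemma~\ref{lem:hierarchy_OPT_connected} to lower-bound $|OPT_{\{x,y\}}|$ by the number of components of $OPT_{<\{x,y\}}|_{S(x,y)}$ to be merged, and Lemma~\ref{lem:hierarchy_OPT_ALG} to show the components of $ALG_{<\{x,y\}}|_{S(x,y)}$ are no finer, giving $|ALG_{\{x,y\}}| \leq |OPT_{\{x,y\}}|$ pairwise. The explicit $c-1$ versus $c'-1$ component counting is just a slightly more detailed rendering of the paper's argument.
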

\begin{proof}
  We will prove that $|ALG_{\{x,y\}}| \leq |OPT_{\{x,y\}}|$ for all
  hard pairs $\{x,y\}$.  Since these form a partition of the edges of
  $ALG$ and of $OPT$, this is sufficient to prove that $|ALG| \leq
  |OPT|$.  Consider some such hard pair $\{x,y\}$.  We know from Lemma
  \ref{lem:hierarchy_OPT_connected} that $OPT_{\leq
    \{x,y\}}|_{S(x,y)}$ is connected, so $OPT_{\{x,y\}}$ must contain
  enough edges to connect the components of $OPT_{<
    \{x,y\}}|_{S(x,y)}$.  By the definition of the algorithm,
  $ALG_{\{x,y\}}$ has the minimum number of edges necessary to connect
  the components of $ALG_{< \{x,y\}}|_{S(x,y)}$.  Now since the number
  of components in $ALG_{< \{x,y\}}|_{S(x,y)}$ is at most the number
  of components of $OPT_{< \{x,y\}}|_{S(x,y)}$ (by Lemma
  \ref{lem:hierarchy_OPT_ALG}), this implies that $|ALG_{\{x,y\}}|
  \leq |OPT_{\{x,y\}}|$.
\end{proof}

\newpage

\bibliographystyle{abbrv}
\bibliography{cc}

\end{document}